\newcommand{\review}[1]{#1}
\newcommand{\reviewinequation}[1]{#1}
\providecommand{\keywords}[1]{\textbf{\textit{Index terms---}} #1}
\tikzset{
commutative diagrams/.cd,
arrow style=tikz,
diagrams={>=stealth}} 
  \newcommand{\miniscule}{\@setfontsize\miniscule{4}{5}}%
  \newcommand{\miniscule}{\@setfontsize\miniscule{5}{6}}%
  \newcommand{\nano}{\@setfontsize\miniscule{3.5}{4.5}}%
  \newcommand{\nano}{\@setfontsize\miniscule{4.5}{5.5}}%
  \newcommand{\nano}{\@setfontsize\miniscule{4.5}{5.5}}%
\newcommand{\fxpt}{{\raisebox{1.8pt}{\nano$\blacklozenge$\hspace{1.7pt}}}} 
\newcommand{\balita}{\raisebox{1.8pt}{\text{ \nano$\bullet$\hspace{1.7pt} }}}
\newtheorem{theorem}{Theorem}[section]
\newtheorem{cor}[theorem]{Corollary}
\newtheorem{lemma}[theorem]{Lemma}
\newtheorem{proposition}[theorem]{Proposition}
\theoremstyle{definition}
\newtheorem{definition}[theorem]{Definition}
\newtheorem{example}[theorem]{Example}
\theoremstyle{remark}
\newtheorem{remark}[theorem]{Remark}
\numberwithin{equation}{section}
 \definecolor{VerdeFH}{HTML}{009374}
\definecolor{BR}{HTML}{002B36}
\newcommand{\eeqref}[1]{eq. \eqref{#1}}
    \definecolor{azulf}{HTML}{0092D2}
\definecolor{azulc}{HTML}{00AEEF}
    \numberwithin{equation}{section}
    \newcommand{\MN}{M_N(\C)}
    \newcommand{\sun}{\mathfrak{su}(N)}
    \newcommand{\totimes}{\otimes_\tau}
           \newcommand{\Sym}{\mathrm{Sym}}
    \renewcommand{\and}{\hphantom{a}\mbox{and}\hphantom{a}}
    \DeclareMathOperator{\Tr}{Tr}
        \DeclareMathOperator{\Hess}{Hess}
    \DeclareMathOperator{\STr}{STr}
    \newcommand{\STrN}{\STr_{\hspace{-1pt}N}}
    \newcommand{\tr}{\mathrm{tr}}
    \newcommand{\mtr}[1]{\mathrm{#1}}
    \newcommand{\Dif}{\mathcal{D}}
    \newcommand{\A}{\mathcal{A}}
    \newcommand{\dif}[1]{\mathrm{d}#1}
    \newcommand{\re}{\mathbb{R}}
    \newcommand{\dervpar}[2]{\frac{\partial #1}{\partial #2}}
    \newcommand{\diag}{\mtr{diag}}
    \newcommand{\dervfunc}[2]{\frac{\delta #1}{\delta #2}}
    \renewcommand{\H}{\mathcal{H}}
    \newcommand{\C}{\mathbb{C}}
    \newcommand{\M}{\mathcal{M}}
    \newcommand{\ii}{\mathrm{i}}
    \newcommand{\ee}{\mathrm{e}}
    \newcommand{\inv}{^{-1}}
    \newcommand{\mtc}[1]{\mathcal{#1}}
    \newcommand{\Z}{\mathbb{Z}}
    \newcommand{\N}{\mathbb{N}}
    \newcommand{\hp}[1]{^{(#1)}}
    \newcommand{\im}{\mtr{im}}
    \newcommand{\where}{\mbox{where}\,\,}
    \newcommand{\e}{\varepsilon}
\newcommand{\Mn}{M_n(\C)}
\newenvironment{salign} 
  {\csname align*\endcsname}
  {\csname endalign*\endcsname} 
  \newcommand{\numerada}{\refstepcounter{equation}\tag{\theequation}}
\newcommand*\kay{%
  \text{%
  \fontencoding{LS1}%
  \fontfamily{stixscr}%
  \fontseries{\textmathversion}%
  \fontshape{n}%
  \selectfont\symbol{"6B}}}
  \newcommand*\DDay{%
  \text{%
  \fontencoding{LS1}%
  \fontfamily{stixscr}%
  \fontseries{\textmathversion}%
  \fontshape{n}%
  \selectfont\symbol{'104}}}
  \DeclareMathOperator{\Day}{\DDay} 
  \newcommand*\jay{%
  \text{%
  \fontencoding{LS1}%
  \fontfamily{stixscr}%
  \fontseries{\textmathversion}%
  \fontshape{n}%
  \selectfont\symbol{"7C}}}
    \newcommand*\iay{%
  \text{%
  \fontencoding{LS1}%
  \fontfamily{stixscr}%
  \fontseries{\textmathversion}%
  \fontshape{n}%
  \selectfont\symbol{"7B}}}
  \newcommand*\textmathversion{\csname textmv@\math@version\endcsname}
  \newcommand*\textmv@normal{m}
  \newcommand*\textmv@bold{b}
   \let\suma=\sum
   \let\langleb=\langle
   \let\rangleb=\rangle
\newcommand{\TrL}{\Tr_{\hspace{-1pt}\Lambda}}
\newcommand{\TrN}{\Tr_{\hspace{-1pt}N}}
\newcommand{\Cfree}[1]{\C_{\langleb #1\rangleb}}
\newcommand{\Cn}{\Cfree{n}}
\newcommand{\CnN}{\C_{\langleb n\rangleb, N}}
\newcommand{\CnNtens}[1]{\CnN^{\,\otimes \hspace{1pt} #1}}
\newcommand{\Cntens}[1]{\Cn^{\,\otimes \hspace{1pt} #1}}
\newcommand{\CntensL}[1]{\C^{\,\otimes \hspace{1pt} #1}_{\langleb n \rangleb, \Lambda }}
\newcommand{\TrNX}[1]{\TrN\left(\frac{X^{#1}}{#1}\right)}
\newcommand{\includegraphicsd}[2]{\raisebox{-.415\height}{\includegraphics[width=#1\textwidth]{#2}}}
\newcommand{\includegraphicsw}[2]{\raisebox{-.415\height}{\includegraphics[width=#1\textheight]{#2}}}
\newcommand{\includegraphicswextra}[3]{\raisebox{-.#3\height}{\includegraphics[width=#1\textheight]{#2}}}
\newcommand{\ac}{\scalebox{0.94}{\ensuremath{\mathsf{a}}}} 
\newcommand{\bc}{\scalebox{0.94}{\ensuremath{\mathsf{b}}}} 
\newcommand{\cc}{\scalebox{0.94}{\ensuremath{\mathsf{c}}}} 
\newcommand{\dc}{\scalebox{0.94}{\ensuremath{\mathsf{d}}}} 
\newcommand{\acb}{\bar{\ac}} 
\newcommand{\bcb}{\bar{\bc}} 
\newcommand{\ccb}{\bar{\cc}} 
\newcommand{\dcb}{\bar{\dc}} 
\newcommand{\ea}{e_{\scalebox{0.7}{\ensuremath{\mathsf{a}}}}} 
\newcommand{\eb}{e_{\scalebox{0.7}{\ensuremath{\mathsf{b}}}}} 
\newcommand{\Za}{Z_{\scalebox{0.7}{\ensuremath{\mathsf{a}}}}} 
\newcommand{\Zb}{Z_{\scalebox{0.7}{\ensuremath{\mathsf{b}}}}} 
\newcommand{\Fa}{F_{\scalebox{0.7}{\ensuremath{\mathsf{aa}}}}} 
\newcommand{\Fb}{F_{\scalebox{0.7}{\ensuremath{\mathsf{bb}}}}} 
\newcommand{\etaa}{\eta_{\scalebox{0.7}{\ensuremath{\mathsf{a}}}}} 
\newcommand{\etab}{\eta_{\scalebox{0.7}{\ensuremath{\mathsf{b}}}}} 
\newcommand{\itemb}{\item[$\balita$]}
\DeclareMathSymbol{\zplus}{\mathbin}{operators}{"2B}
\DeclareMathSymbol{\zminus}{\mathbin}{symbols}{"00}
\begin{document}

\title[On random NCG multimatrix models: Functional RG \& Free Algebra]{On multimatrix models motivated
by \\ random Noncommutative Geometry I: \\ the Functional Renormalization Group \\as a flow in the free algebra} 


  \author
  {Carlos I. {P\'erez-S\'anchez}}
\address{Faculty of Physics, University of Warsaw}
\curraddr{ul. Pasteura 5
 02-093 Warsaw, Poland, European Union   
} 
 \email{cperez@fuw.edu.pl}
  \begin{abstract}   
 Random noncommutative geometry can be seen as a Euclidean path-integral quantization approach 
 to the theory defined by the Spectral Action
 in noncommutative geometry (NCG). With the aim of investigating phase transitions in random
NCG of arbitrary dimension, we study the nonperturbative 
 Functional Renormalization Group for multimatrix models whose action consists of noncommutative polynomials
 in Hermitian and anti-Hermitian matrices. Such structure 
 is dictated by the Spectral Action for the Dirac operator in Barrett's 
spectral triple formulation of fuzzy spaces.
The present mathematically rigorous treatment 
puts forward ``coordinate-free'' language that
might be useful also elsewhere, all the more so because 
our approach holds for general multimatrix models.  The toolkit is a noncommutative calculus
on the free algebra that allows to describe the generator of the renormalization group flow---a noncommutative Laplacian  introduced here---in terms of Voiculescu's cyclic gradient and Rota-Sagan-Stein noncommutative derivative. 
We explore the algebraic structure of the Functional Renormalization Group Equation and, as an application of this formalism, we find the $\beta$-functions, identify the fixed points in the 
large-$N$ limit and obtain the critical exponents of $2$-dimensional geometries in two different signatures.

\end{abstract}
\keywords{random noncommutative geometry, functional renormalization group, Wetterich equation, matrix models, fuzzy geometry, 
quantum spacetime, large-$N$ limit, free probability}

 \subjclass[2010]{Primary: 58B34, 81-XX; Secondary:  15B52, 46L54 }

  \maketitle

\fontsize{11.0}{14.0}\selectfont  
\setcounter{tocdepth}{1}
\tableofcontents
 \fontsize{11.49}{14.9}\selectfont  

\section{Introduction}\label{sec:intro}%
Random Noncommutative Geometry (NCG), initiated by 
 Barrett and Glaser \cite{BarrettGlaser}, is a
 path-integral approach to the quantization of noncommutative geometries.   
 This problem is mathematically interesting \cite[\S 18.4]{ConnesMarcolli}  and has already been  addressed by diverse methods in \cite{MvS}, \cite{EntropySpectral} and \cite{KhalkhaliQuantization}.
 Also in physics, a satisfactory answer would shed light 
 on the quantum structure of spacetime
 from a different angle. Namely, what seems to individuate a formulation of quantum of gravity 
 in terms of NCG-structures is that these 
 provide
 a natural language to treat both pure gravity and gravity coupled with matter  at a geometrically indistinguishable footing. 
This holds for (the classical theory of) established matter sectors 
like the Standard Model \cite{CCM,BarrettSM} and 
some theories beyond it \cite{surveySpectral}.   
\par 

Although the last point evokes rather the mathematical 
elegance of the NCG-applications, also from a pragmatic viewpoint 
it is important to stress that 
the search for a quantum theory of gravity that is capable of incorporating matter is of
physical relevance: ``matter matters'' reads 
 for instance in the asymptotically safe road to quantum gravity \cite{Dona:2013qba} (see also \cite{BookReuterSaueressig}).  
Indeed, a quick argument \cite{EichhornTalk} based on the Renormalization Group (RG)  
discloses the mutual importance of each sector to the other, concretely \begin{itemize}
\itemb gravity loops like $ \raisebox{-6pt}{\includegraphics[width=24pt]{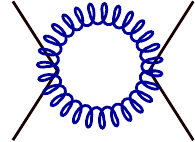}} $ appear and influence matter and 
\itemb in a similar way, matter modifies the gravity sector 
$ \raisebox{-5pt}{\includegraphics[width=24pt]{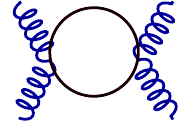}} $
\end{itemize}
in the RG-flow. This suggests that both ought to be simultaneously treated and motivates 
us to develop, as a first step, the \textit{Functional Renormalization Group}
in random NCG, where potentially both sectors might harmonically coexist.  \par 
\thispagestyle{empty}
The Functional Renormalization Group Equation (FRGE; see the comprehensive up-to-date review \cite{Tausend}) is a modern framework describing the Wilsonian RG-flow \cite{Wilson:1974mb}
that governs the change of a quantum theory with
scale. From the technical viewpoint, in order to determine the effective action, 
the FRGE---derived by Wetterich and Morris \cite{Wetterich,Morris}---offers an alternative 
to path-integration by replacing that task with
a differential equation.  \par 
 In this paper, the model of space(time) we focus on
is an abstraction of fuzzy spaces \cite{DolanOConnor,DolanHuetOConnor,SteinackerFuzzy},
whose elements were later assembled into a spectral triple (the spin geometry object in NCG) called \textit{fuzzy geometry} \cite{BarrettMatrix, BarrettGaunt}. 
For the future, in a broader NCG context, it would be desirable to relate the FRGE 
to the newly investigated truncations 
in the spectral NCG formalism \cite{GlaserStern,GlaserSternZwei,ConnesWvS} (see \cite{DAndrea:2013rix} for a preceding related idea), but for initial investigations fuzzy geometries are interesting enough and also in line with them, 
e.g., for the case of the sphere \cite[Sec. 3.3]{vanSuijlekom:2020bvh}. 
\par 

One particular advantage of a fuzzy geometry being a spectral triple 
is the contact with Connes' NCG formalism, in particular, the ability to encode the geometry in a (Dirac) operator $D$ that 
serves as path integration variable in the quantum theory. Since fuzzy geometries are finite-dimensional, one can provide a mathematically precise definition of the partition function $\mathcal Z= \int_{\text{\scriptsize}} \ee^{-\Tr f(D)} \dif {D}$ that corresponds to 
the Spectral Action  $\Tr f(D)$,
as far as $f$ is a polynomial, in contrast to the bump function $f$ 
used originally by 
Connes-Chamseddine \cite{Chamseddine:1996zu}.    
In fact, this way to quantize fuzzy geometries
 was shown \cite{BarrettMatrix,BarrettGlaser} to 
lead to a certain class of multimatrix models further 
characterized in \cite{SAfuzzy}. \par 
On the physics side, finite-dimensionality should not 
be seen as a shortage, as this dimension is related 
to energy or spatial resolution; in fact, rather it is in line with the 
existence of a minimal or Planck's length.
This is intuitively clear for the fuzzy sphere \cite{Grosse:1994ed} on which---being spanned
by finitely many spherical harmonics---it is impossible
to separate (i.e. to measure) points lying arbitrarily near.  \par 

This discrete-dual picture (Fig. \ref{fig:fuzzytr}) 
can be interpreted as a pre-geometric phase, analogous to 
having simplices as building blocks of spacetime in 
discrete approaches to quantum gravity as Group Field Theory \cite{Baratin:2013rja},
Matrix Models \cite{EynardCounting, dFGZ} or Tensor Models \cite{InvitationGurau}. For those theories, but also for \review{other approaches (e.g., Causal Dynamical Triangulations \cite{CDT})},
it is important \cite{Ambjorn:2012ij,Delepouve:2015nia,StatusTM,ABAB,AndreasJohannes} to explore phase transition to a manifold-like phase; in analogous way, the study of a condensation 
of fuzzy geometries to a continuum is physically relevant \cite{Glaser:2016epw} (also
addressed analytically in dimension-$1$ by \cite{KhalkhaliPhase}).
With this picture in mind, we estimate here candidates 
for such phase transition.  
 \begin{figure}[h!]
   \includegraphics[width=0.5\textwidth]{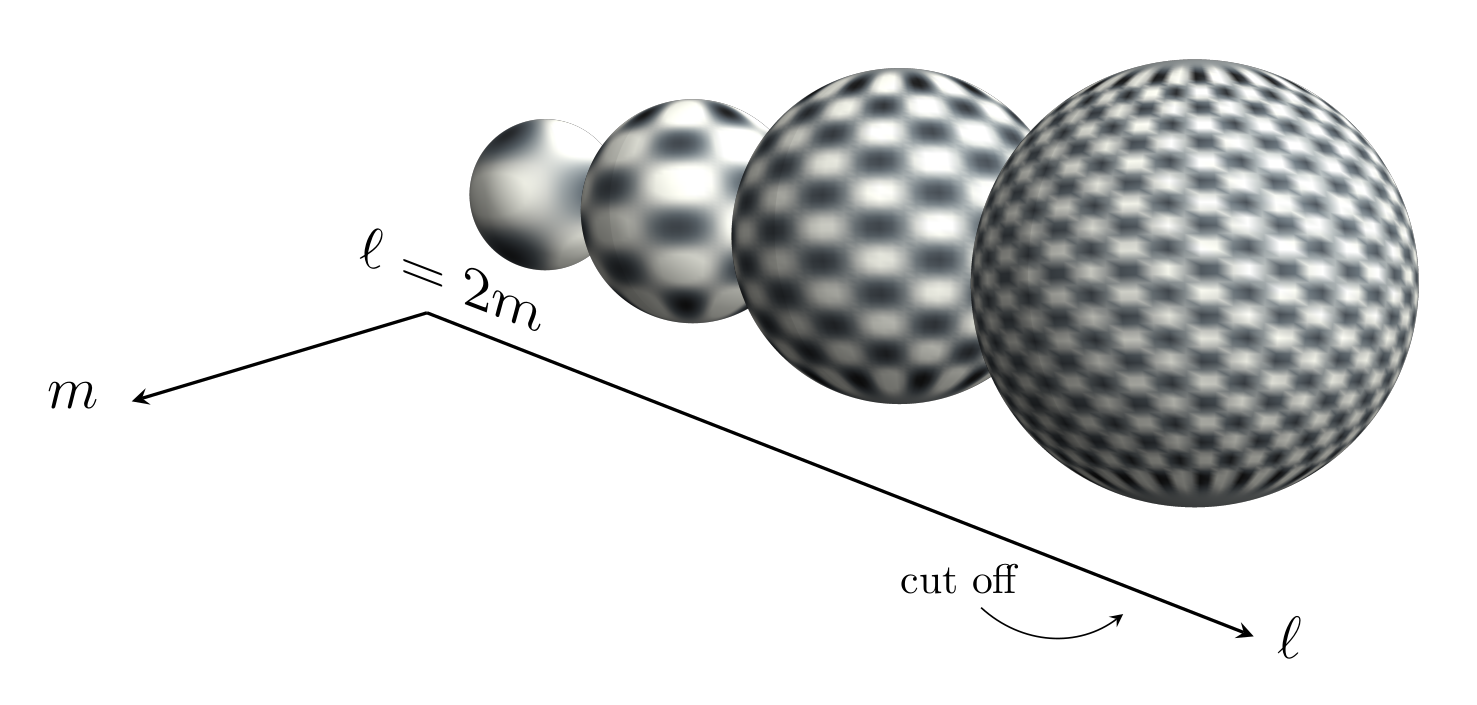}
\caption{A caricature of selected algebra elements of the fuzzy sphere. 
The real part of spherical harmonics $Y^{m=\ell/2}_{\ell}$ for $\ell=4,10,20,40$\label{fig:fuzzytr}}
 \end{figure}
 \par
The largest part of this paper develops 
the mathematical formalism that allows such exploration.
On top of well-known quantum field theory (QFT) techniques, the non-standard results 
this paper bases on can be divided into three classes: 
\begin{itemize}\setlength\itemsep{.4em}
\itemb  \textsc{The models} are originated in {Random NCG} \cite{BarrettGlaser}.   
Barrett's characterization of Dirac operators 
makes contact with certain kind of multimatrix models \cite{BarrettMatrix}. 
Their Spectral Action was systematically 
computed in \cite{SAfuzzy}, organized by chord diagrams,
which reappear here.

\itemb \textsc{The tool} is the Functional Renormalization Group. The main idea 
of the RG-flow parameter being the (logarithm of) the matrix size
appeared in \cite{Brezin:1992yc} and consists in reducing the $N+1$ square matrix $\varphi$  to effectively
obtain a $N\times N$ matrix field by integrating out 
the entries 
$\varphi_{a, N+1}, \varphi_{N+1,a}$ ($a=1,\ldots N+1$). Eichhorn and Koslowski provided 
the nonperturbative, modern formulation of the Brezin--Zinn-Justin idea.  
They put it forward for Hermitian matrix models in 
\cite{EichhornKoslowskiFRG} (preceded by a similar approach to scalar field theory
on Moyal space \cite{SfondriniKoslowski} and followed by an extension to tensor models \cite{StatusTM}). 
They did not present a proof and in fact it will be convenient to prove for multimatrix models 
the FRGE, as this equation actually dictates us the algebraic 
structure (needed for the so-called ``$FP\inv$-expansion'' \cite[Sec. 2.2.2]{FPinverse}) and exonerates us from making any choice.

\itemb \review{Although the Eichhorn-Koslowski approach orients us to find suitable truncations
and their scalings to take the large-$N$ limit were auxiliary, 
the mathematical structure we deal with here 
is constructed from scratch and does not rely on theirs (which turns out to be entirely replaced)}{}. 
\textsc{The language} that facilitates this is {abstract noncommutative algebra.} In order 
to state the RG-flow in ``coordinate-free'' fashion, we use 
Voiculescu's cyclic derivative \cite{Voi_gradient} 
and the noncommutative derivative defined by Rota-Sagan-Stein \cite{Rota}. 
 \end{itemize}
We do not assume familiarity with any of these 
references and offer a self-contained approach. 
 
\subsection{Organization, strategy and results}

In Section \ref{sec:toolkit} we develop the algebraic 
language needed for the rest of the paper.
We introduce a noncommutative (NC) Hessian and a NC-Laplacian on the free algebra, 
given in terms of noncommutative differential
operators defined by \cite{Rota} and \cite{Voi_gradient,VoiculescuFreeQ}.
A graphical method to compute this 
second order operator is provided.  Section \ref{sec:toolkit} prepares the algebraic 
structure that will turn out to emerge in the proof of 
Wetterich-Morris equation for multimatrix models. 
\par 

Section \ref{sec:MMM_NCG} briefly reviews fuzzy geometries 
and how their Spectral Action is computed in terms of elements
of the free algebra---in mathematics called \textit{words} or 
\textit{noncommutative polynomials} and in QFT-terminology \textit{operators}---that define 
a certain class of multimatrix models. For 2-dimensional fuzzy geometries,
we provide a characterization of allowed terms in the 
resulting action functional. \par 
In Section \ref{sec:FRGEderivation} the FRGE is proven to be governed by the NC-Hessian; 
in Section \ref{sec:FPandtruncations} we introduce 
truncations and projections in order to compute the $\beta$-functions.
Also there, the ``$FP\inv$ expansion'' is developed in the large-$N$ limit,
and the \textit{tadpole approximation}, corresponding to order one in that expansion,
is restated as a heat equation\footnote{That a Laplacian plays a role in the Functional 
Renormalization Group and that this 
has the form of is not a surprise \cite[Sec. 3.3]{Salmhofer}.} 
whose Laplacian is noncommutative (the one of Sec. \ref{sec:toolkit}). \par 

Once the formalism is ready, we do not directly proceed 
with fuzzy geometries, but in Section \ref{sec:CrossCheck} we briefly
reconsider the treatment of the FRGE for Hermitian matrix models.
A couple of points justify this interlude: 
\begin{itemize}\setlength\itemsep{.4em}
 \itemb It serves as a bridge from the index-computations
 in matrix models to index-free ones proposed in the present paper.
  
 \itemb By using a well-known result
 to be reproduced by the FRGE, we calibrate the infrared regulator (IR-regulator) that we
 shall use for the fuzzy geometry matrix models. With a  
 quadratic, instead of the already studied linear IR-regulator, 
 the fixed point is closer to the exact value 
 $-1/12$ for gravity coupled with conformal matter.
 
 \itemb Finally, since the number of 
 flowing operators for the Hermitian matrix model is relatively small, it is 
 helpful for the sake of clearer exposition
 to present a case whose techniques
 fit in a couple of pages to prepare
 the more complex fuzzy two-matrix models.
\end{itemize}

The actual application  
of the formalism appears in Section \ref{sec:2MM}. 
We treat there a class of two-matrix models 
 that lies in an orthogonal direction 
to the well-investigated two-matrix model that describes the Ising model 
\cite{KazakovIsing,StaudacherCombi,EynardLargeN2MM}, often just referred to as ``\textit{the} 2-matrix model'', due to its importance.
To wit, whilst in the Ising two-matrix model the (trace of) $AB$
appears as the only interaction mixing the two random 
matrices, $A$ and $B$, NCG-models forbid this very operator. 
Instead, these matrices 
interact via several elements of the free
algebra and its tensor powers, 
i.e. via (traces of) words 
\[ABAB, A^2B^2,A^3BAB,\ldots \quad \and  \quad
A\otimes A, A^2\otimes B^2,  A\otimes A^2BAB,\ldots \, \]
whose exact form has been investigated in \cite{SAfuzzy},
also for higher dimensions. The RG-flow we analyze does not 
take place inside the space of Dirac operators 
---in which coupling constants of the same polynomial degree are correlated---but we consider
the general situation in which 
the symmetry breaking by the IR-regulator kicks the 
RG-flow out to (couplings indexed by a larger subspace of tensor products of) the free algebra.
\par 
\begin{figure}[t!]
\includegraphics[width=.890\textwidth]{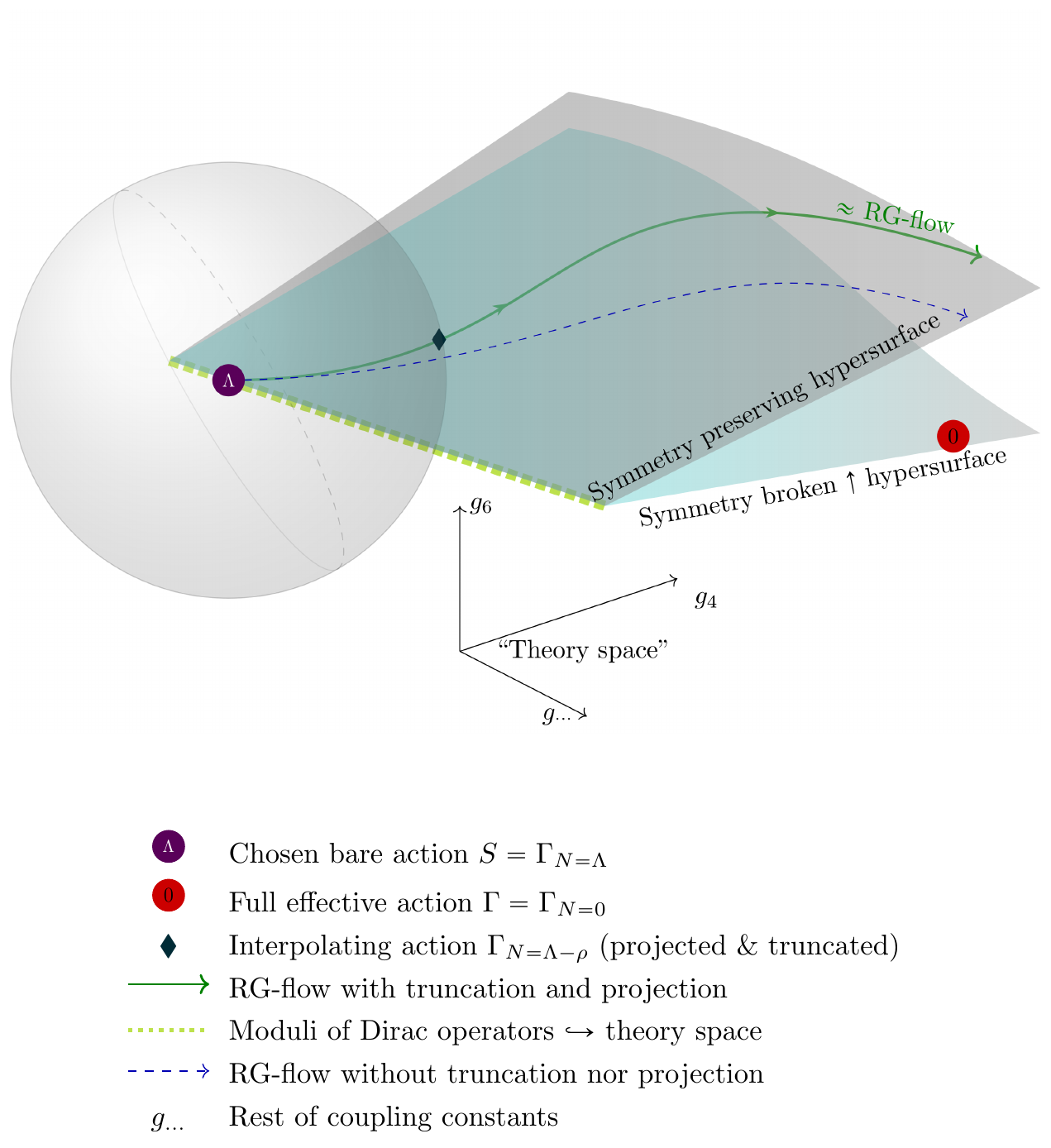}
\caption{Picture of the theory space 
and two hypersurfaces. The lower one, which considers the modified 
Ward-Takahashi identity (mWTI), is where the exact flow takes place.
The upper one is an approximation with finitely many
parameters. If $\rho$ is small, the approximation 
ignoring the mWTI, together with the 
truncation and projection for the approximated 
RG-flow ($\approx $\textit{RG-flow}) 
is assumed to not to be far apart from 
the actual interpolating action
\label{fig:piekna}}
\end{figure} For an arbitrary-dimensional
fuzzy geometry 
the \textit{bare action}---the starting point of the RG-scale $t=\log \Lambda$ (or \textit{energy scale} $\Lambda$)---is chosen in the space of Dirac operators inside the full \textit{theory space}, 
the space of running couplings. The exact RG-path
\textit{ends} at the precise effective \review{action at RG-scale\footnote{Actually,
it since $\Lambda$ will be identified with the matrix size
the lowest value for the RG-time $t$ is $0=\log 1=\log \Lambda_0$. But at this point
``we do not know this yet''.} $t=-\infty$}, 
which is too hard to determine at present. 
Making the RG-flow computable introduces two types of 
errors: on the one hand, deviations caused by projections that consider only operators with unbroken symmetries and, on the other hand, errors due to truncations introduced in order to keep the number of flowing parameters finite. 
This is depicted in Figure \ref{fig:piekna} in a pessimistic scenario, later improved 
in view of the results of Section \ref{sec:Compatibility}.
\par 
The large number of the NC-polynomial interactions, on top of the ordinary polynomials in each matrix, makes the projected and truncated RG-flow still computationally demanding\footnote{For instance, at sixth order the NCG-matrix model 
includes up to 48 operators in a double-trace even-degree
truncation. In contrast, in the same truncation, the Ising two-matrix model 
would include at most 19 operators, but the RG-flow
does not combine letters in the latter case.} and at this stage a further simplification is helpful. Namely, we look for critical exponents corresponding to 
solutions to the fixed point equations that obey
the duality $A \leftrightarrow B$, whenever the signature allows it. 
We find those solutions inside a hypercube in theory space (with coordinates $g_i$ obeying $|g_i|\leq 1$), which,  
even if it is not the full exploration, it exhausts the 
scope of the $FP\inv$-expansion. Further improvements
are discussed in Section \ref{sec:Conclusion},
together with the conclusion. 
To ease the reading, some oversized expressions involved in proofs
are located outside the main text\footnote{In the Annales Henri Poincaré version of this preprint, Appendices B, C, D and E correspond to Supplementary Material I, II, III, IV  respectively.} (see Appendices B, C,  D and E). Also Appendix \ref{sec:Glossary}
serves as a glossary and guide on the notation. 
\section{Noncommutative calculus}\label{sec:toolkit}
We address the noncommutative calculus in several (say $n$)
variables. The object of interest is the free algebra 
spanned by an \textit{alphabet} of $n$ letters $x_1,\ldots, x_n$. The elements of the free algebra
are the linear span of words in those $n$ letters, the product being concatenation. 
Although the physical theories we address are 
well described by the real version $\re_{\langleb n \rangleb}$ of it, we consider 
the complex free algebra $\Cn$. There exists in  $\Cn$ an \textit{empty word}, denoted by $1$, 
that behaves as multiplicative neutral. Other than $1$, the
letters of the alphabet do not commute. \par

Rather than in the generators $x_i$ in the abstract free algebra, 
we are interested in their realization as matrices\footnote{This
section is the mathematical background 
of the FRGE. So far, $N$ is still fixed.}, $x_i=  X_i\in \MN$
for each $i=1,\ldots,n$. 
In contrast with the convention of taking
self-adjoint generators, we have reasons to allow anti-Hermitian generators
and set instead
 \begin{subequations}\label{econditions}
 \begin{align} \label{esigns}
  X^*_{i}=\pm X_i  \qquad \mbox{ if }\quad e_i=\pm 1\, \quad(i=1,\ldots,n).
 \end{align}
 In this section the signs $e_i$ are input; 
 later these will be gained from the NCG-structure, which additionally 
imposes 
 \begin{align}\label{tracelessnesscond}
  \TrN  (X_{i}):= \sum_{a=1}^N (X_i)_{aa}=0  \qquad \mbox{ if }\quad  e_i=-1\,.
 \end{align}
 \end{subequations}
When the $n$ generators are 
$N\times N$ matrices, it will be convenient 
to denote the free algebra by $\CnN$.
Having fixed signs $e_i$ ($i=1,\ldots, n$), we let
\begin{align}\label{ModuliN}
\mathcal M_N = \big\{ (X_1,\ldots, X_n) \mid  \mbox{conditions } \eqref{econditions} \mbox{ hold for each } X_i\in \MN \big\}
\,,
\end{align}
with some abuse of notation concerning the omitted parameters. The tracelessness condition \eqref{tracelessnesscond} is of no relevance in this section, but 
important later.\par 
The empty word, which corresponds to the identity matrix $1_N\in\MN$, generates the \textit{constants}. The elements of the free algebra that are not generated 
 by the empty word are referred to as \textit{fields}:
 \begin{align} \label{splitting}
 \CnN=
 \overbrace{\C \cdot 1}^{\text{constants}}{\hspace{-9pt}}_N\, \oplus \,\,
 \langleb \underbrace{ X_{\ell_1} X_{\ell_2}  \cdots X_{\ell_k} \mid \ell_j=1,\ldots, n \,\,\and \,\,k\neq 0  }_{\text{fields}}\rangleb \,.
 \end{align}
 A similar
 terminology is employed for the analogous 
 splitting of the tensor product: \vspace{-12pt}
  \begin{align}\label{splittingtensor}
 \CnN\otimes  \CnN=
 \overbrace{\C \cdot 1_N \otimes 1_N}^{\text{constants}}_{\phantom{not fields}}\, \oplus \,\, \text{fields} \,,
 \end{align}
whose fields in this case are given by
\begin{align} \raisetag{1.2cm}
\label{fieldpart} \langleb   X_{\ell_1} X_{\ell_2}  \cdots X_{\ell_k} \otimes X_{\ell_1'} X_{\ell_2'}  \cdots X_{\ell_r'} \mid \ell_j',\ell_j=1,\ldots, n \and r+k\neq 0   \rangleb \,.
\end{align}
The free algebra is equipped with the trace 
of $\MN$:  $\TrN (Q)=\sum_{a=1}^N Q_{aa}$, $Q\in \CnN $. Instead of making this trace a state, normalizing it as usual also in probability, $\tr (1_N)=1$, we still stick to a trace satisfying $\TrN (1_N)=N$ 
in order to make power-counting arguments comparable with other references. 

\subsection{Differential operators on the free algebra}

We now elaborate on the next operators, due to 
 Rota-Sagan-Stein \cite{Rota} (in one variable to Turnbull \cite{turnbull}) 
and to Voiculescu \cite{Voi_gradient}. 
 The \textit{noncommutative derivative}---called
 also the \textit{free difference quotient} \cite{VoiculescuFreeQ,GuionnetFreeAn}---with respect to the $j$-th variable $x_j$, denoted by  $\partial^{x_j}$,  is defined 
on generators by
\begin{align} 
\nonumber
\partial^{x_j}  :\C_{\langleb n \rangleb } & \to \C_{\langleb n \rangleb } \otimes  \C_{\langleb n \rangleb } \\ 
  x_{\ell_1} \cdots x_{\ell_k}  & \mapsto \sum_{i=1} ^k  \delta_{\ell_i}^j \cdot 
  x_{\ell_1}\cdots x_{\ell_{i-1}}\otimes x_{\ell_{i
  +1}} \cdots x_{\ell_k} \label{NCderivative} \,. \raisetag{34pt}
\end{align}
The tensor product keeps track of the spot (in monomial)
the derivative acted on.  
Moreover, the \textit{cyclic derivative} $\Day^{x_j}$ with respect 
to the $j$-th variable is defined by 
\begin{align}
 \Day^{x_j}  = \tilde m \circ \partial^{x_j}  \quad\mbox{where } \tilde m :\C_{\langleb n \rangleb }\otimes 
\C_{\langleb n \rangleb }\to \C_{\langleb n \rangleb }\,, 
\quad \tilde m (A\otimes B)=BA.
\end{align} 

\begin{example}
In the free algebra generated by the Latin alphabet $\mathtt{A},\ldots,\mathtt{Z} $, one has 
\vspace{-.3cm}
\[\partial^{\mathtt{E}} (\mathtt{FREENESS})=
\mathtt{FR}\otimes \mathtt{ENESS}
+\mathtt{FRE}\otimes \mathtt{NESS}
+\mathtt{FREEN}\otimes \mathtt{SS}\,,
\] but notice that (if $1$ is the empty word)
$\partial^{\mathtt{S}} (\mathtt{FREENESS})=
 \mathtt{FREENE}\otimes \mathtt{S} 
 +\mathtt{FREENES}\otimes  1 
 $. For the cyclic derivative it holds: 
\begin{align*}
\Day^{\mathtt{E}}
(\mathtt{FREENESS})
&=\tilde m \big(\mathtt{FR}\otimes \mathtt{ENESS}
+\mathtt{FRE}\otimes \mathtt{NESS}
+\mathtt{FREEN}\otimes \mathtt{SS}\big)
\\
&=\mathtt{ENESSFR + NESSFRE+ SSFREEN}\,.
\end{align*}

\end{example}

Using the same rules for the abstract derivatives on $\Cn$ for
$\CnN$, one can make the following 
\allowdisplaybreaks[3]
\begin{proposition}\label{thm:indexfrei}
Let $Y=X_i$ be any of the generators of $\CnN$.
For any $Q\in \CnN$, the derivatives $\partial^{Y}$ and $\Day^Y$ enjoy the following properties:
\begin{enumerate}
 \itemb the abstract derivative is realized by the derivative 
with respect to a matrix:\vspace{-.4cm}
\begin{align}
\partial ^{Y}_{ab} &= \dervfunc{}{Y_{ba}}\,, 
\end{align} 
that is, letting $(U\otimes V)_{ab;cd}=U_{ab}V_{cd}$ $(U,V\in \CnN)$, one has
\[ 
\qquad[(\partial^{Y } Q)(X)]_{ab; cd}
=
 \dervfunc{}{Y_{bc}}[Q(X)]_{ad}  \qquad  \mbox{for } X=(X_1,\ldots,X_n) \in \mathcal M_N\,.
\]

\itemb The cyclic derivative 
equals the noncommutative derivative of the trace: 
\begin{align} 
\partial ^{Y} \Tr  Q &= \Day^{Y}  Q
\,.
\end{align}
\end{enumerate}

\end{proposition}
\begin{proof}
Let $Q \in \Cn$. Since the trace is linear, one can verify
the property on a monomial $Q(X)= X_{\ell_1} \cdots X_{\ell_k} $ and then obtain 
  \begin{salign}
\dervfunc{}{(X_i)_{bc}}  Q(X)_{ad}  & = 
\dervfunc{}{(X_i)_{bc}}  ( X_{\ell_1} \cdots X_{\ell_k})_{ad} \\
&=\sum_{j=1}^k
(X_{\ell_1}\cdots X_{\ell_{j-1}})_{ar}\dervfunc{(X_{\ell_j})_{rs}}{(X_i)_{bc}}
(X_{\ell_{j-1}}\cdots X_{\ell_{k}})_{sd}\\
&=\sum_{i=\ell_j}
(X_{\ell_1}\cdots X_{\ell_{j-1}})_{ab}
(X_{\ell_{j-1}}\cdots X_{\ell_{k}})_{cd}\\
&=\sum_{i=\ell_j}
(X_{\ell_1}\cdots X_{\ell_{j-1}} \otimes  X_{\ell_{j-1}}\cdots X_{\ell_{k}})_{ab;cd} \\
&=\big[(\partial^{X_i} Q)(X)\big]_{ab;cd} \,.
  \end{salign}
To obtain the second statement, notice that 
   $\partial ^{X_i}_{ab} \Tr Q $ is obtained from last
   equation by setting $a=d$ and summing, 
   \begin{salign}
    \partial ^{X_i}_{cb} \Tr [Q(X)]  & = 
   \sum_{a} \big\{(\partial^{X_i} Q)(X)\big\}_{ab;ca}
   \\& =\sum_{a}\sum_{i=\ell_j}
(X_{\ell_1}\cdots X_{\ell_{j-1}} \otimes  X_{\ell_{j-1}}\cdots X_{\ell_{k}})_{ab;ca} \\& = 
\big \{[(\tilde m \circ \partial^{X_i})Q](X)\big\}_{cb} = \big[(\Day^{X_i} Q)(X)\big]_{cb} \,,\end{salign}
where the last line follows just by the definition of cyclic derivative.
\end{proof}

\begin{definition}
The \textit{noncommutative-Hessian} (NC-Hessian) is the operator 
\begin{align}\label{NCLap}
\Hess: \im \,\TrN \to \Mn\otimes  \CnNtens{2}
\end{align}
whose $(ij)$-entry ($1\leq i,j\leq n$) in the first tensor factor is
\begin{align}
(\Hess \TrN P)_{ij} := 
(\partial^{X_i} \circ \partial^{X_j} \TrN P ) \in \CnNtens{2}\,.
\end{align}
Here, $\TrN: \CnN  \to \C$ is the ordinary trace of $ M_N(\C) \supset\CnN$. Alternatively,
\[\Hess  
= \begin{pmatrix}
   \partial^{X_1}\circ \partial^{X_1} &   \partial^{X_1}\circ \partial^{X_2} & \cdots &  \partial^{X_1}\circ \partial^{X_n} \\
   \partial^{X_2}\circ \partial^{X_1}  &  \partial^{X_2}\circ \partial^{X_2} & \cdots  &  \partial^{X_2}\circ \partial^{X_n} \\ 
  \vdots & \vdots & \ddots &  \vdots \\
    \partial^{X_n}\circ \partial^{X_1} &   \partial^{X_n}\circ \partial^{X_2} &  \cdots &     \partial^{X_n}\circ \partial^{X_n}
  \end{pmatrix}\,.
\]
It will be convenient to introduce a closely related Hessian, $\Hess_g$,
modified by the\footnote{Later it will be clear this terminology---by now we use quotation marks.}
``signature'' $g=\diag(e_1,\ldots,e_n)$, 
\begin{align}
(\Hess_g \TrN P)_{ij} := 
(e_i)^{\delta_{ij}}
(\partial^{X_i} \circ \partial^{X_j} \TrN P ) \in \CnNtens{2}\,,
\end{align}
so 
\[\Hess_g 
= \begin{pmatrix}
  e_1 \partial^{X_1}\circ \partial^{X_1} &   \partial^{X_1}\circ \partial^{X_2} & \cdots &  \partial^{X_1}\circ \partial^{X_n} \\
   \partial^{X_2}\circ \partial^{X_1}  & e_2 \partial^{X_2}\circ \partial^{X_2} & \cdots  &  \partial^{X_2}\circ \partial^{X_n} \\ 
  \vdots & \vdots & \ddots&  \vdots \\
    \partial^{X_n}\circ \partial^{X_1} &   \partial^{X_n}\circ \partial^{X_2} &  \cdots &    e_n \partial^{X_n}\circ \partial^{X_n}
  \end{pmatrix}\,.
\]
Tracing the NC-Hessian $\Hess_g$ with help of the signature  yields the \textit{noncommutative Laplacian} $\nabla^2$, that is the map
\begin{align}\label{NCLap}
\nabla^2: \im  \TrN \to \CnNtens{2}
\qquad \mbox{given by }
\nabla^2 := \sum_{i=1}^n e_i(\partial^{X_i} \circ \partial^{X_i} )\,.
\end{align}  
We abbreviate $\partial^{X_i} \circ \partial^{X_i}=(\partial^{X_j})^2=\nabla_j^2$,
 so $\nabla^2=\sum_{j=1}^n e_j\nabla_j^2$. 
\end{definition}
 
We remark that the Hessian matrix (of NC-polynomials in $\Cntens{2}$) is not symmetric. 
Clearly, the NC-Laplacian and the NC-Hessian 
vanish on degree $< 2$. On larger words, we compute them with aid of: 
 
\begin{proposition} \label{thm:NCLap}
Consider a monomial $ Q\in\CnN$,  $Q= X_{\ell_1}X_{\ell_2} \cdots X_{\ell_{k}}  $  
with $k\geq 2$. Then, for $i,j=1,\ldots,n$
\begin{align}\label{doubleNCder}
(\partial^{X_i} \circ \partial^{X_j}) \TrN Q=
\sum_{\pi=(uv)} 
\delta^{j}_{\ell_u} 
\delta^{i}_{\ell_v} \pi_1(Q) 
\otimes
\pi_2(Q)  \,,
\end{align}
where the sum runs over all (directed) pairings $\pi=(uv)$
between the letters of the word $Q$ distributed on a circle:
\begin{align} \label{CircleLaplacian}
\includegraphicsd{.3}{Laplacian}
\end{align}
In Eq. \eqref{doubleNCder},  $\pi_1(Q) $ and $\pi_2(Q)$ are the words 
between $ X_{\ell_u} $ and $X_{\ell_u} $. They fulfill that  
$\pi_2(Q) X_{\ell_u}  \pi_1(Q) X_{\ell_v}$ matches $Q$ 
up to cyclic reordering. 
\end{proposition}
As a particular case in that definition: for $\pi$ matching contiguous letters, that is
if $v=u\pm 1$, one has the empty word in between,
$\pi_{\big\{\substack{ 1 \\ 2}\big\}}(Q)=1_N$.
\begin{proof} Notice that by (2) of Claim \ref{thm:indexfrei},
\begin{align*}
\partial^{X_i}\partial^{X_j} \TrN Q
& =\partial^{X_i} \Day^{X_j} Q   
\\ &=
\partial^{X_i} \bigg(
\sum_u \delta^j_{\ell_u} X_{\ell_{u+1}}X_{\ell_{u+2}}  \cdots X_{\ell_{k}} X_{\ell_{1}} 
X_{\ell_{2}} \cdots 
X_{\ell_{u-1}} \bigg)
\\
&=
\sum_u \delta^j_{\ell_u}  \sum_{v \neq u}
\delta^i_{\ell_v}
X_{\ell_{u+1}}X_{\ell_{u+2}}   \cdots X_{\ell_{{v-1}}} \\[-5pt] &\qquad\qquad\qquad\otimes 
X_{\ell_{v+1}}
\cdots X_{\ell_{k}} X_{\ell_{1}} 
X_{\ell_{2}} \cdots 
X_{\ell_{u-1}} \\
&= 
\sum_u \delta^j_{\ell_u}  \sum_{v \neq u}
\delta^i_{\ell_v} \pi_1(Q) \otimes \pi_2(Q),\,\, \,\where \pi=(uv)
\,.\qedhere
\end{align*}
\end{proof}
Before we give some examples, notice that since for the NC-Laplacian both pairings $\pi=(uv) $ and $(vu)$ appear, one can replace the 
expression $\nabla^2 \TrN Q=\sum_{j=1}^n e_j\sum_{\pi=(uv)} 
\delta^{j}_{\ell_v}
\delta^{j}_{\ell_u}  \pi_1(Q) 
\otimes
\pi_2(Q) $ by a more symmetric one, 
\begin{align}\label{undirected}
\nabla^2 \TrN Q=\sum_{j=1}^n e_j\sum_{\substack{\pi=\{uv\} \\ \text{\scriptsize(undirected) }} } 
\delta^{j}_{\ell_v}
\delta^{j}_{\ell_u}
\big\{ \pi_1(Q) 
\otimes
\pi_2(Q)
+
\pi_2(Q) 
\otimes
\pi_1(Q)
\big\}\,. \raisetag{18pt}
\end{align}
These differential operators can be extended to products of traces using the 
same formulae that defines them in the single-trace case, but they require additional structure. 
Namely, the NC-Laplacian satisfies the rule 
\begin{align} \nonumber
 \nabla^2  \Tr^{\otimes 2} (P\otimes Q) 
 &=\nabla^2  (\Tr P \cdot \Tr Q ) \\
 &=(\nabla^2  \Tr P) \cdot \Tr Q \label{NCLap_prod}
 +(\nabla^2  \Tr Q) \cdot \Tr P \\
 & + \sum_j e_j \big\{\Day^{X_j} P \totimes  \Day^{X_j} Q +
 \Day^{X_j} Q \totimes  \Day^{X_j} P\big\}\,, \nonumber
\end{align} 
in terms of a tensor product $\totimes$ that does not receive the next natural 
matrix coordinates for monomials $U,W\in\Cfree{n}$,
\begin{subequations}
\begin{align}
(U\otimes W)_{ab;cd}:=
U_{ab} W_{cd}\,,
\label{coordsnormal} 
\end{align}
but twisted ones with respect to the transposition $\tau=(13)\in \Sym(4)$ of the four indices,
\begin{align}
(U\totimes W)_{a_1a_2;a_3a_4}:=
(U\otimes W)_{a_{\tau (1)} a_{\tau (2)};a_{\tau(3)}a_{\tau(4)}} \,,
\label{coordsstrangetau} 
\end{align}
or more clearly
\begin{align}
(U\totimes W)_{ab;cd} =
U_{cb} W_{ad}\,.
\label{coordsstrange} 
\end{align}

\end{subequations}
Before seeing how $\tau$ twists the product on $\Cntens{2}$, in the next section, 
notice how expression \eqref{NCLap_prod} follows directly
from a slightly more general one that we do prove:
\begin{proposition}
The NC-Hessian of a product of traces is 
\begin{align}\label{HessianLeibn}
 \Hess (\Tr P \Tr Q)=  \Tr Q \Hess (\Tr P) + \Tr P \Hess \Tr Q + \Delta (P,Q)
\end{align}
where the last term  is the matrix with entries 
\[[\Delta (P,Q)]_{ij}= \Day^{X_i} P \totimes \Day^{X_j} Q + 
\Day^{X_i} Q \totimes \Day^{X_j} P
\in \Cntens{2}\,.\]
\end{proposition}

The matrix just defined satisfies 
$\Delta (P,Q)= \Delta(Q,P)$ evidently---which is important since $P$ and $Q$ in  $\Hess (\Tr P \Tr Q)$
are interchangeable---but, like the Hessian, it is not symmetric in general, $[\Delta (P,Q)]_{ij} \neq [\Delta (P,Q)]_{ji}$ .  

It is convenient to split $\mtc P= \partial^{X_i} \partial^{X_j} \Tr P \in \Cn \otimes \Cn $ using (a convenient upper-index version of) Sweedler's notation $\mtc P= \sum \mtc P\hp1 \otimes \mtc P\hp2  $. The transition to the index notation
  can be expressed 
  as\footnote{Other choices are possible. However, if one applies 
  these operators to products of traces, as is the case treated here, 
  at least one of the products will show this braiding.}
  \[
     \partial_{cb}^{X_i}  \partial_{ad}^{X_j} \Tr P =  \sum  \mathcal P_{ab;cd} = \sum  \mathcal P\hp{1 } _{ab} \mathcal P\hp{2} _{cd} \,, \]
   which follows by direct computation (and is moreover supported by \cite[Eq. 4]{GuionnetFreeAn}).  

\begin{proof}
 The coordinates of the $(i,j)$-matrix block of a Hessian are given by 
 $\Hess (\Tr P)_{ij \mid ab;cd }:=
 (\Hess (\Tr P)_{ij})_{ ab;cd } $. We compute these for the product of traces:
 \begin{align*} 
  \Hess &(\Tr P \Tr Q)_{ij \mid ab;cd } \\
  &= \partial^{X_i}_{cb} \partial_ {ad}^{X_j}(\Tr P \Tr Q) \\[3pt]
  &= (\Hess (\Tr P)_{ij})_{ ab;cd } \Tr Q +   \Tr Q   (\Hess (\Tr P)_{ij})_{ ab;cd }
  \\[3pt] & \,\,+ \Day^{X_i }_{cb} P \Day^{X_j}_{ad} Q + \Day^{X_i }_{cb} Q \Day^{X_j}_{ad} P \\[3pt]
   &= ( \Tr Q  \Hess \Tr P +\Tr P  \Hess \Tr Q)_{ij|ab;cd} \\[3pt] & \,\,+( \Day^{X_i}P \totimes \Day^ {X_j} Q 
     + \Day^{X_i}Q \totimes \Day^ {X_j} P )_{ab;cd}\,. \qedhere    
 \end{align*}
\end{proof}
From the last proposition, one can easily show a similar rule holds
replacing everywhere by its version $\Hess_\sigma$ modified by $\sigma=\diag(e_1,\ldots, e_n)$ and the 
$\Delta$-matrix by $\Delta^{\sigma}(P,Q)$ which has 
diagonal entries $\Delta^{\sigma}_{ii}(P,Q)=e_i \Delta_{ii}(P,Q)$ and else those of $\Delta$.      

In the following, we sketch the action of the operator $\partial^{X_j}$ graphically. 
The convention is that the first letter of a word is the first after
the cut (arrow tail) and the last letter corresponds to the 
one before the cut (arrow head).
\begin{itemize}\setlength\itemsep{.4em}
 \itemb One can represent 
 the elements of $\im \TrN$ as words on circles. 
For the NC-derivative $\partial^{X_j}: \im \TrN\to \Cn$ one has 
\begin{align}
\qquad
 \includegraphicsd{.2}{cuerdas1}& \stackrel{}{\mapsto}\sum_{j\text{-cuts} }
\includegraphicsd{.23}{cuerdas1p} \\
& =
\sum_{j\text{-cuts} }\includegraphicsd{.555}{cuerdas2} \nonumber
\end{align}
where the ends of the line in the last figure are joined (multiplied).
\itemb In the next representation, each arrow belongs to a different tensor factor. Thus, $\partial^{X_j}: \Cn \to \Cntens{2}$ acts as
\[
\includegraphicsd{.25}{cuerdas2p} \stackrel{}{\mapsto}
\sum_{j\text{-cuts} }\includegraphicsd{.44}{cuerdas3} 
\]
Together, the two last pictures
give the graphical interpretation of the proof of the proposition above.
 \itemb Similarly, $\partial^{X_j}: \Cntens{k} \to \Cntens{k+1}$
 $j$-cuts at all places of each tensor-factor (line):
 \begin{align} \nonumber
 \partial^{X_j} \big(
\includegraphicswextra{.162}{cuerdas4pre}{6} \big)
& = \color{black}\sum_{r=1}^k \includegraphicswextra{.22}{cuerdas4}{5} \hspace{-46pt} \\
& = \sum_{r=1}^k  \sum_{j\text{-cuts} }\includegraphicsw{.15}{cuerdas5}  
 \end{align}
 
 \end{itemize}
\begin{example} The next examples shall be useful below:
\begin{itemize}\setlength\itemsep{.4em}
\itemb On $\CnN$ it holds 
 $\nabla^2 \Tr  (X_iX_j)= \sum_{k=1}^n2  e_i  \delta_{i}^k \delta_{j}^k1_N\otimes 1_N = 2 e_i \delta_{j}^i 1_N\otimes 1_N $
 from the last statement, since only the empty word is between the two letters.

\itemb On $\C_{\langleb 1\rangleb, N}=\C\langleb X\rangleb$ with $X^*=X$,   $\nabla^2=( \partial^X)^2$ and ($m\geq 2$)
\begin{align}\label{Trm}
\nabla^2 \TrN \Big(\frac{X^m}{m}\Big)=\sum_{\ell=0}^{m-2} X^\ell\otimes X^{m-2-\ell}\,.
\end{align}
Now is evident that, even though $\C_{\langleb 1\rangleb, N}$
consists of ordinary polynomials, NC-derivatives are not ordinary.

 \end{itemize}

\end{example}

\begin{example}
We compute a NC-Hessian and a NC-Laplacian on 
$\Cfree{2}= \C\langleb A,B \rangleb$. With aid of Claim \ref{thm:NCLap} 
and setting  $g=\diag (e_1,e_2)= \diag (\ea,\eb)$:
\begin{align}
\Hess_g \{ \Tr(ABAB) \} & =
\begin{pmatrix}  \ea \partial^A \circ \partial^A &    \partial^A \circ \partial^B 
\\
   \partial^B \circ \partial^A  &   \eb \partial^B \circ \partial^B 
 \end{pmatrix}  \Tr(ABAB)\\
 & = 2
 \begin{pmatrix}
    \ea B\otimes B &    AB\otimes 1 +1\otimes BA  \\
   BA\otimes 1 + 1 \otimes AB  &  \eb  A\otimes A\\
 \end{pmatrix} \nonumber
\end{align}
which also explicitly shows the asymmetry of the Hessian.
To compute, say, the entry (12) of this matrix, which 
corresponds to the operator $ \partial^A \circ \partial^B $, 
one has four matches: distributing the word $ABAB$ on a circle as in \ref{CircleLaplacian},
with the arrow tail at any letter $B$, the tip of the arrow can pair the $A$ left (or clockwise)
to it or the $A$ to its right (counterclockwise). According to Claim \ref{thm:NCLap}
these contributions are, respectively, $1\otimes BA$ and $AB\otimes 1$  for each letter $B$ in the word, hence the factor 2.
The Laplacian is the trace of the Hessian, as block matrix, 
\[
\nabla^2 \Tr(ABAB) = \Tr_2\big(\Hess_g \Tr(ABAB)\big ) = 
2 \ea B\otimes B+ 2\eb A\otimes A\,.
\]
\end{example}


\subsection{The algebraic structure}
We consider sums of monomials which either have
the form $X\otimes Y$ or $X\totimes Y$ inside the same algebra:
\begin{align}
\label{defofA} 
\A_{n}=\Cntens{2 } \oplus \C^{\totimes 2}_{\langleb n \rangleb} \qquad  \mbox{ and } \qquad \A_{n,N}=\CnNtens{2} \oplus \C^{\totimes 2}_{\langleb n \rangleb,N} \,, 
\end{align}
where the second symbol emphasizes the matrix realization of the free algebra. 
On $\A_{n,N}$ there is a product $\times $ defined in coordinates by
\begin{align}\label{deftimes}
 [ (U\otimes_{\vartheta } W) \times (P \otimes_{\varpi} Q ) ]_{ab;cd} :=  
  (U\otimes _\vartheta W)_{ax;cy} (P\otimes_\varpi Q )_{xb;yd} \,,
\end{align}
where $\vartheta,\varpi$ represent the twist $\tau$ 
or its absence, and the sums are implicit. The twisted structure modifies the product according to: 
\begin{proposition} For monomials 
$U,W,P, Q\in \Cn$ one has
\begin{subequations}\label{productrules}
\begin{align}
(U\otimes  W)\times ( P\otimes Q  )&= UP \otimes WQ\,, \\
(U\otimes  W)\times ( P\totimes Q  )&= WP \totimes UQ\,, \\
(U\totimes  W)\times ( P\otimes Q  )&= UP \totimes WQ\,, \\
(U\totimes  W)\times ( P\totimes Q  )&= WP \otimes UQ \,. 
\end{align}
\end{subequations}

\end{proposition}

These rules can be remembered by identifying  tensor product of monomials
$U\otimes W$ with the block diagonal 
element $\diag (U,W)\in M_2(\Cn)$ and each twisted 
product $U\totimes W $ with the anti-diagonal $\jmath \, \diag(U,W)=\big(\begin{smallmatrix}
                                                   0&W \\ 
                                                   U& 0
                                                  \end{smallmatrix}\big)$  for $\jmath=\big(\begin{smallmatrix}
                                                   0&1 \\ 
                                                   1& 0
                                                  \end{smallmatrix}\big)$. Then, the rules \eqref{productrules} are just ar\ restatement of matrix multiplication in $ M_2(\Cn)$, but we do not state it a such since it does not work for polynomials. 
 But in fact eqs. \eqref{productrules} can be proven in coordinates:
\begin{proof}
We prove the second rule: for $a,\ldots, d=1,\ldots,N$, one has
\begin{align*}
((U\otimes  W)\times ( P\totimes Q  ))_{ab;cd} 
&= (U\otimes  W)_{am;co}  ( P\totimes Q  )_{mb;od}  \\
&= U_{am}  W_{co}   P_{ob}  Q_{md} \qquad  \mbox{(implicit $o,m$ sum)}  \\ 
&=  (U_{am}  Q_{md} ) (W_{co}   P_{ob})
 =  (WP)_{cb} (UQ)_{ad}   \\
&= (WP \totimes UQ)_{ab;cd}
\end{align*}
and that rule follows. 
The first rule \eqref{productrules} is obvious, 
the two left unproven are verified in similar way.
\end{proof}
As a caveat, notice that 
\[
(1\totimes 1)\times  (P\otimes Q)= P \totimes Q 
\quad \mbox{but}\quad 
(P\otimes Q) \times (1\totimes 1)= Q \totimes P\,.
\]
For monomials $P,Q, U,W \in \Cn$, we let also
\[
[(U \otimes_{\vartheta} W) \star ( P\otimes_{\varpi} Q) ]_{ab;cd}: = 
(U\otimes_\vartheta W)_{ab;xy} (P \otimes_{\varpi} Q)_{yx;cd}\,,\]
where $\varpi,\vartheta$ stand for either $\tau$ or an empty label.
\begin{proposition} \label{thm:rightproductformula}  It follows that
 \begin{subequations}
\begin{align}
(U \totimes W) \star ( P\totimes Q)  &=  PU \totimes WQ \,, \\
(U \otimes W) \star ( P\totimes Q)  &=U  \otimes PWQ  \,, \\
(U \totimes W) \star ( P\otimes Q)  &= WPU  \otimes Q \,, \\
(U \otimes W) \star ( P\otimes Q) &= \Tr (WP) U\otimes Q
\end{align}\label{rightproductformula}
\end{subequations}
\end{proposition} \vspace{-1cm}
\begin{proof}
We prove only the first one, the other proofs being similar:
 \begin{align} \nonumber
((U \totimes W )\star (P \totimes Q))_{ab;cd} & =  
(U \totimes W)_{ab;xy} ( P\totimes Q)_{yx;cd}
\\  &=  P _{cx} U_{xb} W_{ay}  Q_{yd} = (P U)_{cb} (W  Q)_{ad} \label{rightprodincoords} \\ 
\nonumber &=  (P U \otimes W  Q)_{cb;ad}  \\
&=
(PU\totimes WQ)_{ab;cd}\,. \nonumber  \qedhere
\end{align}

\end{proof}
One can replace the the new product by $\times$, namely using
\begin{align} \label{asdfasdf}
 (U\totimes W)\star  (P\totimes Q) =  (P\totimes W) \times (U\otimes Q)\,,
\end{align}
which holds due to 
\begin{salign}
 [ (U\totimes W)\star  (P\totimes Q)] _{ab;cd } &=( PU \totimes WQ )_{ab;cd } = P_{cx}U_{xb}W_{ay}Q_{yd} \\&= 
(U\totimes W)_{ab;xy}(P\totimes Q)_{yx;cd}\,.
\end{salign}%
Notice that in \eeqref{asdfasdf} $\tau$ no longer acts on the matrix indices and it has been transferred  to the factors:  
\[
(Y_1\totimes Y_2) \star (Y_3 \totimes Y_4) = (Y_{\tau(1)} \totimes Y_{\tau(2)}) \times ( Y_{\tau(3)}  \otimes Y_{\tau(4)})\,.
\]
Since  $( PU \totimes WQ )_{ab;cd }=(PU \otimes WQ)_{cb;ad}$, another useful expression for the sequel is
\begin{align}\label{withouttau} 
[ (U\totimes W)\star  (P\totimes Q)] _{ab;cd } = 
(U \otimes W)_{xb;ay} (P\otimes Q)_{cy;xd}\,.
\end{align}
Also, while the product $\times$ loses the twist,  $(1\totimes 1)^{\times 2}  = (1 \otimes 1  )$,
the $\star$ product preserves it $(1\totimes 1)^{\star 2} = (1\totimes 1) $ 
and in fact $(1\totimes 1)$  is the unit element:
\begin{subequations}\label{unittau}
\begin{align}
(1\totimes 1) \star (P\totimes Q) &= P\totimes Q \,,\\ (U\totimes W) \star (1\totimes 1) &= U\totimes W \,,
\\
(1\totimes 1) \star (P\otimes Q) &= P\otimes Q\,,\\   (U\otimes W) \star (1\totimes 1) &= U\otimes W\,,
\end{align}\end{subequations}%
which follows from Proposition \ref{thm:rightproductformula}. Although it might be clear from the definition of $\star$ that on
$ \Cn^{\totimes  2}$ it is associative---since there the first factor right multiplication and in the second ordinary matrix multiplication---it is reassuring
to see that it is also associative if untwisted elements are implied: 
\begin{proposition}\label{thm:associativity}
 The product $\star$ is associative on $\A_n$. 
\end{proposition}
\begin{proof}
 Let $A,B,C,D,U,W,P,Q,T,S,X,Y \in \Cn$. 
 Using Proposition \ref{thm:rightproductformula} one verifies straightforwardly that either bracketing, 
 \[
 \big \{ ( U\totimes  W + P\otimes Q) \star ( T\totimes S + X\otimes Y)   \big \}\star ( A\totimes  B + C\otimes D) 
 \]
 or 
 \[
 ( U\totimes  W + P\otimes Q) \star \big\{ ( T\totimes S + X\otimes Y)   \star ( A\totimes  B + C\otimes D) \big \}\,,
 \]
 yields due to the cyclicity of the trace the same result, namely:
 \begin{align*}
  &ATU\totimes WSB  + P\otimes (ATQSB) + WXU\otimes AYB  \\ 
  +& \Tr(QX) \cdot (P\otimes AYB) + WSCTU\otimes D + \Tr (TQSC) \cdot  (P\otimes D)\\
  +& \Tr (YC) \cdot  WXU\otimes D + \Tr(XQ)\cdot \Tr(YC)\cdot  (P\otimes D)\,.\qedhere
 \end{align*}

\end{proof}
For the sequel, more important that the Hessian is its twisted version
\begin{definition}
 The \textit{twisted NC-Hessian} $\Hess^\tau_\sigma$ is given by
 \[\Hess^\tau_\sigma := (1\totimes 1) \times \Hess_\sigma\,.
 \]
\end{definition}
In other words, by Proposition \ref{productrules}, $\Hess^\tau_\sigma$ is obtained from 
$\Hess_\sigma$ after exchanging the products $\totimes$ and $ \otimes$. 

\begin{example}\label{ex:MultHessians}
We exemplify computing the product of $\Hess_\sigma^\tau ( AABB) $, namely 
\fontsize{8.5}{14.0}\selectfont  
\begin{align*}
  \bigg(
\begin{array}{cc}
 \ea ({1}\totimes BB+BB\totimes {1}) & {1}\totimes AB+BA\totimes {1}+A\totimes B+B\totimes A \\
 {1}\totimes BA+AB\totimes {1}+A\totimes B+B\totimes A & \eb ({1}\totimes AA+AA\totimes {1}) \\
 \end{array}\bigg) \,,
\end{align*}   \fontsize{11.49}{14.9}\selectfont  
with $\Hess_\sigma^\tau \big[\Tr A \Tr(A B B)\big]$,\fontsize{8.5}{14.0}\selectfont  
\begin{align*}
  \Bigg(
\begin{array}{cc}
 \ea ({1}\otimes BB+BB\otimes {1}) & \Tr A (B\totimes {1}+{1}\totimes B)+{1}\otimes AB+{1}\otimes BA \\
 \Tr A (B\totimes {1}+{1}\totimes B)+AB\otimes {1}+BA\otimes {1} & \eb \Tr A (A\totimes {1}+{1}\totimes A) \\
\end{array}
  \Bigg)\,.
\end{align*}  \fontsize{11.49}{14.9}\selectfont  
The diagonal\footnote{
The $*$ entries of products of two Hessians are uninteresting in this paper (unless one wants to compute to third order the RG-flow). } of $\Hess_\sigma^\tau [\Tr A \Tr(A B B) ] \star \Hess_\sigma^\tau ( AABB)= \big( \begin{smallmatrix} \mathcal P & * \\ * & \mathcal Q \end{smallmatrix}\big)$, which is computed entrywise with $\star$, is given by  (recall $\ea^2=\eb^2=1$)
\begin{align*}
\mathcal P&=
\Tr A  \{1\otimes  B B A + A B B \otimes 1+A\otimes  B B +2 B\otimes  B A +2  A B \otimes B \\ & + B B \otimes A \}+1\totimes  A
   A B B +2 \cdot 1\totimes  A B A B +2 \cdot 1\totimes  A B B A  \\ &  +2 \cdot 1\totimes  B A B A +1\totimes  B B A A +2 \cdot 1 \totimes  B B B B +2  B B \totimes  B B \,,
%
\end{align*} 
and
\begin{align*}
 \mathcal Q&=\Tr A  \{1\otimes  B A B + B A B \otimes 1+A\otimes  B B +B\otimes  A B +B\otimes  B A  \\ &  + A B \otimes B+ B A \otimes B+ B
   B \otimes A+1\otimes  A A A + A A A \otimes 1 \\ & +A\otimes  A A + A A \otimes A\}+2  A B \totimes  A B  \\ &  +2  A B \totimes  B A +2  B
   A \totimes  A B +2  B A \totimes  B A \,.
\end{align*}
The $M_n(\C)$-trace (here for $n=2$, $\mathcal P +\mtc Q$) of products of Hessians---or rather of their anti-commutator---
will be shown to be fundamental for the RG-flow.
The absolute (not only cyclic) order in the letters of the 
expressions for the twisted or untwisted Hessians
of cyclic NC-polynomials absolutely matters. 
If one continues taking products of Hessians the 
order of the matrix factors does matter too (which is why 
one gets bulky expressions now). Only
at the final stage, when we take traces, we can 
cyclically permute. 
\end{example}


\section[Random NCG and multimatrix models]{Random noncommutative geometries and multimatrix models} \label{sec:MMM_NCG}

We briefly recall the foundations of fuzzy geometries, 
known to be rephrasable in terms of matrix algebras \cite{Ramgoolam:2001zx}, in 
Barrett's matrix geometry setting \cite{BarrettMatrix}.
The original definition is given in terms of spectral triples,
but in that definition the axioms 
implying the Dirac operator can be directly 
replaced by a characterization these boil down to.  

\subsection{Fuzzy geometries as spectral triples}\label{sec:fuzzy}
Given a signature $(p,q)\in\Z_{\geq 0 }^2$, a \textit{fuzzy $(p,q)$-geometry} 
consists of a quintuple
\[(M_N(\C), \,V\otimes M_N(\C), \,D,\, J, \,\gamma) \,\] whose elements 
we describe next. 
The inner product space $V$ is given the structure of $\mtc C\ell (p,q)=\mtc C\ell (\re\hp{p,q})$-module. The action $\text{\underline{c}}$ of the Clifford 
algebra on the basis elements $\theta^\mu$ of $\re\hp{p,q}=(\re^{p+q},\diag(+_p,-_q))$, where the subindex in 
each sign means its repetition that many times,
yields \textit{gamma-matrices} $\gamma^\mu=\text{\underline{c}}(\theta^\mu)$. We assume that they satisfy, 
\begin{subequations}
 \label{eq:ConventionGammas}
\begin{align} 
(\gamma^\mu)^2&=+1, \quad \mu=1,\ldots, p,  &&\gamma^\mu  \mbox{ Hermitian}, \\
(\gamma^\mu)^2&=-1, \quad \mu=p+1,\ldots, p+ q, && \gamma^\mu \mbox{ anti-Hermitian}\,,
\end{align}
\end{subequations}
that is, one has Hermitian or anti-Hermitian
gamma-matrices according to whether $\mu$ is a \textit{time-like} 
($1\leq \mu \leq p$) 
or a  \textit{spatial} index ($p< \mu \leq p+q$). This
in turn yields the 
\textit{chirality} $\gamma=(-\ii)^{s(s+1)/2} \gamma^1\cdots \gamma^{p+q}$, being 
 $s:=q-p$ the \textit{KO-dimension}. The inner product of $V$ together with the Hilbert-Schmidt
inner product  on $M_N(\C)$ endow $\H= V\otimes M_N(\C)$ with
the structure of Hilbert space the matrix algebra 
acts on in the natural way, ignoring $V$. 
Moreover, the KO-dimension determines
three signs $\epsilon, \epsilon',\epsilon''\in \{-1,+1\}$ via
 \[ \vspace{4pt}
 \begin{tabular}{ccccccccc}
$s\equiv q-p \,\,\mtr{( mod }\, 8)$ & 0 & 1 & 2 & 3 & 4& 5 &6 &7  \\[1pt] \hline\color{black}
 $\epsilon$ & $+$ & $+$ & $-$ & $-$ &$-$&$-$& $+ $ &$+ $ \\
 $\epsilon'$ & $+$ & $-$ & $+$ & $+$ &$+$&$-$&$+$&$+$ \\
 $\epsilon''$ & $+$ & + & $-$ & + &$+$& +&$-$& +\\[.21pt]
 \hline 
 \end{tabular}\]
The operator
  $J= C\otimes ($complex conjugation$)$ on $\H$ defines 
  the \textit{real structure}. Here $C:V\to V$ is anti-unitary and satisfies $C^2=\epsilon$ and $C\gamma^\mu=\epsilon'\gamma^\mu C$ for each $\mu=1,\ldots,p+q$. 
Last, but most importantly, $D$, the \textit{Dirac operator}, is a self-adjoint operator on $\H$ that satisfies the \textit{order-one condition}
$[[ D, Y' ]\,, J Y J\inv]= 0$ for each $Y,Y'\in \MN$. 
The signs in the table above imply, as part of the 
definition,
  \begin{align*}
  J^2 =\epsilon \, , \qquad 
  JD =\epsilon' DJ \,,  \qquad
  J\Gamma  =\epsilon '' \Gamma J\,.
  \end{align*}  
After the axioms are solved \cite{BarrettMatrix},
for an even dimension $q+ p$ (thus even KO-dimension), the Dirac operator has the form 
 \begin{align} \label{characDirac} \raisetag{50pt}
D 
&=
\sum_{\mu}^{\phantom d}  \gamma^\mu \otimes k_\mu
+\sum_{\mu,\nu,\rho} \gamma^{\mu}\gamma^\nu \gamma^\rho \otimes k_{\mu\nu\rho}+\ldots
\\ & 
+
\sum_{\mu,\nu,\rho} 
\gamma^{ \widehat{\mu\nu\rho} }\otimes k_{\widehat{\mu\nu\rho} }
+
\sum_{\hat{\mu}} \gamma^{\hat \mu} \otimes k_{\hat \mu}\,, \nonumber
    \end{align}
where 
\begin{itemize}\setlength\itemsep{.4em}
 \itemb $\gamma^\alpha=\gamma^{\mu_{1}}\gamma^{\mu_{2}}\cdots \gamma^{\mu_{i_{2r-1}}}$ means the product 
of all indices included in an incresingly-ordered multi-index $\alpha=(\mu_{1}\cdots 
\mu_{i_{2r-1}})$. The hatted indices are 
those omitted from the list $\{1,2,\ldots, p+q\}$. Notice that the sum runs only over  
multi-indices of odd cardinality; and
\itemb for any $Y\in \MN$, $k_\alpha$ are commutators or anti-commutators 
determined by $\alpha$ via $k_\alpha (Y)= X_\alpha Y + e_\alpha Y X_\alpha$,  being $X_\alpha\in \MN$  self-adjoint if $e_\alpha=+1$ and traceless anti-Hermitian if $e_\alpha=-1$.\end{itemize}
 For the first $p+q$ values of $i$,
 $e_i$ can be read off from $\diag(e_1,\ldots,e_{p+q})$, the signature;
 however, if $p+q\geq3$, the number $n$ of matrices that parametrize $D$ exceeds $p+q$. 
 This is also true for odd $p+q$, 
 for instance, in signature $(0,3)$ the Dirac operator can be written as
\[D=\{H,\balita\} + \iay [L_1,\balita]+ \jay [L_2,\balita]+\kay [L_3,\balita]   \,, \]
where $L_i\in \sun$ for each $i$
and $\iay, \jay , \kay$ are the quaternion units
as a realization of the pertaining gamma-matrices. In odd dimensions,
the chirality is trivial, which is why the anti-commutator term 
with a the Hermitian $N\times N$ matrix $H$ 
has a trivial coefficient, instead of a product of three different 
gammas matrices. 

The complete
 criterion \cite[App. A]{SAfuzzy} that fully determines the signs in \eqref{esigns} for even-dimensional fuzzy geometries implies
 multi-indices $\alpha$, namely
  \begin{align} \label{general_es}
e_\alpha= (-1)^{u+r-1}\, \qquad 2r=\#(\alpha)+1,\quad u=\#\{ \text{spatial indices in $\alpha$}\}.
\end{align} 
 
After a signature $(p,q)$ and the matrix size $N$
are chosen, notice that the items  $(M_N(\C), V\otimes M_N(\C), J, \gamma)$, called also a \textit{fermionic system}, are fixed. 
\par

We let $\mathcal M_N ^{p,q}$ be the space of all the Dirac operators that complete the four objects into a fuzzy geometry. This spectral triple is finite-dimensional
but does not fall into the classification 
made by Krajewski and Paschke-Sitarz \cite{KrajewskiDiagr,PaschkeSitarz}.
Using eqs. \eqref{characDirac} and \eqref{general_es}
one can obtain $\mathcal M_N ^{p,q}$
in terms of $\sun$ and $\mathbb H_N$,
the Hermitian matrices in $\MN$. For instance ${\mathcal M_N^{0,4}} =
\mathbb{H}_N^{\times 4} \times \mathfrak{su}(N)^{\times 4} $ for the 
Riemannian $4$-geometry and  $\mathcal M_N^{1,3} = \mathbb{H}_N^{\times 2} \times \mathfrak{su}(N)^{\times 6} $ for the Lorentzian case. 
However, the for formalism below it suffices to know
the space of Dirac operators for 2-dimensional fuzzy geometries:
 \[
{\mathcal M_N^{p,q}} =\begin{cases}
                    \mathfrak{su}(N) \times \mathfrak{su}(N)& (p,q)=(0,2) \\
                       \mathbb{H}_N \times \mathfrak{su}(N) & (p,q)=(1,1) \\
                       \mathbb{H}_N  \times  \mathbb{H}_N  & (p,q)=(2,0)
                     \end{cases}
\]
When we work in fixed signature, we write $\M_N={\mathcal M_N^{p,q}}$, as above in Eq. \eqref{ModuliN}. 
\subsection{The Spectral Action for fuzzy geometries}

We review how to compute the Spectral Action $\Tr f(D)$,
in order to see its relation with chord diagrams, 
simultaneously setting the terminology for Section \ref{sec:2MM}.
We restrict the discussion below to $2$-dimensional geometry
with otherwise arbitrary signature. 
\par 
As remarked in \cite{BarrettGlaser}, the computable 
spectral actions $\Tr f(D)$ require $f$, which in 
the original Connes-Chamseddine formulation is a
bump function around the origin, rather to be a 
polynomial, with $f(x)\to +\infty$ as $|x|\to \infty$. 
We thus restrict to positive, even powers of 
the Dirac operator, $\Tr D^m$, which according 
to \cite{SAfuzzy}, can be computed from \textit{chord diagrams} (C.D.)
of $m$ points. A chord diagram consists of a 
circumference with $m$ marked points and $m/2$ arcs joining them. 
These diagrams encode traces of products of gamma matrices.
For $2$-dimensional geometries, the description is relative simple, as
 no multi-indices are required:
\begin{align}
 \label{SAchords}
 \frac12 \Tr D^m = \sum_{\substack{\chi \\ \text{\footnotesize ($m$-point C.D.})}}
 \mathfrak a(\chi)\,, \quad  
 \end{align}
where the \textit{value} $  \mathfrak a(\chi)$ of the diagram $\chi$ is defined by  
\begin{align} \label{valueofCD}
 \mathfrak a(\chi)=\sum_{\mu_1,\ldots, \mu_m=1,2} &\chi^{\mu_1\cdots \mu_m} 
  \Bigg\{ \sum_{\Upsilon\in \mathscr P_{m}} \mathrm{sgn}[\mu(\Upsilon)] 
  \\ 
  &\cdot 
  \TrN \bigg(\prod^{\to}_{r\in \{1,\ldots, m\}\setminus\Upsilon} X_{\mu_r} \bigg) \times \TrN \bigg(\prod^{\leftarrow}_{r\in \Upsilon} X_{\mu_r}\bigg)
  \Bigg\} \,. \nonumber
\end{align}
Herein, for an $m$-point chord diagram and for each $\mu_1,\ldots,\mu_m=1,2$, one defines 
\begin{align}
\label{tensorchi}
\chi^{\mu_1 \dots \mu_{m} }
=
(-1)^{\#\{\text{simple crossings of chords in }\chi\}} 
\prod_{\substack{v,u =1\\ v\sim_\chi u} }^{m} \big( e_{\mu_v}\delta^{\mu_{v}\mu_{u}}\big) \,\,,
\end{align}
where $v\sim_\chi u$ means that the point $u$ and $v$
are joined by a chord of $\chi$, and $e_{\mu_v}$ are the signs in 
the signature $\diag(e_1,e_2)$ of the fuzzy $2$-dimensional geometry. The rest of the elements is given by:
\begin{itemize}\setlength\itemsep{.4em}
       \itemb  $\mathscr P_m$ is the power set of $\{1,2,\ldots, m\}$ 
       \itemb  for any $\Upsilon=\{i,j,\ldots,k\} \in \mathscr P_m$, $\mu(\Upsilon)$ is the ordered set $(\mu_i,\mu_j,\ldots, \mu_k) $ and $\mathrm{sgn}[\mu(\Upsilon)]= \prod_{r\in \Upsilon} e_{\mu_r}  $, which is a sign 
       \itemb $X_1=A$, $X_2=B$ are the (random) matrices
       \itemb the arrows on the product indicate 
the order in which it is performed; the right arrow preserves the 
order of the set one sums over and the left arrow inverts it.
       \end{itemize}

A quick way to see that the Spectral Action is real, as it should be,
bases on the observation that
for each word $w$ originated by a chord diagram $\chi$,
its adjoint $w^*$ is originated by the 
mirror image of $\chi$, denoted by $\chi^*$. But this being a 
chord diagram, it also appears summed in Eq. \eqref{SAchords}. 
\begin{align} \raisetag{55pt}
\text{If } 
\includegraphicsd{.135}{Chords14BW} \text{originates $w$}\quad \Rightarrow\quad \Bigg\{\!\!\includegraphicsd{.135}{Chords14BW}\!\!\Bigg\}^* = \reflectbox{\includegraphicsd{.135}{Chords14BW}}
\text{originates $w^*$}\,.
\end{align}

When the running indices in Eq. \eqref{valueofCD} take a 
particular value, we color the chords of the corresponding chord diagram: 
green, if at the ends of the chords there is a matrix $A$ 
and violet\footnote{In the printed version green is just light gray and violet is black.} if it is $B$. 
To a fixed word, say $B^4 A B^2 A^3$, generally many diagrams contribute, 
\begin{align}
\includegraphicsd{.14}{Chords10} &+  \includegraphicsd{.14}{Chords10A2}+\includegraphicsd{.14}{Chords10An} +\ldots +\includegraphicsd{.14}{Chords10A1}   \nonumber \\
& +\includegraphicsd{.14}{Chords10B1} + \includegraphicsd{.14}{Chords10B2} \label{manydiags} 
\end{align}
and we now show that for certain words their sum cannot vanish. (This 
will lead to the well-definedness of the chosen
truncation schemes in the FRGE.)

\begin{lemma}\label{thm:coeffssingle}
Given any word $w \in \Cfree{2}=\C\langleb A,B\rangleb$ of even degree in each of its generators, 
$\deg_A(w),\deg_B(w)\in 2 \Z_{\geq 0}$, it holds 
\[[\TrN w](\Tr D^m )\neq 0\,, \qquad \text{for } m=\deg_A(w)+\deg_B(w)\,. \]
That is, $\TrN w$ has non-zero coefficient in the Spectral Action
 for a suitable power of the Dirac operator $D$.
\end{lemma}

\begin{proof}
We use a chord diagram argument. The general situation is that
not only one chord diagram gives rise to $w$. Although the 
existence of such diagram having $m$ chords is trivial to exhibit,  
there exists the risk that all those 
diagrams add up to zero. We now verify 
that this is impossible. \par 

Suppose that $[\TrN w]\mathfrak a(\chi)$ does not vanish. This does not fix $\chi$ but leaves still a freedom of exchanging all green chord ends, corresponding to $A$, among 
themselves and the same for and all violet ones,
which correspond to $B$. (As a matter of illustration, for the word $B^4 A B^2 A^3$ above 
the first line shows such moves among $B$-chords
and the lower the $A$-chords.) All the diagrams $\chi'$ with $[\TrN w]\mathfrak a(\chi')\neq 0$ are obtained by either of these moves 
applied to the initial $\chi$, hence the number of 
such $\chi'$ is
\begin{align*}
\#\{\text{$\deg_A(w)$-point C.D.'s}\} \times \# \{ &\text{$\deg_B(w)$-point C.D.'s}\}
\\ &= (\deg_A(w)-1)!! \times (\deg_B(w)-1)!!\,,
\end{align*}
which by assumption is the product of odd numbers,
thus itself odd. Notice that the value of the diagrams $\chi'$ might
only differ from $\chi$ by a sign, which is determined by the crossings of the chords. Indeed,
 in Eq. \eqref{valueofCD}
the terms inside the curly bracket are fixed by hypothesis,
and in \eqref{tensorchi}
the product $ \prod_{\substack{v,u=1;\, v\sim u} }^{m} \big( e_{\mu_v}\delta^{\mu_{v}\mu_{u}}\big)$ is the same. This implies, that 
all the diagrams contributing to $w$ never can cancel,
as the sum $\sum_{r=1}^{2 \ell -1} \varepsilon_r$ never does (for $\ell\in\N, \varepsilon_r\in \{-1,+1\}$). Therefore the coefficient $[\TrN w](\Tr D^m )$ of $\TrN w$ in $\Tr D^m$ is by Eq. \eqref{SAchords} non-zero.
\end{proof}

Given an $m$-point C.D., a non-trivial \textit{partition} is a set $\Upsilon \in \mathscr P_m$
which is neither empty, nor is complement is, $\Upsilon^{\mathrm c}\neq \emptyset$. 
Such an $\Upsilon$ splits a diagram into product
of traces of two words. These words can be read
off from the diagram, according to \eqref{valueofCD} one counterclockwise 
the other clockwise. For instance,
\[
\includegraphics[width=33mm]{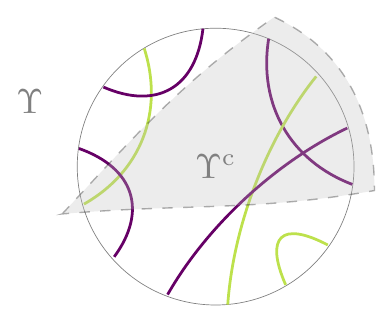}
\]
produces $\TrN (BAB^2A) $ from $\Upsilon^{\mathrm c}$ (denoted by a shaded region)
and $\TrN(BAB^4A^3)$  from $\Upsilon$. 
These non-trivial subsets $\Upsilon$ play the main role in the next  
\begin{lemma} \label{thm:coeffsbitrace}
Let  $\Cfree{2}=\C\langleb A,B\rangleb$, as before. 
The coefficient of the double-trace $\TrN(w_1) \TrN(w_2)$
in the Spectral Action is non-trivial 
for any word  $w=w_1\otimes w_2 \in \Cfree{2}\otimes \Cfree{2}$ satisfying
$\deg_A(w_1)+\deg_A(w_2),\deg_B(w_1)+\deg_B(w_2)\in 2 \Z_{\geq 0}$: 
\[[\TrN^{\otimes 2}w](\Tr D^m )\neq 0\,, \qquad \text{for } m=\textstyle\sum_{r=1,2}\deg_A(w_r)+\deg_B(w_r)\,. \]
\end{lemma}
 \begin{proof}  
We use a similar, albeit longer, argument to the single-trace case of Lemma \ref{thm:coeffssingle}. 
Since otherwise the statement reduces to Lemma \ref{thm:coeffssingle} above,
we assume that neither $w_1$ nor $w_2$ is the trivial word $1_N$. This means 
that only non-trivial partitions ($\Upsilon, \Upsilon^{\mathrm c}\neq \emptyset$)
can generate $w$, that is, if $\mathfrak a (\chi)\neq 0$, then $w$ is listed in
\begin{align} \label{midproof} \raisetag{62pt}
\sum_{\mu_r=1,2} \chi^{\mu_1\cdots \mu_m} 
\Bigg\{\sum_{\substack{\Upsilon \in \mathscr P_{m}\\ 
\Upsilon,\Upsilon^ c \neq \emptyset }}  \mathrm{sgn}[\mu(\Upsilon)] 
  \cdot 
  &\TrN \bigg(\prod^{\to}_{r\in \{1,\ldots, m\}\setminus\Upsilon} X_{\mu_r} \bigg) \\ &  \cdot  \TrN \bigg(\prod^{\leftarrow}_{r\in \Upsilon} X_{\mu_r}\bigg)
  \Bigg\}  \,, \nonumber 
\end{align}
with the condition that the first 
and the second traces yield simultaneously $\TrN w_1$ and $\TrN w_2$, in either correspondence. To wit, we have the following cases:
\vspace{4pt}
\begin{itemize}\setlength\itemsep{.4em}
 \itemb \textit{Case I.} If $\{ \TrN (w_1^*), \TrN (w_2^*)\}=\{\TrN (w_1), \TrN (w_2)\}$
 as sets. 
 \vspace{6pt}
 \itemb \textit{Case II.} If the trace of both adjoint words are
 different, that is if 
 $ \TrN (w_1^*) \neq  \TrN (w_1), \TrN(w_2)$ as well as $\TrN (w_2^*) \neq  \TrN (w_1)$, $\TrN(w_2). \vspace{6pt}$  
 \itemb \textit{Case III.} 
  For $\{r,l\}=\{1,2\}$, if
  one coincides, $\TrN (w_r^*) = \TrN (w_v)$,  
then the other does not, $\TrN (w_{l}^*) \neq  \TrN (w_u)$, $\{u,v\}=\{1,2\}$. 

\end{itemize}\vspace{4pt}
 In the first case, if $\Upsilon \in \mathscr P_m$ originates 
these words, so does $\Upsilon^{\mtr c}$,
and their contribution to the previous sum 
is doubled, for $\mathrm{sgn}[\mu(\Upsilon)]=\mathrm{sgn}[\mu(\Upsilon^{\mtr c})]$.  Hence, in Case I, we can sum over 
half of the elements encompassed inside the curly brackets in \eqref{midproof}. 
Since we excluded the trivial partitions, 
the total of sets in that sum is
$\#(\mathscr P_m)-2=2^m-2= 2\cdot (2^{m-1}-1)$. 
By hypothesis, the result of 
\eqref{midproof} is twice the sum over 
half elements, which is $2^{m-1}-1$. 
But in Cases II and III,
we also can do so, since $\Upsilon^{\mtr c} $ 
does not reproduce the word $w_1\otimes w_2$,
so we can ignore the half of the sets \eqref{midproof}. 
In any case, the sum is a multiple of 2 (Case I) of, or 
directly (Cases II-III), a sum over $2^{m-1}-1$ elements,
which is an odd number, since $w_1\otimes w_2$ is not the trivial
word and thus $m > 1$.
Since the three cases are the only possibilities 
given the two words, the partial conclusion 
is that the set $\Upsilon$ in \eqref{midproof} runs over an odd number of independent elements.

Again, finding a C.D. $\chi$ that generates $w$ is 
not hard: one puts together the letters $w_1w_2^*$
and joins by chords, matching letters. And again, 
this diagram is ambiguous up to 
a factor of $
(\deg_A(w_1w_2^*)-1)!! \times (\deg_B(w_1w_2^*)-1)!!$.
Considering the initial paragraph, the total number of 
terms is \begin{align}
\label{degsbitr}
(2^{m-1}-1) \cdot (\deg_A(w_1w_2^*)-1)!! \times (\deg_B(w_1w_2^*)-1)!! \in 2\N +1 \,,
\end{align} 
since the sum of all such diagrams 
is the product of \eqref{degsbitr}
with the non-redundant odd number. 
By the same token as before,
the sum over all the signs listed in Eq. \eqref{degsbitr} 
cannot vanish. 
\end{proof}

Concrete expressions for $f(z)=\frac14\big(\frac{z^2}{2} + \frac{z^4}4+\frac{z^6}6\big)$ 
are given below\footnote{The common 
$1/4$ factor results from removing redundant 
partitions by a set $ \Upsilon$ and its complement 
$\Upsilon$ in Eq. \protect\eqref{valueofCD}, and from $1/\dim_\C(V)=1/2$. }. From now on, we agree to write down rather the 
operators  $\Tr^{\otimes\,2}$ should be applied to, in order to get the actual monomials 
in the action. 
As for the signs, it is convenient to set $\ea:=e_1$ and $\eb:=e_2$.

\setlength{\leftmargini}{1em}
\begin{itemize}
\itemb \textit{Quadratic operators:}
 \begin{align} \label{QuadOps}
   1_N \otimes  \Big(\frac{\ea  }{2}   A^2 +\frac{\eb   }{2}   B^2 \Big)
   +  \frac{1}{2} (A \otimes A )+     \frac{1}{2} (B \otimes B )\,.
 \end{align}

\itemb \textit{Quartic operators:}
\begin{salign}
  & 1_N \otimes  \Big(\frac{1}{4} A^4+ \frac{1}4 B^4   +\ea \eb  A^2B^2  -\frac12 \ea \eb   ABAB \Big) \\  \label{QuartOps}
 &+   AB \otimes AB +2\ea \eb    A^2 \otimes B^2  + (\ea  A^3 + \eb  AB^2  ) \otimes A \numerada\vphantom{ \Big)} \\ \vphantom{ \Big)} &  
 +(\ea   A^2B + \eb  B^3  ) \otimes B   +3 
  A^2\otimes A^2 + 3 B^2\otimes B^2  \,.
 \end{salign}

\itemb \textit{Sextic operators:} The part bearing a $1_N$ factor is: 
 \begin{subequations}\label{sexticnormal}
 \begin{salign} \numerada
&1_N\otimes \big\{  \ea  A^6+ 6\eb   A^4 B^2-6 \eb   A^2(AB)^2+3\eb  (A^2B)^2     \\
&  \hspace{19.5pt} +
\eb  B^6+ 6\ea   A^2 B^4 - 6 \ea   B^2(BA)^2 + 3\ea   (B^2A)^2    \big\} \,,
\end{salign}and bi-trace terms are:\begin{salign}
 \vphantom{ \frac{10}3}&\phantom{+o}    A \otimes (  2 A^5 +2 A B^4 
+6 \ea  \eb  A^3B^2  -2 \ea  \eb A^2BAB)   \\ 
 \vphantom{ \frac{10}3}&+     B \otimes (  2 B^5 +2 B A^4 
+6 \eb  \ea  B^3A^2  -2 \eb  \ea B^2ABA) 
\\  \vphantom{ \frac{10}3}
& +  8 A B   \otimes  [   \ea  A^3B+
   \eb   B ^3 A  ] \\ & \vphantom{ \frac{10}3} \numerada
+ A^2    \otimes \big\{\eb   [8  A ^2 B ^2 -2 
B A B A  ]  + \ea   [5 A ^4    +  B  ^4 ] \big\} \\
 & \vphantom{ \frac{10}3}
+  B^2    \otimes \big\{\ea   [8  B ^2 A ^2 -2 
A B A B  ]  + \eb   [5 B ^4    +  A  ^4  ] \big\}\\
&+  \frac{10}3 (A^3 \otimes A^3 ) + 4 \ea  \eb  (A B^2 \otimes A^3) + 6 (A^2 B \otimes  A^2 B)  \\ 
&+   \frac{10}3 (B^3 \otimes B^3 ) +4  \eb  \ea  (B A^2 \otimes B^3) +6 (B^2 A \otimes B^2 A) \,.
\end{salign} 
\end{subequations}
\end{itemize}
Notice that neither
\begin{align*}
 A\cdot A\cdot A\cdot A\cdot A\cdot B,\,\qquad & A\cdot A\cdot A\cdot B\cdot B\cdot B, & A\cdot
   A\cdot B\cdot A\cdot B\cdot B,\\ A\cdot A\cdot B\cdot B\cdot A\cdot B,\,\qquad & A\cdot
   B\cdot A\cdot B\cdot A\cdot B, &  A\cdot B\cdot B\cdot B\cdot B\cdot B.
\end{align*}
nor their cyclic permutations are allowed. The same holds for  
any nontrivial partition of these into two tensor factors (e.g., $A\cdot A \otimes A\cdot A\cdot A\cdot B$), as they are not compatible 
with chord diagrams, in the sense mentioned at the beginning of this subsection. 
We also remark that $\totimes$-products do not appear in the Spectral Action. 

\section{Deriving the Functional 
Renormalization Group Equation}\label{sec:FRGEderivation}

We are interested in a nonperturbative approach
and pursue the RG-flow governed by Wetterich-Morris equation (or FRGE). Polchinski equation\footnote{For Polchinski equation, 
a review \cite{krajewskireiko} on tensor models might include 
complex matrix models as a rank-$2$ case. This can 
be used as starting point for a perturbative approach for these
multimatrix models. } \cite[Eq. 27]{Polchinski:1983gv} can be more
suitable in a perturbative approach.

 We start with the \textit{bare action} $S[\Phi]$ that describes
 the model at an ``energy'' scale $\Lambda \in \N$ (\textit{ultraviolet cutoff}). Let $\Phi$ be an $n$-tuple of matrices $\Phi= (\varphi_1,\ldots, \varphi_n) \in  \M_{\Lambda}$, 
 but the following discussion can be easily be made more general 
 taking $\Phi \in  \MN^{n}$. 
 Motivated by fuzzy geometries, the bare action $S$ is assumed to be a functional of the form 
 \begin{align}
  S[\Phi]=\Gamma_{\Lambda}[\Phi]= \Lambda \cdot  \Tr  P + \suma_{\alpha}(\Tr  \otimes \Tr ) (\Psi_\alpha \otimes \Upsilon_\alpha)\,, 
 \end{align}
being $P$ and each $\Psi_\alpha$ and $\Upsilon_\alpha$ in the finite sum 
a noncommutative polynomial in the $n$ matrices,
$P, \Psi_\alpha, \Upsilon_\alpha\in\re_{\langleb n \rangleb}=\re\langleb \varphi_1,\ldots,\varphi_n\rangleb$. The trace $\Tr=\Tr_
\Lambda$ is that of $M_\Lambda(\C)$. \par

Our derivation of Wetterich-Morris Equation for multimatrix models 
is inspired by the ordinary QFT-derivation (e.g., \cite{Gies}) for the first steps. Let  
\begin{align}\label{measure}
\exp (\mtc W[J]):= \mtc Z[J]:=  \int_{\M_\Lambda} \ee^{-S[\Phi]+ \Tr (J\cdot\Phi)} \dif \mu_\Lambda(\Phi)\,,
\end{align}
being $J=(J^1,\ldots,J^n) \in \M_N$ 
an $n$-tuple of matrix sources $J^i$, and $J\cdot\Phi = \sum_{i=1}^n J^i \varphi_i $
the sum of the $n$ matrix products. 
Here $ \dif \mu_\Lambda(\Phi)$ is the product Lebesgue measure on $\mathcal M_\Lambda$,
for which the notations $\int _\Lambda [\dif\varphi]  (\balita)$ and $\int_\Lambda \Dif\varphi  (\balita) $ are also common, mostly in physics.\par
The fundamental object is the effective action $\Gamma$, 
obtained by the Legendre transform of the free energy $\mtc W[J]$, 
\begin{align}
\Gamma[X]=\sup_J \Big( \Tr ( J \cdot X) -\mtc  W[ J]  \Big)\,.
\end{align}
Here $X$ denotes the $n$-tuple $X=(X_1,\ldots,X_n)$ of
\textit{classical fields}
$X_i:=\partial^{J^i} \mathcal W[J]  = \langleb \varphi_i\rangleb$. 
The supremum creates the dependence $J=J[X]$ and yields a 
functional depending only on $X$. 
Notice that since $J\in \M_N$, 
each source obeys the 
same (anti)-Hermiticity relation
$(J^i)^*=e_iJ^i$ as $\varphi_i$, for 
each $i=1,\ldots,n$. 
As a consequence, $(J\cdot \Phi)^*=(\Phi\cdot J)$ and the classical 
fields obey the expected rules:
\[X_i^*=\big(\partial^{J^i} \mathcal W[J] \big)^*
=\partial^{(J^i)^*} \overline{(\mathcal W[J])}=e_i X_i\,.\]

The effective action $\Gamma[X]$ 
contains all the quantum fluctuations at all energy scales. In practice, one uses an 
interpolating average effective action that incorporates 
only the fluctuations that are stepwise integrated out; 
the average effective action $\Gamma_N[X]$ results 
after integration of the modes having an energy larger than $N$ 
(i.e. matrix indices larger than $N$), while lower degrees of freedom not yet integrated. 
\review{The parameter $N$ serves as a threshold splitting the modes in high and low; 
the latter sit in the $N \times N$ block.}
Lowering $N$ makes $\Gamma_N[X]$ to approximate the full effective action $\Gamma$.
\par 
The progressive elimination of degrees of freedom is obtained by adding a mass-like term 
\begin{align} \label{IRreg}
(\Delta S_N)[\Phi]=
\frac12 \sum_{a,b,c,d=1}^\Lambda  \sum_{i=1}^n e_i(\varphi_i)_{ba} (R_{N}^\tau)_{ab;cd}  (\varphi_i)_{dc}\,, \qquad ( \Lambda \geq N \in \N).
\end{align}
This regulator has been adapted from that of Eichhorn-Koslowski to the multimatrix case\footnote{The next treatment holds for $1_n\otimes R_N \to \omega \otimes (1\totimes 1)$
with $\omega \in \Mn$ diagonal, but we stay with the easiest choice.}.
Typically the function $R_N^\tau:\{1,\ldots,\Lambda\}^4\to \re$ restricts the sum to some $N$-dependent region, but the sum-limits in Eq. \eqref{IRreg} 
allow for a freedom of regulators $R_N^\tau$.
Here, $R_N^\tau$ is not meant as a matrix: in particular its $k$-th power
$(R_N^\tau)^k$ does not imply $k-1$ sums but rather the $k$-th power pointwise. This can be guaranteed by
assuming 
\begin{subequations}\label{symmsofRN}
\begin{align}
 (R_N^\tau)_{ab;cd}= r_N(a,c) (1 \totimes 1 )_{ab;cd} =  
 r_N(a,c) \delta_c^b \delta^d_a\end{align}
for a $\re$-valued function $r_N$, and to 
satisfy 
\begin{align}
  (R_N^\tau)_{ab;cd}= (R_N^\tau)_{ba;dc} \qquad \text{and}\qquad 
  (R_N^\tau)_{ab;cd}= (R_N^\tau)_{dc;ba}\,,
\end{align}
\end{subequations}%
which hold by imposing $r_N(a,c)=r_N(c,a)$ for all $a,c$. 
Since $\tau$ implies a twist in the product, we stress that 
$R_N^\tau$ is not a multiple of the identity, only 
\begin{align}\label{twistedR}
 (R_N)_{ab;cd}:= r_N(a,c) \delta_{ab} \delta_{cd}= r_N(a,c) ( 1 \otimes 1)_{ab;cd} = r_N(c,a) ( 1 \otimes 1)_{cd;ab}
\end{align}
is. The choice of $R_N$ is arbitrary up to
the following three conditions\footnote{This is customary to state in FRGE-papers. 
 This condition deserves a mathematical study itself, 
 in order to find a precise characterization. This is left as a perspective and commented on later. }:
 \begin{enumerate}\setlength\itemsep{.4em}
 \itemb $(R_N)_{ab;cd} > 0$ for low modes, i.e. $\max \{a,b,c,d\}/N\to 0$ 
 \itemb $(R_N)_{ab;cd} \to 0$ for high modes, i.e. $N/\min \{a,b,c,d\}\to 0$  
 \itemb $(R_N)_{ab;cd} \to \infty$ as $N\to \Lambda \to \infty$ 
\end{enumerate}%
\noindent
which have the following effect, respectively:
\begin{enumerate} \setlength\itemsep{.4em}
 \item[(1)] the infrared (IR) regulator suppresses low modes: as a result these are not integrated out, unlike high modes, which do contribute to the average effective action $\Gamma_N$
 \item[(2)] is an initial condition for low $N$, i.e. ensures that one
  eventually recovers the full quantum effective action
  by lowering $N$
 \item[(3)] is an initial condition for large $N$ and ensures
that one can recover the bare action $S$ as $N\to \Lambda \to \infty$ via the saddle-point approximation.
\end{enumerate}
\begin{figure}[h!]
\includegraphics[width=.70\textwidth]{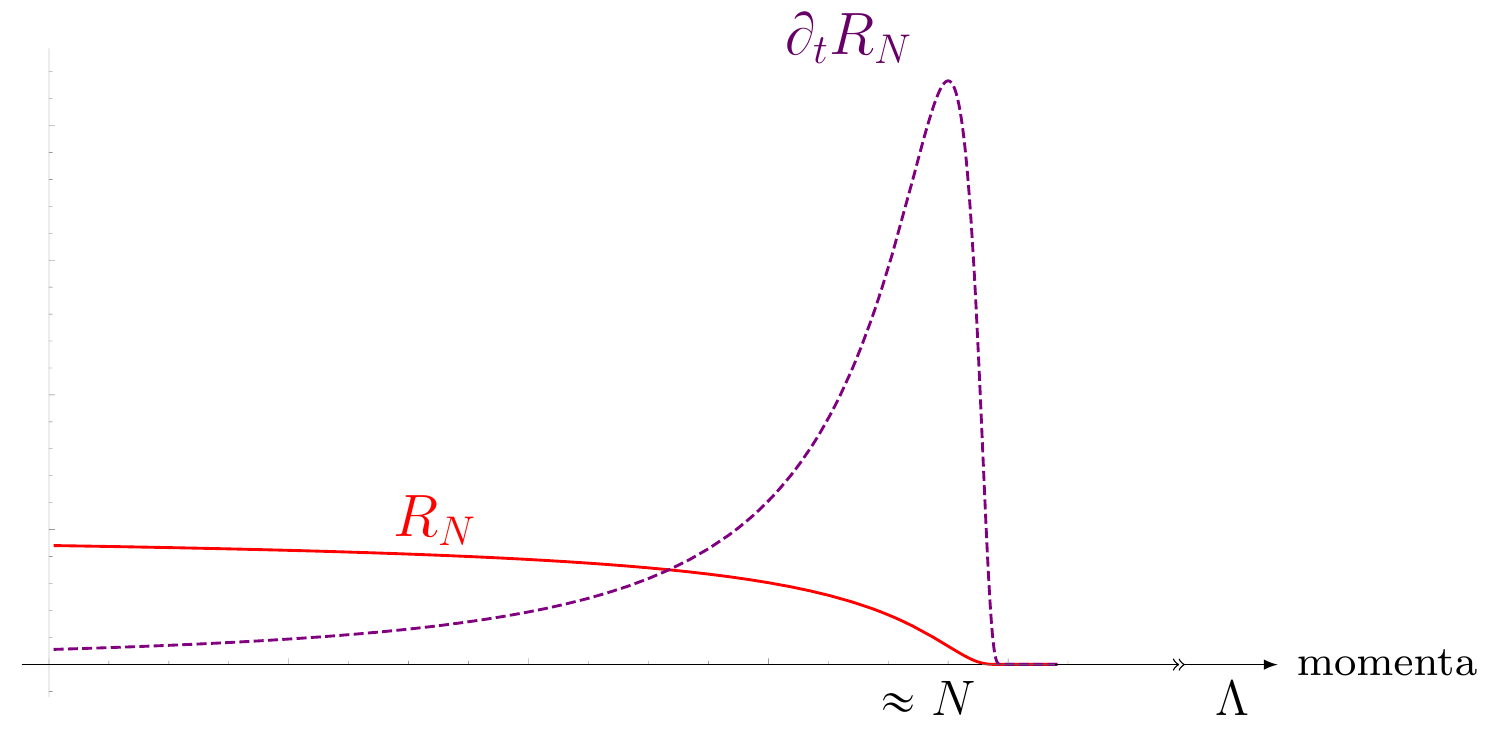}
\caption{The idea behind the regulator $R_N$ and its logarithmic 
derivative, here illustrated with 
a `bump function': $R_N$ protects
the IR degrees of freedom,
while those higher than $N$ are integrated out. 
\reviewinequation{Thus $N$ is the ``momentum threshold'' that 
splits modes into high- and low-momenta.}}%
\end{figure}%
Thus, incorporating $\Delta S_N$ to the action IR-regulates the functional
\begin{align}
 \exp \big(\mtc W_N[J]\big):= \mtc Z_N[J]:=  \int_{\M_\Lambda} 
 \ee^{-S[\varphi]-\Delta S_N[\varphi]+ \Tr (J\cdot \varphi)} \dif \mu_\Lambda(\Phi)
\end{align}
in terms of which one can obtain the \textit{interpolating average effective action}
\begin{align}
\Gamma_N[X]:= \sup_J \Big( \Tr(J\cdot X) -\mathcal W_N[J]\Big)- (\Delta S_N) [ X]
\,.
\end{align}
In practice, one uses the FRGE in order to determine it, instead
of performing the path-integral. This equation is usually displayed in 
physics in terms of a \textit{supertrace} $\STr$ we next define 
on the \textit{superspace} $\Mn \otimes \A_{n,\Lambda}=M_n(\A_{n,\Lambda})$.  Typical 
elements there form an $n\times n$ matrix $\mathcal T $ with entries 
\[
(\mathcal T_{ij}) =  \sum \mathcal T\hp 1_{ij}  \otimes  \mathcal T \hp 2 _{ij} \,.
\]
for some matrices $\mathcal{T}\hp{1}_{ij}, \mathcal T\hp 2_{ij} \in\C_{\langleb n \rangleb ,\Lambda}$,
whose four remaining entries we separate using a vertical bar, to avoid confusion: 
 \[
 \mathcal T = (\mathcal T_{ij|ab;cd})_{\substack {i,j=1,\ldots,n \\ a,b,c,d= 1,\ldots, \Lambda}} \in M_n(\CntensL{2})=M_n(\A_{n,\Lambda})\,.
 \]
We let also $\mathbf{1}= 1_n\otimes 1_\Lambda\otimes 1_\Lambda$, lest our notation becomes very loaded (which is a neutral element if $\A_n$ is endowed with $\times$)
but also notice that according to eqs. \eqref{unittau} only $\mathbf 1_\tau= 1_n\otimes 1_\Lambda\totimes 1_\Lambda$
 acts as a unit with respect to the $\star$-product. The supertrace is given by 
\begin{subequations}\label{defofSTr}
 \begin{align}
\STr&=\Tr_n\otimes \Tr_{\A_n}: M_n(\A_n) \to \C\,\,\\ 
\STr(\mathcal Q ) &=
  \sum_{i=1}^n \sum_{a,b=1}  ^\Lambda
  \mathcal Q_{ii \mid aa;bb} =  
\sum_{i=1}^n \sum_{a,b,c,d
=1 }  ^\Lambda
  \mathcal Q_{ii \mid ab;cd} (\delta _a^b \delta _{c}^d)
  \,.
\end{align}
\end{subequations}
Since knowing the matrix size will be useful, we use $\TrL^{\otimes 2}$
sometimes instead of $\Tr_{\A_2}$, but as the next $n=2$ example shows, it is important 
to be careful with twisted products  whose factors are merged inside a same trace:
\begin{align*}\STr\bigg( \begin{matrix}
                          1 \otimes A^4 & * \\
                           * & B^2 \totimes B^2  
                          \end{matrix}\bigg) & =\Tr_{\A_2} ( 1 \otimes A^4 + B^2\totimes B^2    ) \\ &= \Lambda \Tr (A^4) + \Tr  (B^4)  \,.\end{align*}
\begin{proposition} \label{thm:FRGE}The interpolating effective action $\Gamma_N$ of a matrix model 
with $X=(X_1,\ldots, X_n)\in \M_N^{p,q}$ satisfies for each $N\leq \Lambda$ Wetterich-Morris equation, which reads
\begin{align} \tag{FRGE}
\label{Wetterich}
\partial_t \Gamma_N[X]&=\frac{1}{2} \STr \bigg(   \frac{  \partial_t{R_N^\tau}}{\Hess_\sigma^\tau \Gamma_N[X] + R_N^\tau}\bigg)\,,
\end{align}
being $t=\log N$ the RG-flow parameter 
and $\sigma=\diag(e_1,\ldots,e_n)$ with $  X^*_{i}=\pm X_i\,$ iff  $e_i=\pm 1$. These signs 
are determined by the signature $(p,q)$ of the fuzzy geometry
that originates the matrix model---which for dimensions $p+q \leq 2$ coincides with $g=\diag (e_1,\ldots, e_{p+q})$---and else are given by Eq. \eqref{general_es}.  The quotient 
of operators is meant with respect to the $\times$ product.
\end{proposition}
Also $n=2$ if $p+q=2$ and $n=8$ if $p+q=4$, with general rule 
$n=2^{p+q-1}$ as far as $p+q$ is even \cite{SAfuzzy} and 
$R_N$ is economic notation for $1_n\otimes  R_N $.  After the proof, we provide the strategy to compute the RHS. 
\review{The quantity in the ``denominator'' of the FRGE requires some 
care; its well-definedness is addressed in Section \ref{sec:FP}.}
 \begin{proof}Directly from the definition 
 of the interpolating action one has 
 \begin{align} \nonumber
\partial_t \Gamma_N [X] & =(\partial_t \Gamma_N) [X] = \partial_t
  \Big\{ 
\sup_J \big( \Tr (J\cdot X) -\mathcal W_N[J]\big)- (\Delta S_N) [ X]
 \Big\} \\ \nonumber
 &=  -\partial_t \mathcal W_N [J]-
\partial_t(\Delta S_N) [ X] 
\label{mittlere}
\\
&= -\frac1
{\mathcal Z_N[J]}\int (-\partial_t \Delta S_N) \ee^{-S-\Delta S_N+\Tr(J \cdot \varphi)} \dif \mu_\Lambda (\Phi) 
\\
& \reviewinequation{\quad- \frac12 \sum_{a,b,c,d=1}^\Lambda  \sum_{i=1}^n  
e_i (X_i)_{ab} (\partial_t R_N^\tau)_{ab;cd} (X_i)_{cd} \,.}\nonumber
 \end{align}
Recalling that  $X_i= \mathcal Z_N\inv  \partial^{J^i}\mathcal Z_N $
one can use 
\reviewinequation{
\begin{align*}\frac{\delta^2 \mathcal W_N[J]}{\delta J^i_{ba}\,\delta J^i_{dc}}& =
- \langleb (\varphi_i)_{ab} \rangleb  \langleb (\varphi_i )_{cd}\rangleb +\frac1{\mathcal Z_N[J]} 
 \frac{\delta^2 \mathcal Z_N[J]}{\delta J^i_{ba}\,\delta J^i_{dc}}\qquad \text{(no $i$ sum)}\\
 & = 
-(X_i)_{ab} (X_i)_{cd} 
\\ & 
+\frac1{\mathcal Z_N[J]} \int
 (\varphi_i)_{ab} (\varphi_i)_{cd} \cdot \ee^{-S-\Delta S_N+\Tr(J \cdot \varphi)} \dif \mu_\Lambda (\Phi)\,
\end{align*}}
in order to re-express $\partial_t(\Delta S_N)$ appearing in the integrand in 
the first term, 
$\mathcal Z_N[J]\inv \int (-\partial_t \Delta S_N) \ee^{-S-\Delta S_N+\Tr(J \cdot \varphi)} \dif \mu_\Lambda (\Phi) $, of Eq. \eqref{mittlere} to obtain 
\begin{align}\label{jakies_row}\partial_t \Gamma_N [X]  =  \frac12 \sum_{a,b,c,d}^\Lambda  \sum_{i=1}^n \bigg(\frac{\delta^2 \mathcal W_N[J]}{\delta J^i_{ba}\,\delta J^i_{dc}} \bigg) \cdot e_i\cdot   (\partial_t R_N^\tau)_{ab;cd}\,.  
 \end{align}
The rest relies on the use of the superspace chain rule  
\begin{align} \label{decisive}
 \delta_{ij}\delta_{ux}
\delta_{vy}
&=\dervfunc{ (X_i)_{uv}}{(X_j)_{xy}}
=\sum_{k=1}^n
\sum_{l,m}
\dervfunc{ (X_i)_{uv}}{J^k_{lm}}
\dervfunc{J^k_{lm}}{ (X_j)_{xy}} \\
&=\!\!\!\sum_{\substack{k=1,\ldots, n \\ l,m=1,\ldots, \Lambda }  } \!\!\! \nonumber
 \big\{ \partial^{X_j}_{yx} \partial^{X_k}_{lm} \Gamma_N[X] +e_k  \delta_{jk} (R_N^\tau)_{lm;yx}\big\}
\cdot 
\bigg( \dervfunc{^2 \mathcal W_N [J]}{ J^k_{lm}\delta J^i_{vu}} \bigg)
\,.
\end{align}
Passing from the first to the second line is implied by taking 
the derivative with respect to $X_j$ of the IR-regulated
quantum equation of motion, that is of
\begin{align*}
\partial^{X_k}_{ab}\Gamma_N &=  
\partial^{X_k}_{ab} \Big( 
\Tr (X\cdot J)-\mathcal W_N[J]
-\Delta S_N[X]\Big)
\\
&=J^k_{ab} + \Tr ( X \cdot \partial^{X_k}_{ab} J )
-\partial^{X_k}_{ab} \mathcal W_N[J] 
- e_k\Tr (   (R_N^\tau)_{ab; \cdot \cdots } X^k)\\
&=J^k_{ab}  
-e_k \Tr (  (R_N^\tau)_{ab;\cdot \cdots } X^k)
\,. 
\end{align*}
In the second line $\partial^{X_k}_{ab} J$ 
is a matrix (for fixed $a,b$) and the trace 
$\Tr ( X \cdot \partial^{X_k}_{ab} J )$, which equals $\partial^{X_k}_{ab} \mathcal W_N[J]$ by the chain rule, is taken
with respect to those tacit indices of $J$. 
In the other trace-term, the shown indices $a,b$ are excluded,
so traces are taken for the remaining ones (the dots in $R_N^\tau$); the symmetries \eqref{symmsofRN} of $R_N^\tau$ have been used too. Hence, indeed \begin{align*}
\dervfunc{ J^k_{pq}}{X^j_{xy}} &=\partial^{X_j}_{yx} \partial^{X_k}_{pq} \Gamma_N[X] +e_j  \delta_{jk} (R_N^\tau)_{pq;yx} \\
&= \big( {e_k^{\delta_{jk}}} \{ \Hess \Gamma_N[X]\}_{jk} + e_j \delta_{jk} R_N \big )_{px;yq} \,, 
\end{align*}
after \eeqref{twistedR} and the index symmetries implied by it. Denoting by $\cdot_n$ the product in the $\Mn$ tensor factor (of the superspace), 
one can moreover replace  $(\Hess^J   \mathcal W_N)_{ki}= \partial^{J^k} \partial^{J^i} \mathcal W_N[J]$ by the inverse\footnote{See discussion after the proof.} of 
\[\Hess  \Gamma_N + (\sigma \otimes  R_N   )  =  
\sigma \cdot_n (\Hess_\sigma  \Gamma_N +  1_n \otimes  R_N  ) 
\,, \]
 after using $\sigma=\diag (e_1,\ldots,e_n)$ and the fact that $1/e_i =e_i$ (since $e_i=\pm$).
 One has
 \begin{align} \label{intermediajakas}
 \partial_t \Gamma_N [X] &=  \frac12 \sum_{a,b,c,d}^\Lambda  \sum_{i=1}^n \bigg(e_i
 \frac{\delta^2 \mathcal W_N[J]}{\delta J^i_{ba}\,\delta J^i_{dc}} \bigg)    (\partial_t R_N^\tau )_{ab;cd}\\
 &=  \frac12 \sum_{a,b,c,d}^\Lambda  \sum_{i=1}^n 
 ( \Hess^J_\sigma \mathcal W_N[J])_{ii|cb;ad}  (\partial_t R_N^\tau )_{ab;cd} \,.\nonumber
\end{align}
The result follows from Eq. \eqref{jakies_row}, after realizing that 
  the LHS of \eqref{decisive} is $ \delta_{ij}\delta_{ux}
\delta_{vy}=(1_n\otimes 1 \totimes 1 )_{ij|yx;uv}=(\mathbf 1_\tau)_{ij|yx;uv}$. 
In order to invert\footnote{
One could feel tempted to state   
\[
 \big\{\partial^{X_j}_{ab}\partial^{X_k}_{xy} \Gamma_N[X]+ e_{j}\delta_{jk} (R_N)_{ab;xy}\big\}\inv \stackrel{!}{=}
 \partial^{J^k}_{yx} \partial^{J^i}_{cd} \mathcal W_N[J ] \,.
\]
Although this expression is probably clearer than \eeqref{rightproduct},
first one has to invert in superspace, and only then, take the matrix entries. 
} the Hessian of $\mathcal W$, we use 
\begin{align}\label{rightproduct} 
  & \{\Hess_\sigma \Gamma_N [X] + R_N \}_{ij\mid xb;ay} (\Hess^J_{\sigma}\mathcal W[J])_{jk\mid cx;yd }  &\\
 &=\{\Hess_\sigma \Gamma_N [X] + R_N \}_{ij\mid \tau(ab;xy)} (\Hess^J_{\sigma}\mathcal W[J])_{jk\mid \tau( yx;cd )} \nonumber \\ &=(1_n \otimes 1  \totimes 1 )_{ik \mid ab;cd} \, \nonumber 
\end{align} 
where the $\star$ product and the twisted Hessian can now be recognized. 
Therefore 
\begin{align}
  \partial_t \Gamma_N [X]  &= 
\frac12 
  \Tr_n \bigg\{\sum_{a,b,c,d}^\Lambda  \nonumber
  \big( [\Hess_\sigma^\tau \Gamma_N[X] +1_n\otimes R_N^\tau ]\inv \big)_{ab;cd} \\
  & \hspace{4.6cm}\times  (\partial_t R_N^\tau  )_{ab;cd}\bigg\}  \\ &
 = \frac12 \Tr_n\otimes \Tr_{\A_n}  \big\{ (\Hess_\sigma^\tau  \Gamma_N +   R_N^\tau   )\inv  \times   (\partial_t R_N^\tau  )  \big\}  
 \, . \nonumber  
 \end{align}
 We renamed indices and we used the symmetry $(R_N^\tau)_{ab;cd} = (R_N^\tau)_{ba;cd}$.
\end{proof}
%
 

The RHS of the FRGE is usually interpreted in terms of a ribbon loop \raisebox{-.29\height}{\includegraphics[width=20pt]{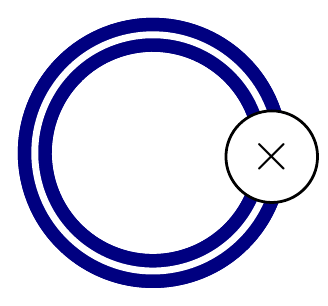}}, 
the thick ribbon being the full propagator.
 For the present FRGE
this picture is obtained by interpreting the ribbon as the 
supertrace $\Tr_n\otimes \TrL^{\otimes 2}$,
\begin{align}
 \partial_t\Gamma_N  = \!\! \raisebox{-.08\height}{\includegraphicsd{.398}{RibbonsFRGEtw}} 
\label{listones}
\end{align}
The source marked with a crossed circle is the RG-time derivative term.
In order to stress the meaning of the last equation,
we consider an ordinary Hermitian $n$-matrix model. 
Proposition \ref{thm:FRGE} then restricts to signature 
$(n,0)$, so each $e_i=1$, $i=1,\ldots,n$. 

\begin{cor}[FRGE for Hermitian multimatrix models]
Wetterich-Morris equation for Hermitian $n$-matrix models is given by
\begin{align}\label{FRGEHermitian}
\partial_t \Gamma_N[X]&=\frac{1}{2} \STr \bigg(   \frac{  \partial_t{R_N^\tau}}{\Hess^\tau  \Gamma_N[X] + R^\tau_N}\bigg)\,.
\end{align}

\end{cor}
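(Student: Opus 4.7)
The plan is to obtain the corollary as an immediate specialization of the general Wetterich--Morris equation established in Proposition \ref{thm:FRGE}. The only input from the ``Hermitian'' hypothesis is that the signature degenerates: if every generator satisfies $X_i^*=X_i$, then by \eqref{esigns} one has $e_i=+1$ for every $i=1,\ldots,n$, so the signature matrix becomes trivial, $\sigma=\diag(e_1,\ldots,e_n)=1_n$.

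Next I would unwind the definitions. By construction $(\Hess_\sigma P)_{ij}=(e_i)^{\delta_{ij}}(\partial^{X_i}\!\circ\partial^{X_j})P$, so setting each $e_i=1$ collapses $\Hess_\sigma$ to the ordinary NC-Hessian $\Hess$. Applying the twist, $\Hess^\tau_\sigma=(1\totimes 1)\times\Hess_\sigma$ becomes $(1\totimes 1)\times\Hess=\Hess^\tau$. The regulator $R_N^\tau$ is independent of the signature, so the denominator in \eqref{Wetterich} becomes $\Hess^\tau\Gamma_N[X]+R_N^\tau$, which is precisely what appears in \eqref{FRGEHermitian}.

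Finally, I would note that no other piece of the proof of Proposition \ref{thm:FRGE} is sensitive to the signs: the tracelessness condition in \eqref{econditions} is triggered only by $e_i=-1$ and therefore does not apply, while the superspace chain rule \eqref{decisive} and the inversion step \eqref{rightproduct} are unaffected since the $e_k^{\delta_{jk}}$ factors there all evaluate to $1$. Hence the identity reduces termwise to the Hermitian version, and no obstacle arises: the argument is essentially bookkeeping, merely verifying that each $\sigma$-decorated object reduces to its undecorated counterpart when $\sigma=1_n$.
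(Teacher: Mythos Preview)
Your proposal is correct and follows exactly the paper's approach: the paper's proof is simply the one-line observation that for Hermitian matrices $\sigma=1_n$, so the result is immediate from Proposition~\ref{thm:FRGE}. Your additional unwinding of the definitions (checking that $\Hess_\sigma\to\Hess$ and hence $\Hess_\sigma^\tau\to\Hess^\tau$, and that the tracelessness and chain-rule steps are unaffected) is consistent with this and merely spells out what ``immediate'' means.
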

\begin{proof}
 It is immediate from Proposition \ref{thm:FRGE}, 
since for Hermitian matrices one has $\sigma=1_n$.
\end{proof}


\section{Techniques to compute the renormalization group flow}\label{sec:FPandtruncations}
 
The next sections explain how to compute the RHS of the 
FRGE.

\subsection{Projection and truncations}  \label{sec:truncations}
\review{The RG-flow generates the infinitely many operators that 
the symmetries allow.} Feasibility forces us
first to \textit{project} each matrix $X_i$ to a $N\times N$ matrix $X_i\hp N$ and
then truncate $\Gamma_N[X\hp N]$ to Ans\"atze implying
finitely many operators $\mtc O_I$ indexed by words $I$
of the free algebra. Since 
this projection will be assumed for the rest of this paper,
for the sake of lightness we agree to write $X\hp N$ as $X$. 
Some truncation schemes are:
\begin{itemize}\setlength\itemsep{.4em}
\itemb Single trace truncation:
\[
\Gamma_N[X]= N \sum_I \bar{\mathsf{g}}_I(N)\TrN (\mathcal O_I(X))\,.
\]
\itemb Bi-tracial truncation:
\begin{align}
\Gamma_N[X]&=   N \sum_I\bar{\mathsf{g}}_I(N)\TrN (\mathcal O_I(X)) \\[-16pt]
&\,\,\,\,+ \sum_{I,I'} \bar{\mathsf{g}}_{I|I'}(N)(\TrN\otimes \TrN )(\overbrace{\mathcal O_I(X)\otimes \mathcal O_{I'}(X)}^{\mathcal O_{I|I'}(X)})\,.
\end{align}
\itemb Degree-$k$ truncation: 
\[
\Gamma_N[X]= 
  \sum_{\substack{ I_1,\ldots, I_{\alpha} \\ 
  \sum_\nu \mtr{deg } \mathcal{O}_{I_\nu}(X)\leq k}  } 
\frac{(\bar{\mathsf{g}}_{I_1|I_2|\ldots| I_\alpha})(N)}{N^{k-1}} 
 \TrN^{\otimes j}  \bigg( \bigotimes _{\nu=1}^\alpha \mathcal O_{ I_\nu} (X) \bigg)\,,
\]
\end{itemize}
where $\bar{\mathsf{g}}_{\ldots}(N)$ are the coupling constant, to be later 
renormalized to ${\mathsf{g}}_{\ldots}(N)$, the physical value.

We warn that this choice will be taken together 
with the assumption of $N$ being large. The price
to be paid is the un ability to recover the full 
effective action (which otherwise would be
obtained by $\lim _{N\to 1} \Gamma_N $) not only because $N$ is large,
but also because we compute in a projection. 
 
 \subsection{The $FP\inv$ expansion in the large-$N$ limit}\label{sec:FP}
 
 Based on the procedure introduced in \cite{EichhornKoslowskiFRG} for Hermitian 
 matrix models---which soon will be modified---we split the full propagator, for us $
 \Hess_\sigma\Gamma_N[X] + R_N =P \oplus F[X]$,
 into field-dependent and field-independent parts. 
 In our multimatrix case, with signs $\sigma=\diag(e_1,\ldots, e_n)$ 
 given by Eq. \eqref{esigns},  we get $ F[X]:=\Hess_\sigma \Gamma_N [X] -(\Hess_\sigma \Gamma_N\big|_{X=0}) $ and $P:= R_N +(\Hess_\sigma \Gamma_N\big|_{X=0})$.
We now simplify the treatment assuming that 
\[
 Z_i\equiv Z_j=:Z, \hphantom{a}\mbox{ when } e_i =e_j :=e \mbox{ for all $i,j$}, 
\]
for the rest of the paper.  This is not the most general case and particularly excludes for the time being mixed signatures left 
for later study; however, this simplification has the advantage of leading 
to a $P$ that is the identity matrix multiplied by a function $ \{1,\ldots,\Lambda\}^4\to \C$ denoted by the same letter, $P= (e^2 Z+R_N ) \mathbf 1=(Z+R_N ) \mathbf 1$ \review{since $e^2=1$. Notice that both $Z$ and $R_N$ being always positive $P$ is invertible.}
In particular, powers $P^\ell$ of $P$ are meant pointwise (not as a matrix or tensor). One therefore has the commutation 
of $P$ with the field part $F[X]$, 
\begin{subequations}\label{CommuteFP}
\begin{align}\label{CommuteFPa}
 P \times F[X] = F[X] \times  P\,,\hspace{25pt} \text{for all }X\in \M_N^{p,q}\,.
 \end{align}
It is important to realize in which sense the regulated Hessian 
of the interpolating action is an inverse of the Hessian of $\mathcal W_N$ in source space, as this defines the way we have to take the Neumann series to invert $\Hess_\sigma^\tau \Gamma + R_N^\tau$.
Although in the $\Mn$ factor of superspace this is an ordinary matrix 
product---see the groupoid property in the indices $i,j,k$ inside the proof 
of the FRGE, 
$\{\Hess_\sigma \Gamma_N [X] + R_N \}_{ij\mid xb;ay} (\Hess^J_{\sigma}\mathcal W[J])_{jk\mid cx;yd } =(1_n \otimes 1  \totimes 1 )_{ik \mid ab;cd}$---each entry of that 
matrix is multiplied according the product $\star$; this product is easier to recognize in \eeqref{withouttau}. 
That is to say, the way to invert in \ref{Wetterich} 
the regulated Hessian as dictated by the proof of the FRGE,
is the algebra $M_n(\A_n,\star )$ and not $M_n(\A_n,\times)$. 
The commutation \eeqref{CommuteFP} can be replaced 
by
\begin{align}\label{CommuteFPb}
 P_\tau \star  F_\tau[X] = F_\tau[X] \star   P_\tau\,,\hspace{25pt} \text{for all }X\in \M_N^{p,q}\,.
 \end{align}
 \end{subequations}
 since for $(\A_{n},\star) $ the unit is $1\totimes 1$ and $P_\tau$ can 
 be treated as a scalar function. We take the Neumann series of the twisted 
 version $(\Hess_\sigma^\tau \Gamma_N[X]+R_N^\tau)\inv$. Namely by \eeqref{CommuteFPb},
 \begin{align}\label{Neumann}
   \sum_{k=0}^\infty (-1)^k P_\tau\inv \star \big\{P^{-1}_\tau F_\tau[X] \big\}^{\star k} = \sum_{k=0}^\infty (-1)^k \big\{P_\tau^{-(k+1)} F_\tau[X] ^{\star k}\big\}\,.
   \end{align}
Underlying this structure is the independence of $P$ from the matrices $X=\{X_j\}$. Thus, when evaluated, $(P_{\tau})^\ell$ sits in the constant part of $\A_{n,\Lambda}$, so powers of $P_\tau$ act on the field part by scalar multiplication. On the other hand,
$(F_\tau[X]) ^{\star k}$ does mean the matrix product in 
the field part \eqref{fieldpart} of $\A_{n,\Lambda}$.
Then, using the associativity of $\star$ (Prop. \ref{thm:associativity}), it is routine to check 
that the series \eqref{Neumann} serves as
inverse of $P_\tau \oplus F_\tau [X]$ in the 
sense that their product in either order yields $\mathbf 1_\tau=1_n\otimes 1_\Lambda \totimes 1_\Lambda$. 
 Therefore,
\begin{align} 
\frac1{\Hess_\sigma^\tau \Gamma_N[X]+R_N^\tau}&=\sum_{k=0}^\infty (-1)^k \big[ P_\tau^{-(k+1)} ( F_\tau  ^{\star k}) \big]\,.
\label{aseio}
\end{align}
Assuming a truncation necessitates a 
compatible supertrace, $\STrN$.
Since functions $G:\{1,\ldots,\Lambda \}^4 \to \C$ 
act multiplicatively on the fields, we let
 \begin{align}
  \label{STr2}
\STrN \big(G \cdot  W[X]\big) = \bigg( \sum_{a,b,c,d=1 }^\Lambda G_{ab;cd} \bigg)\cdot (\Tr_n\otimes \TrN \otimes \TrN ) \big(W_N[X]\big)\,
\end{align}
for $W$ a field ($\deg W\neq 0$) in $\Mn\otimes \A_{n,\Lambda}$.
Here, $W_N$ is the same matrix of words $W$ projected to $\Mn\otimes  \A_{n,N}$.
 Also, $\STr$ is defined to be identically zero 
 on the `constants' of the free algebra (in the terminology of Sec. \ref{sec:toolkit}), or 
 \begin{align}\label{STrdeg0}
\STrN (L)=0 \text{ if } L\in \C\cdot(1_n\otimes 1_N\otimes 1_N) \mbox { or } L\in \C\cdot(1_n\otimes 1_N\totimes 1_N)
\,. 
\end{align} 
This follows from any of the previous Ans\"atze for $\Gamma_N$, 
but it holds in general on physical grounds, 
 since that constant part in the action 
 corresponds to the vacuum energy \cite{Morris}. 
 However, the constant part of the algebra cannot be
fully ignored since is the one that regulates the RG-flow 
and that part appears multiplying the fields. \\
\begin{remark}
 It would be interesting to answer whether the vanishing of $\STrN (L)$ 
 (here and in the physics literature, as part of the definition) 
 yields constraints on the IR-regulator. Namely, to explore the conditions
 that the equation 
 $\STrN (P\inv 1_N\otimes 1_N)=0$ imposes on $R_N$, if 
 one does not automatically include in the definition 
 the condition \eqref{STrdeg0}.  
\end{remark}

\begin{proposition}\label{thm:tadpole}
The RG-flow is generated by the noncommutative Laplacian scaled by 
$\varrho:= \sum_{a,b,c,d} (\partial_t R_N \cdot  P^{-2})_{ab;cd}$. That is, in the `tadpole approximation', the FRGE is given 
by  \vspace{-2pt}\begin{align}\label{tadpole}
\partial_t \Gamma_N[X] = \displaystyle
-\frac12  \varrho \TrN\otimes \TrN \big(\nabla^2 \Gamma_N \big)\,.
\end{align}
\end{proposition}

\begin{proof}
The tadpole approximation means to cut
Eq. \eqref{aseio} to $k=1$. It is immediate 
that one can undo the twists from the Hessian and $R_N^\tau$ altogether, with that of $\partial_t R_N^\tau$ since in this simple case $\star$ is not implied. By Eq. \eqref{STr2} this means that 
\begin{align*}\partial_t \Gamma_N[X]&=+
\frac12
\STrN \bigg\{\sum_i  \nonumber 
\frac{ \partial_t R_N}{\Hess_\sigma  \Gamma_N  +   R_N } \bigg\} \\
&=- \frac12 \bigg\{\sum_{a,b,c,d} (\partial_t R_N \cdot  P^{-2})_{ab;cd} \bigg\} \Tr_n\otimes \TrN\otimes \TrN  
\big(  F[X] \big)  \\
 &=-
 \frac12 \varrho \Tr_n\otimes \TrN\otimes \TrN\big\{   F[X] +  F[0]  \big\} 
\end{align*}
were Eq. \eqref{STrdeg0} has been used from the first to the second line,
and from there to the third too.  Now, $ F[X] +  F[0] = \Hess_\sigma \Gamma_N[X] $,
which traced over the first $\Mn$ factor, is by definition the NC-Laplacian. 
\end{proof}
 We next justify the approximation given in eqs. \eqref{STr2}--\eqref{STrdeg0} and relate it with  
 the definition of $\STr$. Notice that the support of the function $G\hp{N}_k:\{1,\ldots ,\Lambda\} ^ 4 \to \re$ given by
 $G\hp{N}_k=(\partial_t R_N) \cdot P^{-(k+1)}$ becomes an $N$-dependent region of $\{1,\ldots ,\Lambda\} ^ 4$. Generally, one cannot find a function $f_n(N)$ such that $\STr(G_k\hp N\cdot W[X])=f_k(N) \cdot \Tr_n\otimes\Tr_{\A_{n,N}}(W_N[X])$, or explicitly such that  
 \[
\sum_{a,b,c,d=1}^\Lambda [G\hp{N}_k]_{ba;dc}  (W[X])_{ab;cd} \stackrel{}{=}
f_k(N)  \Tr_n\otimes \TrN \otimes \TrN( W_{N}[X]^k)
\] 
holds 
for a $W[X] \in   M_n(\A_{n,\Lambda}) $ in the field part of the free algebra, 
with $W_N[X] \in  M_n(\A_{n,N})$. 
What is done in practice is to assume this replacement,
but in return to let the function $f_k(N)$ be governed by the FRGE.  
We moreover use a regulator $R_N$ whose support 
is inside $\{1,\ldots ,N\} ^ 4$. 

In order to exploit the FRGE, one needs to 
compute the first powers of the 
expansion \eqref{Neumann}. Defining $\tilde h_k(N)=\sum_{a,b,c,d}^\Lambda (G\hp{N}_k)_{ab;cd}$,
which, since neither  $\partial_t R_N$ nor $P^{-(k+1)}$ 
have field dependence, equals 
\begin{align} \label{barh}
\tilde h_k(N)=\sum_{a,b,c,d=1}^\Lambda (\partial_t R_N)_{ab;cd}  P^{-(k+1)}_{ab;cd}\,, 
\end{align}
one obtains after projecting
\begin{align} \label{FPexpansion} 
\partial_t \Gamma_N [X] 
&\stackrel{\eqref{Wetterich}}{=}\frac12
  \STrN^\tau \bigg( \sum_{k=0}^\infty  (-1)^k G\hp{N}_k  
  \cdot 
  \{ F_\tau[X]\}^{\star k} \bigg) \\
 &\!\!\!\stackrel{\eqref{STr2} \& \eqref{STrdeg0}}{=}\,\,\, \frac12\sum_{n=1}^\infty  (-1)^k \tilde h_k(N) 
   (\Tr_n\otimes\TrN^{\totimes 2} )\big\{  F_\tau[X]\big\}^{\star k} \nonumber\\
  & \nonumber \,\, \,\,\,\,\,= \frac12 (\Tr_n\otimes\TrN^{\totimes 2}) \big\{ - \tilde h_1(N)  F_\tau [X] \\ & \hspace{3.4cm} +\tilde h_2(N)   \big(F_\tau [X]\big)^{\star 2} +\ldots \big\} \,.\nonumber
 \end{align}
where $\TrN \totimes \TrN (\mathcal Q)= \Tr_{\A_n}((1_N\totimes 1_N) \times \mathcal Q)$ 
in terms of which we $\STrN^\tau $. That twist comes from $R_N^\tau$,
whose untwisted part was absorbed in the functions $G\hp{N}_k$. 
 We remark that Eq. \eqref{STr2} does not take into account 
 the symmetry breaking caused by the regulator $R_N$, 
 which  is related to ignoring the modified Ward-Takahashi\footnote{Regarding the Ward-Takahashi identity \cite{fullward,fullward_add} of tensor 
 models, a sister theory of matrix models, the progress of the WTI-constrained RG-flow is reviewed in \cite{Baloitcha:2020idd}. See also 
\cite{Lahoche:2019ocf}.} identity \cite{LitimPawlowski} caused by $R_N$. \par
 \vspace{.2cm}
 
 \hspace{0.32cm} 
\fbox{\begin{minipage}{.75\textwidth}
From this point on, we focus on large-$N$ results 
and consider the fields as projected matrices of size $N\times N$. 
Terms of order $O(N\inv)$ will be often ignored in our computations.  
Also, since $F$ is not needed again,
we rename $F_\tau$ to $F$. 
\end{minipage}}
  
  \section{``Coordinate-free'' matrix models} \label{sec:CrossCheck}
 
  We cross-check that, notwithstanding the somewhat different statements,  our purely-algebraic 
  approach yields, for the Hermitian case with $n=1$, the results that \cite{EichhornKoslowskiFRG} presented in ``coordinates'' 
  (that is, written with matrix entries). Here, we also calibrate
the IR-regulator for later use in Section \ref{sec:2MM}.

  The interpolating action $\Gamma_N[X]$ is given by 
 (applying $\TrN^{\otimes 2}$ to) the next operators 
 that define our truncation: 
 \allowdisplaybreaks[1]
  \begin{align*}
&
  \frac{Z }{2 N}1_N\otimes X^2 +\frac{\bar{\mathsf{g}}_{4} }{4
   N}1_N\otimes X^4 +\frac{\bar{\mathsf{g}}_{6}}{6 N}1_N\otimes X^6\\ 
   &+\frac{\bar{\mathsf{g}}_{2|2}}{8}  X^2\otimes X^2+\frac{\bar{\mathsf{g}}_{2|4}}{8}  X^2\otimes X^4\,.
 \end{align*}
 Since $n=1$,  the NC-Laplacian equals the NC-Hessian $\partial^2$, which
on $\Tr \mathcal O$ for an operator $\mathcal O\in \Cfree 1 $ equals $(\partial \circ \Day)  \mathcal O(X)$ by Claim \ref{thm:indexfrei}. So, by Claim \ref{thm:NCLap} and Eq. \eqref{NCLap_prod} one gets
\begin{align*}
 \frac{1}{2N} \partial^2  \Tr_{\A_{1,N}}\big(  {1_N\otimes X^2} \big) &=  1_N\otimes 1_N \\
 \frac1{4N} \partial^2   \Tr_{\A_{1,N}}\big( {1_N\otimes X^4}\big)  &=  X\otimes X+ 1_N\otimes X^2+ X^2\otimes 1_N \\
 \frac{1}{8}\partial^2   \Tr_{\A_{1,N}}\big( X^2\otimes X^2 \big)&=   X\totimes X+ 1_N\otimes 1_N \TrN \Big(\frac{X^2}{2}\Big) \\
 \frac{1}{6N}\partial^2   \Tr_{\A_{1,N}}\big({1_N\otimes X^6} \big)&=  X\otimes X^3 + 1_N\otimes
   X^4+ X^2\otimes X^2 \\ & + X^3\otimes X+ X^4\otimes 1_N \\
 \frac1{8} \partial^2  \Tr_{\A_{1,N}}\big( X^2\otimes X^4\big) &=  X\totimes X^3+X^3\totimes X+  1_N\otimes 1_N
   \TrN\Big(\frac{X^4}{4}\Big) \\ &  +\big\{ {X^2}\otimes 1_N+ X\otimes
   X+  1_N\otimes {X^2} \big\} \TrN 
   \Big(\frac{X^2}{2} \Big)\,.
\end{align*} %
\allowdisplaybreaks[0]%
One now ``twists'' these equations. The expression for $F[X]=\Hess^\tau \Gamma [X] - Z (1_N\totimes 1_N )$ follows from the first equation in this list (after exchange of the tensor product with the twisted version). We keep odd-degree operators in $F$, even if 
we first included even-degree ones, since 
we need powers of $F$ and even-degree operators are generated from odd-degree ones. 

 By neglecting odd-degree after taking the $\star$-powers of $F[X]$,
 as well as truncating them to degree-six operators, 
 the $FP\inv$ expansion \eqref{FPexpansion} in 
 this setting reads:
\begin{salign}
\partial_t \Gamma_N[X]=&-\frac{1}{2} \frac{\tilde{h}_1}{N^2} 
\Big\{ 
(N^2+2) \bar{\mathsf{g}}_{2|2} + 4 N \bar{\mathsf{g}}_{4}
\Big\}
\TrNX{2} \\ 
& +  \Big\{-\frac{\tilde{h}_1}{N^2} \Big( 
(4+\frac{N^2}{2})\bar{\mathsf{g}}_{2|4}+4 N \bar{\mathsf{g}}_{6} 
\Big)\\ & \quad +\frac{\tilde{h}_2}{N^2} \big(
12\bar{\mathsf{g}}_{2|2} \bar{\mathsf{g}}_4 + 4 N \bar{\mathsf{g}}_4^2\big)
\Big\}\TrNX{4}  \\  &+  \numerada \label{1MMGammadot} \Big\{
\frac{\tilde{h}_2}{N^2} \big((8 + N^2) \bar{\mathsf{g}}_{2|2}^2 + 8 N \bar{\mathsf{g}}_{2|2} \bar{\mathsf{g}}_{4}  + 12 \bar{\mathsf{g}}_{4}^2\big) \\ & \quad - \frac{\tilde{h}_1}{N^2} (4 N \bar{\mathsf{g}}_{2|4} + 4 \bar{\mathsf{g}}_6)
\Big\} \frac{1}{8}\TrN^2(X^2)\\  
& + \Big\{
 \frac{\tilde{h}_2}{N^2} (36 \bar{\mathsf{g}}_{2|4} \bar{\mathsf{g}}_4 + 30 \bar{\mathsf{g}}_{2|2} \bar{\mathsf{g}}_6 + 12 N \bar{\mathsf{g}}_4 \bar{\mathsf{g}}_6) \\ &
 \quad - \frac{\tilde{h}_3}{N^2} (81 \bar{\mathsf{g}}_{2|2} \bar{\mathsf{g}}_4^2 + 6 N \bar{\mathsf{g}}_4^3)
\Big\}\TrNX{6}
\\  &+\Big\{
  \frac{\tilde{h}_2}{N^2} (\bar{\mathsf{g}}_{2|4} ((38 + N^2) \bar{\mathsf{g}}_{2|2} + 12 N \bar{\mathsf{g}}_4) + 8 N \bar{\mathsf{g}}_{2|2} \bar{\mathsf{g}}_6 + 48 \bar{\mathsf{g}}_4 \bar{\mathsf{g}}_6) \\ &
  \quad-\frac{\tilde{h}_3}{N^2} (72 \bar{\mathsf{g}}_{2|2}^2 \bar{\mathsf{g}}_4 + 12 N \bar{\mathsf{g}}_{2|2} \bar{\mathsf{g}}_4^2 + 48 \bar{\mathsf{g}}_4^3) 
\Big\}\TrNX{2} \TrNX{4}\,,
\end{salign}%
up to the third non-trivial term ($h_r=0$ for $r\geq 4$) in the $FP\inv$-expansion. This equation was 
obtained using the product rules of Proposition \ref{productrules}: 
For instance, the cubic term in $\bar{\mathsf{g}}_4$ in the fifth line of  \eqref{1MMGammadot}
comes from $P^{-4}F^{\star 3}$, more concretely from   \[-(\tilde h_3/2N^2)\bar{\mathsf{g}}_4^3 \TrN^{\totimes 2} \big[  
(X^2 \totimes 1_N)^{\star 3} +( 1_N \totimes X^2 )^{\star 3}+ \ldots \big] \,,\] 
where the dots omit other terms in the cube of $F$. Graphically, the $\bar{\mathsf{g}}_4^3$-contribution to $\bar{\mathsf{g}}_6$ is (cf. Eq. \eqref{listones} too)
\begin{align}\label{g43}
\bar{\mathsf{g}}_4^3 \sim  \includegraphicsd{.5}{g4cube}
\end{align}
We let $h_k=\lim_{N\to \infty }Z^k\tilde h_k(N)/N^2 $, which due to 
Eq. \eqref{barh} is independent of $Z$, and choose later an explicit regulator $R_N$  that 
makes $h_k$ only dependent on $k$ in the large-$N$ limit. Thereafter, the  contributions to the $\beta$-functions coming from 
\textit{quantum fluctuations}\footnote{These are the coefficients
of a $\partial_t \Gamma_N[X]$ in the operator in question.} can be read off from Eq. \eqref{1MMGammadot}. 
To state the quantum fluctuations in terms of the renormalized
quantities (without bar), one needs to find the way these scale with $Z$ and $N$.  
We let $\bar{\mathsf{g}}_{2k}=Z^{a_k} N^{-b_k} \mathsf{g}_{2k}$ 
and $\bar{\mathsf{g}}_{u|2k-u}=Z^{j_k} N^{-i_k} \mathsf{g}_{u|2k-u}$ (for even $u$, 
with $0<u<2k $).\par 

To solve for $a_k,b_k,i_k,j_k$, one asks 
the equation $\beta_{I} = \partial _t \mathsf{g}_{I} $ to remain finite
for each
operator $\mathcal O_I$
as $N\to \infty$. This leads to
\begin{align*}
\bar{\mathsf{g}}_{4}&=Z^{2} N^{-1} \mathsf{g}_{4}, &&& \bar{\mathsf{g}}_6&=Z^{3} N^{-2} \mathsf{g}_6 , \\
 \bar{\mathsf{g}}_{2|2}&=Z^2 N^{-2}\mathsf{g}_{2|2}\,,&& \mtr{and}&  \bar{\mathsf{g}}_{2|4}&=Z^3 N^{-3}\mathsf{g}_{2|4} \,.\end{align*}
These scalings, together with the 
quantum fluctuations from Eq. \eqref{1MMGammadot}, yield for the anomalous dimension $\eta=-\partial_t \log Z$
and the $\beta$-functions in the large-$N$ limit:
\begin{subequations}
\begin{align} 
 \eta & =   h_{1}\Big(\frac{1}{2} \mathsf{g}_{2|2}+2 \mathsf{g}_{4}\Big) \,,  \label{1MMeta}   \\ 
 \beta_{{4}} &= (1+2\eta )\mathsf{g}_4   + 4 h_{2} \mathsf{g}_{4}^2-h_1 \Big(4 \mathsf{g}_{6} +\frac{\mathsf{g}_{2|4} }{2}\Big)\,, \label{1MMbeta4} \\ 
 \beta_{{2|2}}&=  (2+ 2\eta  )\mathsf{g}_{2|2} -4h_1(\mathsf g_{2|4} +\mathsf g_6)  \\ &\quad \nonumber+ h_2 (\mathsf g_{2|2}^2 + 8 \mathsf g_{2|2} \mathsf g_4 + 12 \mathsf g_4^2) \,,\\
 \beta_{{6}}&= (2+3\eta )\mathsf{g}_{6} + 12 \mathsf{g}_{4} \mathsf{g}_{6} h_{2}-6 \mathsf{g}_{4}^3 h_{3}\,, \phantom{\frac12}\\ 
 \beta_{{2|4}}&=(3+3\eta )\mathsf{g}_{2|4}+h_{2} (\mathsf{g}_{2|2} \mathsf{g}_{2|4}+8 \mathsf{g}_{2|2} \mathsf{g}_{6}\\ & \quad +12 \mathsf{g}_{2|4} \mathsf{g}_{4}+48 \mathsf{g}_{4} \mathsf{g}_{6}) -h_{3} \big(12 \mathsf{g}_{2|2}\mathsf{g}_{4}^2+48 \mathsf{g}_{4}^3\big)\,.\phantom{\frac12} \nonumber
 \end{align}\label{1MMbeta24}%
\end{subequations}%
We only are in debt with the explicit regulator $(R_N^\tau)_{ab;cd}=r_N(a,c)\delta_{a}^d\delta_{c}^b$ 
for $r_N$ defined on $\{1,\ldots, \Lambda\}^2$ and given by 
\begin{align}r_N(a,b)= Z\cdot    \bigg[\frac{N^2}{ a^2+b^2} -1 \bigg] \cdot \Theta_{\mathbb D_N}(a,b) \,,
\label{regulator}
\end{align} 
being $\Theta_{\mathbb D_N}(a,b)$  the indicator function in the 
disc $a^{2}+b^{2}\leq N^{2}$. 

\begin{figure}[h!]
 \includegraphics[width=.52\textwidth]{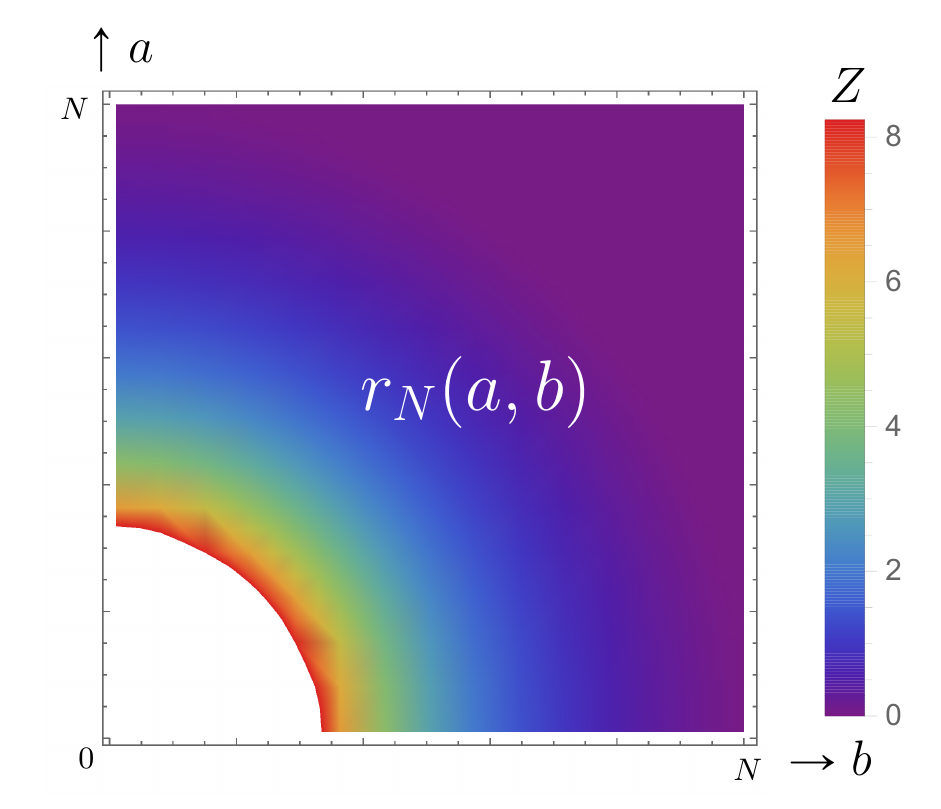}
\caption{ 
The plot shows the support of the chosen IR-regulator $r_N(a,b)$,
contained in the square $\re^+_{\leq N} \times  \re^+_{\leq N} $. 
(The white quarter of disk means a truncation of the graph
around the origin.)
\label{fig:RN}}
\end{figure} 

It turns out that for this regulator, $Z^k\tilde h_k/N^2$ indeed converges to a number $h_k$ independent of $N$, when this parameter is large. The first values are
in fact \begin{align}\label{hvalues}
h_1=\frac{\pi}{24}  (6-5 \eta ),\quad  
h_2=\frac{\pi}{48}   (8-7 \eta ),\quad 
h_3=\frac{\pi}{80}  (10-9 \eta )\,.
\end{align}
Inserting the four fixed point equations, i.e.  $\beta_{g_I^\fxpt}|_{\eta^\fxpt=\eta(\mathsf g^\fxpt)}=0$ for $I=2,4,2|2 $ and $2|4$,
one finds, on top of the Gau\ss ian trivial fixed point ($\mathsf{g}_I=0$ for each $I$),
several fixed points, tagged here with a little black diamond. The interesting one to be reproduced is expected be $-1/12$, the critical value of $\mathsf{g}_4$ 
for gravity coupled to conformal matter \cite{dFGZ}.
The latter has been identified in \cite{EichhornKoslowskiFRG}, who report $\mathsf{g}^{\fxpt}_{4}|_{\text{\tiny [EK13]}}=-0.056$ using the very same truncation\footnote{The same
authors report the possibility to obtain the exact solution in \cite{TowardsEK} by imposing it
and then solving for the regulator (in the tadpole approximation); but our aim here is to compare regulators in the same truncation.}. In contrast, we get 
\begin{align}\nonumber
\eta^\fxpt & = -0.2494, &&& \mathsf{g}_{4}^\fxpt&= -0.08791, &&&  \mathsf{g}_{2|2}^\fxpt &= -0.17415 \,,& \\ 
\label{1MMfixpt}
\mathsf{g}_{6} ^\fxpt &=
-0.003386, &&&  \mathsf{g}_{2|4}^\fxpt & =-0.02423\,. &&&  &
\end{align}%
This fixed point, obtained with the IR-regulator $r_N$ of Eq. \eqref{regulator}
gets far closer ($\mathsf{g}_{4}^\fxpt= -0.08791$) to the exact value $\mathsf{g}_{\mtr c}=-1/12=-0.083\bar3 $, which suggests that we should stick to our $r_N$ for the two-matrix models treated next.
%

\section{Two-matrix models from noncommutative geometries}\label{sec:2MM}

\subsection{Theory space} \label{sec:2MM_theoryspace}
The conventions for the coupling constants are the following, with numerical factors incorporated later. 
For \[  n_1,\ldots, n_{2t}, l_2,\ldots,k_{2t-1}, l_{2t-1} \in \Z_{> 0} \and  l_1, l_{2t},k_1, k_{2t} \in \Z_{\geq 0}\,,\]
we associate with each operator the following coupling constants:
\begin{subequations}
\begin{align}
 \ac_{2k}  &\leftrightarrow  A^{2k}   \qquad (k\geq 2 ) \\  
 \bc_{2k} &\leftrightarrow  B^{2k} \qquad (k \geq 2 )  \\ 
\cc_{n_1n_2\cdots n_{2t}}  & \leftrightarrow \underbrace{A^{n_1}B^{n_2}\cdots A^{n_{2t-1}}  B^{n_{2t}}}  \\
\dc_{l_1l_2\cdots l_{2s}| k_1k_2\cdots k_{2t}}  & \leftrightarrow \overbrace{A^{l_1}B^{l_2}\cdots B^{l_{2s}}}^{\quad AB-\text{alternating}} \otimes \overbrace{A^{k_1}B^{k_2}\cdots B^{k_{2t}}}^{AB-\text{alternating}} 
 \, 
\end{align}
\end{subequations}
Notice the alternating convention in the letters.
For coupling constants of type $\cc$ and $\dc$  (mnemonics: `combined' and `disconnected')
some care is needed. Operators can always begin with the highest power of $A$, which 
for $\cc$ is never zero---otherwise the respective operator is a pure power
of either $A$ or of $B$---in order 
to reduce the number of constants. This is due to 
the possibility to cyclicly reorder ($\sim$) the words, as these appear inside a trace.
Only the first and last parameters can be zero for $\dc$-constants. 
In order to include an odd number of powers of the letters,
the last integer is allowed to be zero. If this is so, we agree to omit the 
rightmost zero. \par 
Both conventions are illustrated with $ABA\otimes BAB \sim ABAB^0\otimes AB^2 $, whose coupling constant is 
$\dc_{1110|12}=\dc_{111|12}$. On the other hand a leftmost zero 
is important: from the definition $\dc_{l_1l_2\cdots l_t|I } \neq 
\dc_{0l_1 l_2\cdots l_t| I}$, since $A^{l_1}B^{l_2} \cdots B^{l_{2t}} \neq A^{0} B^{l_1} A^{l_2} \cdots A^{l_{2t}}  $.
Notice that $\dc$ has to satisfy a symmetry condition: $\dc_{I|I'}=\dc_{I'|I}$ for any integer multi-indices $I,I'$ 
(since the respective operators do), so we only keep one of the two.  \par 
As before, a bar on a coupling constant,  $\acb,\ldots,\dcb$, denotes its unrenormalized value,
whose $N$-dependence we do not show, for the sake of keeping the notation compact.  

\scriptsize
\begin{table}  \normalsize 
 \[ 
\begin{array}{cc|cc} \text{\small\textsc{Degree}} &
\text{\small\textsc{Operators}} & \text{ \textsc{\small Coupling constant}} &\text{ \textsc{\small Scalings}} \normalsize\\[6pt] 
\text{\small\textsc{Quadratic}} \normalsize  & \normalsize   1_N\otimes (A\cdot A) &  \frac12 \Za \ea  &  *   \\[3pt]
& 1_N\otimes (B\cdot B) & \frac12 \Zb \eb  & * \\[3pt]  \normalsize 
& A\otimes A & \frac12 \dcb_{1|1} & 1/N \\[3pt]  
& B\otimes B & \frac12\dcb_{01|01} & 1/N \\[8pt]
\text{\small\textsc{Quartic}}   \normalsize 
&  \normalsize  1_N\otimes (A\cdot A\cdot A\cdot A) & \frac{1}{4}\acb_4 & 1/N \\[3pt]
& 1_N\otimes (B\cdot B\cdot B\cdot B) & \frac{1}{4}\bcb_4  & 1/N\\[3pt]
& 1_N\otimes (A\cdot A\cdot B\cdot B) & \ccb_{22} \ea \eb  & 1/N\\[3pt]
&1_N\otimes (A\cdot B\cdot A\cdot B) & -\frac{1}{2} \ccb_{1111} \ea \eb & 1/N \\[3pt]
  &(A\cdot B)\otimes (A\cdot B) & \dcb_{11|11}  & 1/N^2\\[3pt]
  &(A\cdot A)\otimes (B\cdot B) & 2 \dcb_{2|02} \ea \eb& 1/N^2\\[3pt]
  &A\otimes (A\cdot A\cdot A) & \dcb_{1|3} \ea& 1/N^2\\[3pt]
  &A\otimes (A\cdot B\cdot B) & \dcb_{1|12} \eb& 1/N^2 \\[3pt]
  &B\otimes (A\cdot A\cdot B) & \dcb_{01|21} \ea& 1/N^2\\[3pt]
  &B\otimes (B\cdot B\cdot B) & \dcb_{01|03} \eb& 1/N^2 \\[3pt]
  &(A\cdot A)\otimes (A\cdot A) & 3 \dcb_{2|2}& 1/N^2\\[3pt]
  &(B\cdot B)\otimes (B\cdot B) & 3 \dcb_{02|02}& 1/N^2\\[3pt]
  \end{array}\]
  \caption{Quadratic and quartic operators and their coupling constants. Notice that $A\otimes B$ and $1_N\otimes A\cdot B$ (the latter appearing in the Ising-2-matrix model) are forbidden. The scalings corresponding to the quadratic connected operators are 
  in the wave function renormalization $\protect\Za{},\protect\Zb{}$, which are in each case determined by the RG-flow. \label{tab:QQops}} 
\end{table}
\begin{table} \normalsize
\[ 
 \begin{array}{cc|cll}
 &
 \text{\small\textsc{Sextic Operators}} & \text{\small \textsc{NCG coefficient}} &\text{\small \textsc{Coupling}}    & \text{\small\textsc{Scalings}}   
  \\[-1pt] & & \small \textsc{value} &  \small\textsc{constant}  & 
 \fontsize{11.1}{14.9}\selectfont  \\[6pt] \fontsize{11.1}{14.9}\selectfont 
 &1_N\otimes (A\cdot A\cdot A\cdot A\cdot A\cdot A) & \ea   &    \acb_6   & 1/N^2 \\[3pt] &
1_N\otimes (A\cdot A\cdot A\cdot A\cdot B\cdot B) & 6  \eb  &    \ccb_{42} & 1/N^2    \\[3pt] &
1_N\otimes (A\cdot A\cdot A\cdot B\cdot A\cdot B) & -6\eb    &   \ccb_{3111} & 1/N^2  \\[3pt] &
1_N\otimes (A\cdot A\cdot B\cdot A\cdot A\cdot B) & 3\eb   &   \ccb_{2121} & 1/N^2   \\[3pt] &
1_N\otimes (B\cdot B\cdot B\cdot B\cdot B\cdot B) & \eb    &     \bcb_6  & 1/N^2 \\[3pt] &
1_N\otimes (A\cdot A\cdot B\cdot B\cdot B\cdot B) & 6  \ea &   \ccb_{24} & 1/N^2  \\[3pt] &
1_N\otimes (A\cdot B\cdot B\cdot B\cdot A\cdot B) & -6 \ea &   \ccb_{1311}  & 1/N^2 \\[3pt] &
1_N\otimes (A\cdot B\cdot B\cdot A\cdot B\cdot B) & 3 \ea  &    \ccb_{1212} & 1/N^2  \\[3pt] &
 A\otimes (A\cdot A\cdot A\cdot A\cdot A) & 2 &     \dcb_{1|5} & 1/N^3 \\[3pt] &
 A\otimes (A\cdot B\cdot B\cdot B\cdot B) & 2  &  \dcb_{1|14} & 1/N^3  \\[3pt] &
 A\otimes (A\cdot A\cdot A\cdot B\cdot B) & 6 \ea \eb  &  \dcb_{1|32}  & 1/N^3  \\[3pt] &
 A\otimes (A\cdot A\cdot B\cdot A\cdot B) & -2\ea \eb  &   \dcb_{1|2111} & 1/N^3   \\[3pt] &
 B\otimes (A\cdot A\cdot A\cdot A\cdot B) & 2&  \dcb_{01|41}  & 1/N^3 \\[3pt] &
 B\otimes (A\cdot A\cdot B\cdot B\cdot B) & 6\ea \eb   &    \dcb_{1|23} & 1/N^3  \\[3pt] &
 B\otimes (A\cdot B\cdot B\cdot A\cdot B) & -2 \ea \eb    &  \dcb_{01|1211}   & 1/N^3  \\[3pt] &
 B\otimes (B\cdot B\cdot B\cdot B\cdot B) & 2  &  \dcb_{01|05}   & 1/N^3  \\[3pt] &
 (A\cdot B)\otimes (A\cdot A\cdot A\cdot B) & 8 \ea  &  \dcb_{11|31}   & 1/N^3  \\[3pt] &
 (A\cdot B)\otimes (A\cdot B\cdot B\cdot B) & 8 \eb   &    \dcb_{11|13} & 1/N^3 \\[3pt] &
 (A\cdot A)\otimes (A\cdot A\cdot B\cdot B) & 8  \eb   &     \dcb_{2|22} & 1/N^3  \\[3pt] &
 (A\cdot A)\otimes (A\cdot B\cdot A \cdot B) & -2 \eb  &  \dcb_{2|1111}  & 1/N^3  \\[3pt] &
 (A\cdot A)\otimes (A\cdot A\cdot A\cdot A) & 5 \ea &  \dcb_{2|4}   & 1/N^3 \\[3pt] &
 (A\cdot A)\otimes (B\cdot B\cdot B\cdot B) &\ea  &    \dcb_{2|04} & 1/N^3\\[3pt] &
 (B\cdot B)\otimes (A\cdot A\cdot B\cdot B) & 8  \ea &  \dcb_{02|22}   & 1/N^3 \\[3pt] &
 (B\cdot B)\otimes (A\cdot B\cdot A\cdot B) & -2\ea   &   \dcb_{02|1111}  & 1/N^3   \\[3pt] &
 (B\cdot B)\otimes (B\cdot B\cdot B\cdot B) & 5  \eb  &    \dcb_{02|04} & 1/N^3   \\[3pt] &
 (B\cdot B)\otimes (A\cdot A\cdot A\cdot A) & \eb   &    \dcb_{02|4} & 1/N^3\\[3pt] &
 (A\cdot A\cdot A)\otimes (A\cdot A\cdot A) & \frac{10}{3}  &   \dcb_{3|3} & 1/N^3 \\[3pt] &
 (A\cdot B\cdot B)\otimes (A\cdot A\cdot A) & 4 \ea \eb   &  \dcb_{12|3} & 1/N^3  \\[3pt] &
 (A\cdot A\cdot B)\otimes (A\cdot A\cdot B) & 6  &  \dcb_{21|21}  & 1/N^3 \\[3pt] &
 (B\cdot B\cdot B)\otimes (B\cdot B\cdot B) & \frac{10}{3}  &  \dcb_{03|03} & 1/N^3  \\[3pt] &
 (A\cdot A\cdot B)\otimes (B\cdot B\cdot B) & 4\ea \eb  & \dcb_{21|03}  & 1/N^3  \\[3pt] &
 (A\cdot B\cdot B)\otimes (A\cdot B\cdot B) & 6    & \dcb_{12|12} & 1/N^3
\end{array}\] 
\caption{Sextic operators, with 
their running coupling constants and scalings\label{tab:SexticOps}. 
} 
\end{table}

\fontsize{11.49}{14.9}\selectfont  

\subsection{Compatibility of the RG-flow with the Spectral Action}
\label{sec:Compatibility}
We now prove that in the double-trace truncation 
the RG-flow does not generate more operators 
than those allowed by the NCG-structure. 

\begin{proposition}\label{thm:Compatibility}
Pick a two-matrix model 
that includes finitely many single-trace operators $\TrN Q$, $ Q\in \CnN $,  
and assume that each of them appears (probably with other coefficient) in the Spectral Action for certain fuzzy $2$-dimensional geometry. Then the RG-flow generates at any order in the $FP\inv$-expansion exclusively operators that appear again, generally
with a different non-zero coefficient, in the Spectral Action $\Tr f(D)$, 
which in the worst case would require a suitable (generally higher-degree) polynomial $f$. 
\end{proposition}

\begin{proof}
Suppose that $\TrN Q$, with $Q\in \CnN$, features in the 
Spectral Action for a fuzzy geometry.
First, we show that the NC-polynomial 
$(\partial ^A \circ \partial^A \TrN Q )^{\star k} \in \A_2$ appears for each $k\in \Z_{\geq 1 }$ in the Spectral Action for the same fuzzy geometry---we argue later for the most general case containing mixed derivatives.   
From \eqref{undirected}, $\partial ^A \circ \partial^A \TrN Q $ contains two powers of $A$ less than 
the original NC-polynomial $Q$, which,
since it appears in the Spectral Action,
has an even degree $\deg_A(Q),\deg_B(Q) \in 2 \Z_{\geq0}$.
Therefore, so does the double derivative,
and by Lemmas \ref{thm:coeffssingle} and \ref{thm:coeffsbitrace}, 
$\partial ^A \circ \partial^A \TrN Q $ 
appears in the Spectral Action. 
The condition holds for any power 
$(\partial ^A \circ \partial^A \TrN Q )^{\star k}$
since the even-degree conditions are still
satisfied and therefore each monomial $w_1\otimes w_2$ or $w_1\totimes w_2$
in $(\partial ^A \circ \partial^A \TrN Q )^{\star k}$
appears in the Spectral Action $\Tr f(D)$ for a polynomial $f$ 
with non-zero coefficient in degree $m$, being $m=m(w_1,w_2)$ given by Lemma \ref{thm:coeffsbitrace}.\par 
The argument is still true for 
different NC-polynomials $Q_i$ appearing in the original Spectral Action
and the even-degree argument holds not only for powers of double derivatives of these, 
but can be clearly extended to 
\begin{align*}
\sum_{j_1,j_2,\ldots,j_r} 
(\partial ^{X_i} \circ \partial^{X_{j_1}} \TrN Q_1 )\star
(\partial ^{X_{j_1}} \circ \partial^{X_{j_2}} \TrN Q_2 ) \star\\ \qquad\qquad\cdots 
\star (\partial ^{X_{j_{r}}} \circ \partial^{X_i} \TrN Q_{r+1})
\end{align*}
since in the product the same derivative $\partial^{X_k}$ ($X_k \in \{A,B\}$) appears 
an even number of times. All the NC-polynomials generated by the supertrace
in the FRGE are of this form, and having even degree
in both matrices, the argument above leads in this case to the result.
\end{proof}
Proposition \ref{thm:Compatibility} says that 
if the bare action would contain only single-trace
operators, then all the operators that 
the RG-flow generates, including double-trace operators,
are compatible with the structure of fuzzy geometries.
This implies that for a realistic bare action,
which includes only double-trace operators as dictated by the Spectral Action for a fuzzy $2$-dimensional geometry, the RG-flow generates (up to triple traces excluded in the truncation) exclusively NCG-compatible operators. 
Both structures can therefore be seen as highly compatible.

\subsection{The truncated effective action}\label{sec:2MM_themodel}

The model we adopt includes all the operators 
appearing in the Spectral Action for fuzzy geometries 
computed in \cite{SAfuzzy} up to sixth degree.
For 2-dimensional fuzzy geometries, \[
\Gamma_N[A,B]=
\Tr_{\A_{2}} \Big \{ 1_N\otimes P(A,B)  + \sum_\alpha \Psi_\alpha(A,B) \otimes \Upsilon_\alpha (A,B) \Big\}, \qquad 
\]
where $ P, \Psi_\alpha,\Upsilon_\alpha \in \C_{\langleb 2\rangleb} =\C \langleb A,B\rangleb $ are given,
degree by degree by Tables \ref{tab:QQops} and \ref{tab:SexticOps}. There, 
a dot means the usual matrix product. 
The number of running coupling constants 
turns out to depend not only on the dimension, but 
also on the signature of the fuzzy geometry, see Table \ref{tab:numofOps}.
We stress that for the quartic and quadratic 
operators we do take the coupling constants with the symmetry factors 
and signs present in the NCG-action. For 
the sextic operators we drop the numerical normalization factors,
in order to avoid rational coefficients.
 \begin{table}[H]\small
\begin{tabular}{ccccc}
 \textsc{Geometry} & \textsc{Signature} & \textsc{KO-dim.} & $\#$ \textsc{Operators} & $\#$  \textsc{Operators}  \\
  &  &  &  \textsc{in the RG-flow} & \textsc{with duality}\\
`Double time' & $(+,+)$& 6& 48 & 26 \\
  Lorentzian &  $(+,-)$ &  0 &  41 & ---\\
 Riemannian & $(-,-)$ & 2 & 34 & 19
\end{tabular}
\vspace{8pt}
\caption{Number of operators for each signature. There is no
duality for the (1,1) geometry. \label{tab:numofOps}}
\end{table}%
\subsection{The $\beta$-functions}
We present now the set of equations satisfied 
by the fixed points $\mathsf{g}_{\!\balita\!\!\!}^\fxpt=\{\ac^\fxpt_{\balita} ,\bc^\fxpt_{\balita},\cc^\fxpt_{\balita},\dc_{\!\!\balita \!\!|\!\!\balita\!\!}^\fxpt\}$,
 determined
by the vanishing of all $\beta$-functions $\beta_{\!\balita\!\!\!}=\partial_t \mathsf{g}_{\!\balita\!\!}$. 
We recall that \[h_k=\lim _{N\to \infty} \frac{1}{N^2} \sum_{a,b,c,d=1}^\Lambda \frac{(\partial_t R_N)_{ab;cd} }{ P^{(k+1)}_{ab;cd}}\,,\]
which are real numbers in the case of the quadratic regulator of Section \ref{sec:CrossCheck} and whose whose values are 
given by eqs. \eqref{hvalues}.  Next result is more transparent 
if one does not specify these coefficients yet (and holds for any $R_N$ verifying that these $h_k$ are all independent of $N$). 
\begin{theorem}\label{thm:FixedPoints}
Assuming $\Za=\Zb=:Z$, to second order in the $FP\inv$ expansion $(h_r=0, r\geq 3)$,
in the double-trace and sixth-degree truncation, the $\beta$-functions of the 2-matrix model corresponding to a 2-dimensional fuzzy geometry with signature $\diag (\ea,\eb)$ 
are given in the large-$N$ limit by the following blocks of equations:\par 
\noindent

First, the degree-2 operators yield the anomalous dimension and following relations:
\allowdisplaybreaks[2]
 \begin{align*}
 2 h_{1} (\ac_{4}+\cc_{22}+2 \dc_{2|02}+6 \dc_{2|2})&=\etaa \\
 2 h_{1} (\bc_{4}+\cc_{22}+6 \dc_{02|02}+2 \dc_{2|02})&=\etab \\
 -h_{1} [\ea (\ac_{4}-\cc_{1111})+2 \dc_{1|12}+6 \dc_{1|3}]+\dc_{1|1} (\eta +1)&=\beta( \dc_{1|1} ) \\
 -h_{1} [\eb (\bc_{4}-\cc_{1111})+6 \dc_{01|03}+2 \dc_{01|21}]+\dc_{01|01} (\eta +1)&=\beta( \dc_{01|01} )   \end{align*} 
The next block encompasses the connected quartic 
 couplings: 
 \begin{align*}
 h_{2} \big(4 \ac_{4}^2+4 \cc_{22}^2 \big)+\ac_{4} (2 \eta +1) & \\   -h_{1} (24 \ac_{6} \ea+4 \cc_{42} \eb+4 \dc_{02|4} \eb+4 \dc_{2|4} \ea)&= \beta(\ac_{4} ) \\[1.5ex]
 h_{2} \big(4 \bc_{4}^2+4 \cc_{22}^2 \big)+\bc_{4} (2 \eta +1)  & \\ -h_{1} (24 \bc_{6} \eb+4 \cc_{24} \ea+4 \dc_{02|04} \eb+4 \dc_{2|04} \ea)&= \beta(\bc_{4} ) \\[1.5ex]
 -h_{1} \big(2 \ea\cc_{1212}  +\eb 2 \cc_{2121}  +3\ea   \cc_{24} +3 \eb \cc_{42}+\ea \dc_{02|22} +\eb\dc_{2|22}  \big) & \\ 
   +
 h_{2} \big(2 \ac_{4} \cc_{22}  +2 \bc_{4} \cc_{22}  +2\ea \eb \cc_{1111}^2 +2\ea \eb \cc_{22}^2 \big)+ \cc_{22} (2 \eta +1) &=\beta (\cc_{22}  ) \\[1.5ex]
 8 \ea \eb\cc_{1111} \cc_{22}  h_{2}+\cc_{1111} (2 \eta +1)  & \\ +h_{1} \big(4  \ea\cc_{1311} +4 \eb \cc_{3111} +2\ea \dc_{02|1111}  +2 \eb \dc_{2|1111} \big)&=\beta (\cc_{1111}
    )
    \end{align*} 
 The $\beta$-functions for the connected sextic 
 couplings are 
       \begin{align*}
       2 h_{2} (6 \ac_{4} \ac_{6}+\ea \eb\cc_{22} \cc_{42} )+\ac_{6} (3 \eta +2)&= \beta(\ac_{6} ) \\[1.5ex]
 2 h_{2} (6 \bc_{4} \bc_{6}+\ea \eb\cc_{22} \cc_{24} )+\bc_{6} (3 \eta +2)&= \beta(\bc_{6} ) \\[1.5ex]
 4 h_{2} \{\ac_{4} \cc_{3111}+\ea \eb [\cc_{22} (\cc_{1311}+2 \cc_{3111})& \\-\cc_{1111} (2 \cc_{2121}  +\cc_{42})]\}+\cc_{3111} (3 \eta +2)&=\beta (\cc_{3111}  ) \\[1.5ex]
 2 h_{2} [2 \ac_{4} \cc_{2121}+\ea \eb (-2 \cc_{1111} \cc_{3111}& \\+4 \cc_{2121} \cc_{22}+\cc_{22} \cc_{24})]+\cc_{2121} (3 \eta +2)&=\beta (\cc_{2121}  ) \\[1.5ex]
 2 h_{2} [\ac_{4} \cc_{24}+3 \bc_{4} \cc_{24}+2 \ea \eb (\cc_{22} (3 \bc_{6}+\cc_{2121}+\cc_{24}+\cc_{42})& \\-\cc_{1111} \cc_{1311})]+\cc_{24} (3 \eta +2)&=\beta (\cc_{24}  ) \\[1.5ex]
 4 h_{2} \{\bc_{4} \cc_{1311}+\ea \eb [\cc_{22} (2 \cc_{1311}+\cc_{3111})& \\-\cc_{1111} (2 \cc_{1212}+\cc_{24})]\}+\cc_{1311} (3 \eta +2)&=\beta (\cc_{1311}  ) \\[1.5ex]
 2 h_{2} [ 2 \bc_{4} \cc_{1212}+\ea \eb (\cc_{22} (4 \cc_{1212}+\cc_{42})& \\-2 \cc_{1111} \cc_{1311})]+\cc_{1212} (3 \eta +2)&=\beta (\cc_{1212}  ) \\[1.5ex]
 2 h_{2} [3 \ac_{4} \cc_{42}+2 \ea \eb (3 \ac_{6} \cc_{22}-\cc_{1111} \cc_{3111}+\cc_{1212} \cc_{22}& \\+\cc_{22} \cc_{24}+\cc_{22} \cc_{42})+\bc_{4} \cc_{42}]+\cc_{42} (3 \eta +2)&=\beta (\cc_{42}  ) 
   \end{align*}
   %
And a last block of $\beta$-functions
for the disconnected couplings is located in Appendix \ref{sec:DiscoFixedPt}.
\end{theorem}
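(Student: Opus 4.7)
The plan is to execute the $FP^{-1}$ expansion of Section~\ref{sec:FP} on the specific truncation Ansatz of Section~\ref{sec:2MM_themodel} and extract the coefficients of each retained operator. Concretely, I would start by writing $\Gamma_N[A,B]$ as the sum dictated by Tables~\ref{tab:QQops}--\ref{tab:SexticOps} (with barred couplings) and compute its twisted Hessian $\Hess_\sigma^\tau \Gamma_N[X]$, which is a $2\times 2$ matrix with entries in $\A_{2,N}$. The four entries are obtained entry-by-entry by applying $\partial^A\circ\partial^A$, $\partial^A\circ\partial^B$, $\partial^B\circ\partial^A$, $\partial^B\circ\partial^B$ (suitably dressed by the signature $\sigma=\diag(\ea,\eb)$) to each operator in the action, invoking Claim~\ref{thm:NCLap} on single-trace terms and the Leibniz-type rule \eqref{HessianLeibn} on double-trace terms. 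With $Z_A=Z_B=Z$ the field-independent part $P$ is a scalar multiple of the unit $\mathbf 1$, so $P$ and $F[X]=\Hess_\sigma^\tau\Gamma_N-P\mathbf 1$ commute in the $\star$-algebra, as required by \eqref{CommuteFP}.

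Next, I would carry out the Neumann series \eqref{Neumann} to second order (dropping $h_k$ for $k\geq 3$, consistently with the statement). The two substantial algebraic inputs are the multiplication rules of Proposition~\ref{productrules} and Proposition~\ref{thm:rightproductformula}, which control how $\otimes$ and $\totimes$ combine under $\star$, together with associativity (Proposition~\ref{thm:associativity}). After forming $F^{\star 1}$ and $F^{\star 2}$, I would take $\Tr_n\otimes\TrN^{\totimes 2}$ as in \eqref{FPexpansion}, throwing away contributions which after supertracing fall into the constant part of the algebra (by \eqref{STrdeg0}) and, crucially, truncating back to the retained operator basis. Proposition~\ref{thm:Compatibility} guarantees that no operators outside the NCG-compatible class are ever generated, which is what makes the truncation internally consistent: every contribution from $F^{\star 2}$ is either one of the operators in the Ansatz or a triple-trace, the latter being projected out.

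Reading off the quantum fluctuations gives linear-in-time equations for the barred couplings $\bar{\mathsf g}_I$; substituting the scalings of Tables~\ref{tab:QQops}--\ref{tab:SexticOps} (i.e.\ $\bar{\mathsf g}_I = Z^{a_I} N^{-b_I}\mathsf g_I$) converts these into the $\beta$-functions for the renormalized couplings. The scalings themselves are fixed by the requirement that each $\beta$-function remain finite as $N\to\infty$, exactly as in \eqref{1MMeta}--\eqref{1MMbeta24} of the single-matrix cross-check. The anomalous dimensions $\etaa,\etab$ arise from the wave-function renormalization of the $1_N\otimes A^2$ and $1_N\otimes B^2$ operators, and the assumption $Z_A=Z_B$ makes $\etaa=\etab=:\eta$, which is what enters each $\beta_I$.

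The only nontrivial obstacle is bookkeeping: with $48$ operators in the $(+,+)$ signature, the product $F^{\star 2}$ generates hundreds of monomials, many of which must be cyclically reordered and then matched against the truncation basis. Two points help keep this under control. First, the matrix structure of $\Hess_\sigma^\tau$ means that the supertrace picks up only diagonal $\Mn$-entries of $F^{\star k}$, so one only needs the $(1,1)$ and $(2,2)$ blocks (as illustrated in Example~\ref{ex:MultHessians}). Second, the parity argument of Lemmas~\ref{thm:coeffssingle}--\ref{thm:coeffsbitrace} together with Proposition~\ref{thm:Compatibility} can be used to rule out a priori whole classes of combinations. After this bookkeeping is done, each listed equation $\beta(\mathsf g_I)$ in the statement is simply the coefficient of the corresponding operator in $\partial_t\Gamma_N[X]$, and the disconnected block in Appendix~\ref{sec:DiscoFixedPt} is obtained by exactly the same procedure applied to the $\dc$-operators. \qed
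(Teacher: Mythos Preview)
Your proposal is correct and follows essentially the same route as the paper's proof: compute the NC-Hessians of all operators via Claim~\ref{thm:NCLap} and the Leibniz rule \eqref{HessianLeibn}, form $F$, expand to second order in the $FP^{-1}$ series using the $\star$-product rules, take the supertrace, and then read off the $\beta$-functions after inserting the scalings of Tables~\ref{tab:QQops}--\ref{tab:SexticOps}. The paper organizes the first-order step through the NC-Laplacian (the $M_2$-trace of $\Hess_\sigma$) rather than the full Hessian, but this is just a presentational choice since at order $h_1$ only the diagonal blocks survive the supertrace; at order $h_2$ both you and the paper keep the full $2\times2$ structure of $F$ before $\star$-squaring.
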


\begin{proof}
We address first the first order in the $FP\inv$-expansion. 
This part of the proof consists on the following steps: \par
 \textit{$\balita$ Step 1.} 
 The computation of the second-order derivatives
 of all the operators, determine  
 the NC-Hessians to insert in the 
 $FP\inv$-expansion.  We now give the NC-Hessians computed using Claim \ref{thm:NCLap}
as well as their trace, the NC-Laplacian using Eq. \eqref{NCLap_prod}. We write 
some of them down up in Table \ref{tab:Hess24degr} to quartic operators;
those omitted might be obtained by the exchange $A\leftrightarrow B, \ea \leftrightarrow \eb$ (and adjusting the matrix structure).
\allowdisplaybreaks[2]
 
\begin{table}[H]
\begin{align*}
 \begin{array}{cr} \small \textsc{Operator} & \small\phantom{}\textsc{Its } \Hess_\sigma  \\[6pt]   \hline  
  & \\[-1ex]  
 \Tr(A^4) & \left(
\begin{array}{cc}
 4 \ea ({1}\otimes A^2+A^2\otimes {1}+A\otimes A) & 0 \\
 0 & 0 \\
\end{array}
\right) \\ [4ex]
 \Tr^2B & \left(
\begin{array}{cc}
 0 & 0 \\
 0 & 2 \eb {1}\totimes {1} \\
\end{array}
\right) \\ [4ex]
 \Tr(A B A B) & \left(
\begin{array}{cc}
 2 \ea B\otimes B & 2 ({1}\otimes BA+AB\otimes {1}) \\
 2 ({1}\otimes AB+BA\otimes {1}) & 2 \eb A\otimes A \\
\end{array}
\right) \\ [4ex]
 \Tr(A) \Tr(A^3) & \left(
\begin{array}{cc}
 3 \ea [\Tr(A) (A\otimes {1}+{1}\otimes A) & 0 \\ +{1}\totimes A^2+A^2\totimes {1}]  & \\[1ex]
 0 & 0 \\
\end{array}
\right) \\ [4ex]
 \Tr A^2 \Tr B^2 & \left(
\begin{array}{cc}
 2 \ea {1}\otimes {1} \Tr B^2 & 4 A\totimes B \\
 4 B\totimes A & 2 \eb {1}\otimes {1} \Tr A^2 \\
\end{array}
\right) \\ [4ex]
 \Tr^2 A^2 & \left(
\begin{array}{cc}
 4 \ea ({1}\otimes {1} \Tr A^2+2 A\totimes A) & 0 \\
 0 & 0 \\
\end{array}
\right) \\ [4ex]
\end{array}
\end{align*} 
\caption{Some Hessians of second and fourth order operators \label{tab:Hess24degr}}
\end{table}
\noindent
The expressions for $\Hess_\sigma \Tr (A A B B)$ 
and $ \Hess_\sigma \{\Tr(A) \Tr(A B B)\}$, the quartic operators missing
in Table \ref{tab:Hess24degr} were already given in Example \ref{ex:MultHessians} 
and show that 
the complexity rapidly grows. For sake of readability, the bulkier sixth-degree operators completing 
the running 34 or 48 involved in the flow, 
are located in Appendix \ref{sec:NCLap}.

\textit{$\balita$ Step 2.}  
To first order, one computes 
the regularized NC-Laplacian $F = \ea\Fa + \eb\Fb$ 
of the effective action 
in terms of \[\Fa=(\partial^A \circ \partial^A) \Gamma_N [A,B]-(\partial^A \circ \partial^A) \Gamma_N [A,B]\big|_{A=B=0}\]
and of \[\Fb=(\partial^B \circ \partial^B) \Gamma_N [A,B]-(\partial^B \circ \partial^B) \Gamma_N [A,B]\big|_{A=B=0}\,.\] 
\par 
\textit{$\balita$ Step 3.} One takes the double traces of the resulting
expression. The $h_1$-terms, i.e. the trace $\TrN^{\otimes 2}$ 
of the NC-Laplacian, read as in Appendix \ref{sec:IntermediateSteps}.

From that expression one can deduce some of the quantum fluctuations. In the large-$N$ limit, 
according to the scalings given in Tables \ref{tab:QQops} and \ref{tab:SexticOps}, the matching of the $h_1$ coefficients in the fixed point equations given in the statement can be verified.  
The scaling $N^{-m(Q)}$ of the coupling constant 
 $\mathsf{g}_{Q}$ that corresponds with the operator $Q\in \Cfree{2}\otimes\Cfree{2}$
determines the coefficient
of the form  \[ \Big( \frac{\deg_A (Q)+\deg_B(Q)}{2}\eta + m(Q)\Big)\times  \mathsf{g}_{Q}\,, \]
appearing in the $\beta_Q$-function.

We now sketch the second order: Having computed in Step 1 
the 48 NC-Hessians, one $\star$-squares the $(48-2)$ 
Hessians appearing in $F$ (the two subtracted operators
are $A^2$ and $B^2$ whose Hessian is absorbed in $P$).  
The $\sim 10^3$ matrices of size $2\times 2$ with NC-polynomial entries are omitted, but each of these 
was computed as in Example \ref{ex:MultHessians}. 
Then, one traces $F^{\star 2}$ in superspace to collect quantum fluctuations 
for each operator. Taking the large-$N$ limit of these leads to the results.  
\end{proof}

\subsection{Dualities}\label{sec:dualities}
It is convenient to look for dual solutions 
while aiming at determining the fixed points from the 
vanishing $\beta$-functions. \review{The duality is meant in the following sense.
To reduce the number of fixed-point equations, 
one makes some of them redundant by imposing the
$A \leftrightarrow B$ duality for couples of operators that allow it.
Thus, e.g., }
\[
\TrN (A^3BAB) \leftrightarrow \TrN (B^3ABA).
\]
is reflected in the duality $ \cc_{1311}  \leftrightarrow  \cc_{3111}$.
Imposing dualities does not halve the number of 
running constants, since some operators, e.g., $\TrN(ABAB)$, are invariant
under the $A\leftrightarrow B$ exchange (self-dual). With this in mind, we have
the following list:
\begin{remark} \label{thm:duality}
For the geometries $(2,0)$ and $(0,2)$ a duality in the 
effective action is manifest. Therefore, the $\beta$-functions
together with the equations for the anomalous dimensions $\etab$ and $\etaa$,
are invariant under the following exchange for the (2,0)-geometry: 
\begin{salign}
 \etab & \leftrightarrow    \etaa,&&&   \dc_{1|1} & \leftrightarrow  \dc_{01|01},&&&   
  \dc_{01|03} & \leftrightarrow  \dc_{1|3}, \\ \dc_{01|21} & \leftrightarrow  \dc_{1|12},&&&   \dc_{02|02} & \leftrightarrow  \dc_{2|2},&&&   \bc_{6} & \leftrightarrow  \ac_{6},&&&   
    \\ \cc_{1311} & \leftrightarrow  \cc_{3111},&&&   \cc_{1212} & \leftrightarrow  \cc_{2121},&&&   \dc_{01|05} & \leftrightarrow  \dc_{1|5},  \\ \dc_{01|41} & \leftrightarrow  \dc_{1|14},&&&   \dc_{01|1211} & \leftrightarrow  \dc_{1|2111},&&&   
  \dc_{11|13} & \leftrightarrow  \dc_{11|31},  \\ 
  \dc_{02|1111} & \leftrightarrow  \dc_{2|1111},&&&   \dc_{02|04} & \leftrightarrow  \dc_{2|4},&&&   \dc_{03|03} & \leftrightarrow  \dc_{3|3},   \\
  \dc_{21|03} & \leftrightarrow  \dc_{12|3},&&&   \dc_{12|12} & \leftrightarrow  \dc_{21|21}\,.    
  &&&   \dc_{01|23} & \leftrightarrow  \dc_{1|32},   \\
  \bc_{4} & \leftrightarrow  \ac_{4} ,&&&   \cc_{24} & \leftrightarrow  \cc_{42},&&&
   \dc_{02|4} &  \leftrightarrow  \dc_{2|04},
   \\
 \dc_{02|22} & \leftrightarrow  \dc_{2|22},   &&&   
  \end{salign}
  For the (0,2)-geometry, one excludes from this list
  the exchanges implying 
 \begin{align}\label{leftout}\dc_{01|01},\,\dc_{01|03},\,\dc_{01|21},\,\dc_{01|05},\,\dc_{01|23},\,\dc_{01|41} \and \dc_{01|1211}\,.\end{align}
\end{remark}
\begin{proof}
For both geometries $\ea=\eb$ holds. 
The duality is straightforwardly verified by inspecting the 48 equations.
For the $(0,2)$-geometry, $\TrN B=0$ which means that we make $\dc_{01|I}\equiv 0$,
where $I$ stands for any index combination, which is the list \eqref{leftout}.
(Also $\TrN A=0$ but excluding all $\dc_{01|\!\balita\!\!}$'s automatically excludes all
$\dc_{1|\!\balita\!\!} $'s.)
\end{proof}

\subsection{Methods and Results for the geometry $(0,2)$, or $(-,-)$}\label{sec:Geom02}

The fixed-point equations 
are the simultaneous zeros of all the $\beta$-functions listed in Theorem 
\ref{thm:FixedPoints} (and the two first equations there 
for the anomalous dimension). These are the eigenvalues of the stability matrix
\begin{align}
 -\text{Eig}\bigg\{ \bigg(\dervpar{ \beta_I{(\eta^\fxpt, \mathsf g_{\!\!\balita\!\!\!}})}{\mathsf g_{I'}}\bigg)\bigg|_{\mathsf g^\fxpt}\bigg\}_{I,I'}\,,
\label{stabilitymatrix}
\end{align} where $I,I'$ run over the flowing coupling constants. 
While analyzing the solutions:  \review{
\begin{itemize}\setlength\itemsep{.44em}
 \itemb We exclude the Gau\ss ian fixed point $\mathsf g_{\!\!\balita\!\!\!}^\fxpt=0$
with critical exponents determined by the scalings.
\itemb We report fixed points with at least one non-vanishing
connected coupling ($\ac,\bc,\cc$ types; but solutions with only nonvanishing 
$\dc$-type do exist). 
\itemb We do not report solutions that lead to imaginary 
critical exponents. That is, the reported solutions
correspond all to solely real eigenvalues of 
the stability matrix \eeqref{stabilitymatrix}. 
\itemb We only report solutions with coupling constants inside 
the $|\mathsf{g}_{\!\!\balita\!\!}|\leq 1$ hypercube. This 
restriction is due to our approach, which uses the $FP\inv$-expansion.
Without this restriction, the operators kept in the truncation 
would be less important than those we dropped. 
\end{itemize} }

Under these criteria, from the \review{$\sim 600$}
fixed point solutions for the $(0,2)$ geometry, we obtain a unique solution 
with a single positive eigenvalue, or in other words, a single relevant direction:
\begin{align}
\theta = +0.2749  \label{CritExp02} \,.
\end{align}
\review{The values of the coupling constants corresponding to it read}\reviewinequation{
\begin{salign} 
 \eta ^{\fxpt}&=  -0.3625  &&& \ac_{4}^{\fxpt}&=  -0.07972  &&& \ac_{6}^{\fxpt}&=  0  &&& \cc_{1111}^{\fxpt}&= 0 \\
 \cc_{2121}^{\fxpt}&=  0  &&& \cc_{22}^{\fxpt}&=  -0.03986  &&& \cc_{3111}^{\fxpt}&=  0  &&& \cc_{42}^{\fxpt}&=  0 \\ \dc_{2|02}^{\fxpt}&=  -0.01337  &&& \dc_{2|04}^{\fxpt}&=  0 &&&
   \dc_{2|1111}^{\fxpt}&=  0  &&& \dc_{12|3}^{\fxpt}&=  0  \\ \dc_{11|11}^{\fxpt}&=  -0.004201   &&& \dc_{2|4}^{\fxpt}&=  0  &&&  \dc_{2|22}^{\fxpt}&=  0 &&& \dc_{11|31}^{\fxpt}&=  0 \\ \dc_{2|2}^{\fxpt}&=  -0.005156 &&& \dc_{21|21}^{\fxpt}&=  0 &&& \dc_{3|3}^{\fxpt}&=  0\,.
\end{salign}
} 

\subsection{Results for the geometry $(2,0)$, or $(+,+)$} \label{sec:Geom20}
We report now the fixed points under the 
same criteria listed for the $(0,2)$ geometry (Sec. \ref{sec:Geom02}),
which restricts the \review{$\sim 600$ real solutions} to a few we now describe. 
If we further impose that the solution has precisely one relevant direction, then 
that critical exponent is unique and given by
\begin{align}
 \theta=+ 0.2749  \label{CritExp20}
\end{align}
\review{and the corresponding fixed point has the coupling constants:}
\reviewinequation{
\begin{salign}
 \eta ^{\fxpt} &= -0.3625&&&\ac_{4}^{\fxpt}&= -0.07972&&&\ac_{6}^{\fxpt}&= 0&&&\cc_{1111}^{\fxpt}&= 0  \\
   \cc_{22}^{\fxpt}&= -0.03986 &&&  \cc_{2121}^{\fxpt}&= 0 &&&\cc_{3111}^{\fxpt}&= 0 &&&
    \cc_{42}^{\fxpt}&= 0 
     \\   \dc_{2|02}^{\fxpt}&= -0.01337  &&& \dc_{2|04}^{\fxpt}&= 0 &&&
     \dc_{2|1111}^{\fxpt}&= 0 &&& \dc_{1|5}^{\fxpt}&= 0 \\ 
     \dc_{2|2}^{\fxpt}&= -0.005156&&&\dc_{2|22}^{\fxpt}&= 0 &&&    \dc_{2|4}^{\fxpt}&= 0&&&\dc_{12|3}^{\fxpt}&= 0 
   \\ \dc_{1|12}^{\fxpt}&= -0.00985&&& \dc_{3|3}^{\fxpt}&= 0 &&&  \dc_{21|21}^{\fxpt}&= 0 &&&\dc_{1|14}^{\fxpt}&= 0\\  
    \dc_{1|3}^{\fxpt}&= -0.00985&&& \dc_{1|2111}^{\fxpt}&= 0 &&&\dc_{1|32}^{\fxpt}&= 0&&& & \\ 
    \dc_{01|01}^{\fxpt}&= -0.2543&&&\dc_{11|11}^{\fxpt}&= -0.004201 &&& 
    \dc_{11|31}^{\fxpt}&= 0\,.  & &&& & \numerada 
\label{uniqueCC20} 
  \end{salign}}
Solutions with more connected non-vanishing
coupling constants exist (e.g., $\cc_{1111}\neq 0$ relevant for the $ABAB$-model\footnote{This model has
recently been addressed in \cite{ABAB}. See also the issue risen in \cite{CommentABAB} about one of the $\beta$-functions obtained
in \cite{ABAB}.} \cite{KazakovABAB,Ambjorn:ABAB,Ambjorn:ABAB2},
but they require two relevant directions (in this truncation). These are located in 
\ref{tab:CritCoupl20} in Appendix \ref{sec:Criteriondiffers}.  
\review{In particular, the agreement with the 
result of \cite{KazakovABAB} for the $A^4$-coupling is remarkable: 
\begin{align}\label{almostexact}
\ac_{4}^{\fxpt} = -0.07972  \approx   -\frac{1}{4\pi}  = (\ac_{4}^{\fxpt})|_{\text{[KZJ99]}} (=-0.079577...)
\end{align}
if one takes into account the flipped sign convention for $(\ac_{4}^{\fxpt})|_{\text{[KZJ99]}}$ (called $-\alpha$ there).
}Also notice that
\[
   \cc_{22}^{\fxpt} = -0.03986 \approx  -\frac1{8\pi} (= -0.039788...)
\]

%
%
%
%
%
%

\section{Conclusion and discussion}\label{sec:Conclusion}

Fuzzy geometry has elsewhere \cite{Glaser:2016epw,BarrettDruceGlaser} motivated
intrinsically random noncommutative geometric, numerical methods and statistical tools. Here, we use the fact that random NCG is in line with 
(Euclidean) QFT in 
order to explore fuzzy geometries via
 the Functional Renormalization Group for 
the multimatrix models these boil down to.

Using differential operators based on abstract algebra,
noncommutative calculus was useful to describe
the Functional Renormalization Group for general multimatrix models.
This paper focused on those derived from fuzzy spectral triples, which therefore allow both Hermitian and anti-Hermitian random matrices. We introduced a NC-Hessian---a non-symmetric(!) matrix of 
noncommutative derivatives---and a NC-Laplacian\footnote{These differential operators are treated more in detail in a future
paper.} on the free algebra. The latter is given by
\[
\nabla^2= \nabla 
\circ \Day = \text{noncommutative divergence $\circ$ cyclic gradient},
\]  %
wherein the noncommutative divergence is the operator \
\[\nabla \, Q = \sum_{i=1}^n e_i\partial^{X_i} Q_i  \text{ for } Q=(Q_1,\ldots, Q_n)\in \Cn^n\] and the cyclic gradient $\Day\, \Phi  = (\Day^{X_1} \Phi,\ldots, \Day^{X_n} \Phi)$
for $\Phi\in \Cn=\C \langleb X_1,\ldots,X_n \rangleb $. 
The NC-Hessian  governs the exact
Wetterich-Morris FRGE 
and $\nabla^2$ does so in the tadpole approximation, where
it has the form of a noncommutative heat equation (Prop. \ref{thm:tadpole}).
\review{One advantage of the present analysis is the ability to 
drop the assumption made by \cite{EichhornKoslowskiFRG} 
that $P$ commutes with $F[X]$---supposed there to hold
in a certain approximation scheme.} This turns out 
to be a consequence of the structure of the free algebra. \par 

The coordinate-free setting common to algebrists speeds up computations and 
facilitates writing proofs, which can be taken as a tool for more mathematical works implying the functional RG. Introducing that elegant language was ``priced'' at introducing $\totimes$, a new (twisted) product additional to Kronecker's. 
In fact, 
\begin{quote}
{the RG-flow for $n$-matrix models takes place in the algebra $M_n( \A_{n,N} )$ of matrices over $\A_{n,N} = (\CnN)^{\otimes 2} \oplus (\CnN)^{\totimes 2}$ with $\star$-product\footnote{This similar \textit{notation} is otherwise, also in noncommutative field theory, a known product. But ours here  does \textbf{not} refer to Moyal product.} given by Proposition \ref{thm:rightproductformula},} 
 \end{quote}
where $\CnN$ is the free algebra generated by $n$ matrices of size 
$N\times N$, and the RG-time\footnote{The right 
RG-parameter was not discovered here, but it was long known since \cite{Brezin:1992yc,Carles}} is $\log N$. Importantly, this $\star$-multiplication 
is not chosen by us here just because it satisfies nice mathematical properties, rather  
the FRGE \textit{dictates} it. In that sense, to present the proof of a 
``standard result'' sometimes pays off. 
Since many of the operators that appear in free algebra 
were originated in matrix theory (see \cite{GuionnetFreeAn,VoiculescuFreeQ,Rota}),
we remark that the $\star$-product given in Proposition \ref{thm:rightproductformula} 
(which, concretely for matrices had to be proven here) 
can be taken as a definition in abstract algebra, as no reference 
to matrix size or entries is made, replacing the trace $\TrN$ by a state 
$\varphi:\A_n\to \C$, whose cyclicity renders $\star$ associative (by Prop. \ref{thm:associativity}):
\begin{subequations}
\begin{align}
(U \totimes W) \star ( P\totimes Q)  &=  PU \totimes WQ \,, \\
(U \otimes W) \star ( P\totimes Q)  &=U  \otimes PWQ  \,, \\
(U \totimes W) \star ( P\otimes Q)  &= WPU  \otimes Q \,, \\
(U \otimes W) \star ( P\otimes Q) &= \varphi (WP) U\otimes Q\,.
\end{align}\end{subequations}
Similarly, the ``obvious'' product in Proposition \ref{productrules},
which resembles (only for monomials though) matrix multiplication 
on $M_2(\Cn)$, suggests
that the algebra $M_n ( M_2( \Cn  ))$ could be relevant\footnote{If the 
FRGE were not a second-order NC-differential equation, the number 2 would not appear in $M_n ( M_2( \Cn  ))$. 
The number 2 should not be confused with the two of 2-matrix models, 
or the number of products of traces allowed here, which is also two.} for an additional description of the FRGE, if one  
trades the product $\star$ by $\times$ using relations like \eeqref{asdfasdf}.
\par

Most of our findings rely on the algebraic structure of the RG-flow
but important part of the conclusion are the critical exponents 
for each geometry. For matrix models corresponding to 2-dimensional fuzzy geometries, the $\beta$-functions were extracted (Thm. \ref{thm:FixedPoints}) and 
the fixed point equations were numerically solved. The critical exponents found here---for the $(0,2)$ and $(2,0)$ geometries $\theta=+0.27491 $---were obtained from all the fixed point solutions as the 
unique solution that featured a single relevant direction. The fixed-point coupling constants do require a matrix mix, e.g., the coupling $\cc_{22}$ corresponding to the operator $ABBA$ is non-vanishing (see App. \ref{sec:Criteriondiffers}, where we report fixed points with two relevant directions for where more non-vanishing mixed operators in the flow, e.g., $ABAB$). \par 

\review{It is also remarkable that the operators that appear here in the $(2,0)$ 
geometry (of $(+,+)$ signature) are all generated by the RG-flow of the Hermitian two-matrix $ABAB$-model, whose exact solution by Kazakov--Zinn-Justin \cite{KazakovABAB} predicts 
a critical value $1/4\pi$ for the common coupling constant of the operators\footnote{Mind the flipped sign convention. Also that the couplings of the operators $A^4$ and $B^4$ 
have to coincide.} $-\frac14\Tr(A^4+B^4)$ 
and $-\frac12\Tr(ABAB)$. In view of Eq. \eqref{almostexact},
we obtained for the coupling of $A^4$ and $B^4$ a strikingly close value}%
\[ \text{our prediction} = 1.00179 \times \text{exact solution}.\]
However, the prediction of the other coupling 
does not enjoy the same success.
\par 

Concerning the NCG-structure, we showed in Section \ref{sec:Compatibility}
that a truncation by operator-degree and by number of 
traces was consistent with the structure 
of the Spectral Action for fuzzy 2-dimensional geometries.
Due the complexity\footnote{Thinking of words 
in $\Cfree{2}$ as sequences of 0's and 1's, 
this algebra has enough ``memory space'' for any digitizable 
data.} of the free algebra $\Cfree{2}$, it is not obvious
that the RG-flow should respect this structure. 
For example, recall that in the Hermitian random matrix model
the operators $X^{m}\otimes   X^{l}$, with $m$ and $l$ odd, are generated by the RG-flow;
these are removed by hand (in the truncations used in Sec. \ref{sec:CrossCheck} and \cite{EichhornKoslowskiFRG}). 
In contrast, truncations for fuzzy geometries do not require to drop other operators
than triple traces and operators that exceed a maximum degree.   
Notwithstanding this high compatibility, as perspective, it remains to improve the precision of the 
present results. We identify possible error sources 
in the computation of the 
fixed points as well as improvements to our approach: 
\begin{itemize}\setlength\itemsep{.44em}
 \itemb Extending the exploration from the examined unit-hypercubes to a larger 
region and estimation of residues in order to look for fixed points that correspond
to Dirac operators (i.e. obey a relation between the 
coefficients similar to that of Table \ref{tab:QQops}-\ref{tab:SexticOps}). This would allow to compare with Monte-Carlo 
simulations for the true Dirac operator
of fuzzy geometries \cite{Glaser:2016epw}. 

\itemb The exact RG-flow should consider operators that are not 
pure traces of elements in the free algebra, but that are smeared 
with functions resulting from the IR-regulator. 

\itemb Addressing the solution removing the duality 
imposed here; otherwise we might miss important fixed points
for which the $A\leftrightarrow B$ symmetry is broken. 

\itemb Another improvement that might lead to accuracy
is to consider more terms ($h_3\neq 0$) in the $FP\inv$-expansion.
With 48 running operators, this analysis requires time.

\itemb The arbitrariness in the definition of the 
IR-regulator $R_N$ might affect the numerical results. 
For this paper, this regulator has been calibrated by
imposing on it to lead to a good approximation 
to the expected solution for Hermitian matrix models, but
the lack of ``uniqueness'' of $R_N$ is unsatisfying. 
\review{More constrictions on $R_N$ should be thoroughly investigated. 
An important guide\footnote{I thank an anonymous referee for this reference} 
in order to achieve the optimization of the matrix IR-regulator
is \cite{MindTheGap}. The adaptation of that idea from the bosonic QFT-case
to the matrix case might be might sound straightforward , but 
the different sort of propagators should be taken into account---this actually requires some care.)}
\end{itemize}
\noindent
Further related directions are:
\begin{itemize}\setlength\itemsep{.44em}
                                                                                                   
\itemb \review{The NC-differential operators that we employed here govern also the 
Schwinger-Dyson of entirely general multimatrix models \cite{Guionnet:2010qg,MingoSpeicher}}.
Based on it, one can continue the investigations 
using Topological Recursion \cite{EynardOrantin,Chekhov:2006vd}
to address a solution of the models treated here.  
For one-dimensional geometries \cite{KhalkhaliTR}
report progress in this topic, using different analytic methods. Also, multimatrix models are known to be related to free probability 
whose tools might be helpful for this task. This paper 
 puts a common language forward, at least. 

\itemb In order to obtain the present results, we studied geometries whose 
effective action was manifestly symmetric in both random matrices
and for which the theory space was reducible to non-redundant couplings. 
The search for fixed points in the absence of the dualities, which for instance for the $(1,1)$-geometry means 41 flowing operators in the present
 truncation, was postponed. However,
 the formalism is appropriate for these and higher-dimensional ones.

\itemb Adding matter fields to these models can be 
accomplished by random almost-commutative geometries. With the FRGE developed here, one has 
a tool to delve into fuzzy geometries coupled to simplified matter sectors,
e.g., Maxwell or Yang-Mills(--Higgs) theories, addressed in a companion paper \cite{MultimatrixYMH}.  This brings us even closer 
to the original motivation (Sec. \ref{sec:intro}).
\end{itemize}
\vspace{2ex}

   \subsection*{Acknowledgements}
I thank Astrid Eichhorn for providing further details on \cite{EichhornKoslowskiFRG}, which positively impacted the present work. I was benefited by a discussion with Ant\^onio Duarte Pereira, without which 
I would not have discovered an error in previous versions. 
For comments that were helpful to improve the exposition and 
be more specific, I thank an anonymous referee.
The author was supported by the TEAM programme of the Foundation    
for Polish Science co-financed by the European Union under the
European Regional Development Fund (POIR.04.04.00-00-5C55/17-00).
 \vspace{1cm}
 \appendix %
\section{Glossary, conventions, other notations}\label{sec:Glossary}
     \thispagestyle{empty}
 \fontsize{11.4}{14.3}\selectfont  
 \begin{tabularx}{.85\textwidth}{lX}
     $1_N$&  identity matrix in $\MN$, corresponding to the empty word of $\CnN$ \\
      $\cdot$&  sum of matrix products $X\cdot J=\sum_{i=1}^n X_i J^i$; also sometimes an ordinary matrix product \\
     $\fxpt $& tags a fixed point (RG-context) \\
     $\fxpt_k$&  $k$-th fixed point \\
      $\star$&  product on $\A_n$ \\
      $\times$& (when not applied to scalars) is obvious product on $\A_n$ $ (A\otimes B)\times (C\otimes D)=AC\times BD $ (see Prop. \ref{productrules}) \\
           $\otimes_\tau$& twisted tensor product \\
                $* $& adjoint of a matrix, i.e. mainly dagger in physics \\
     $\nabla^2$   &   Noncommutative Laplacian  \\
     $\nabla^2_\tau$   &  twisted NC-Laplacian (trace of the twisted NC-Hessian) \\ 
     $\nabla^2_i$   & abbreviates $\partial^{X_i}\circ \partial^{X_i}$    \\[4pt]
    &  \hspace{4.7cm}$\diamondsuit$
\\[2ex]
$A,B$&  random matrices in the 2-dimensional fuzzy geometries
 \\   $a,b,c,d$&  indices corresponding to matrix entries
 \\
   $\A_n$& $\Cn^{\otimes 2}\oplus \Cn^{\totimes 2}$
\\   $\ac_I$&  renormalized coupling constant associated with an operator $\mathcal O_I(A)$
\\   $\bc_I$&  coupling constant associated with $\mathcal O_I(B)$
 \\
$\cc_I$&  coupling constant associated with $\mathcal O_I(A,B)$
\\   $\dc_{I|I'}$&  coupling constant associated a disconnected operator 
 $\mathcal O_I(A,B)\otimes \mtc O_{I'}(A,B)$ on either matrix 
    \end{tabularx}

 \begin{tabularx}{.85\textwidth}{lX}
   $\Cn$&  free algebra in $n$ generators 
 \\   $\CnN$&  free algebra in $n$ generators in $\MN$
\\   $D$&  Dirac operator \\
   $\Day^j, \Day^{X_j}$&  cyclic derivative with respect to $X_j$  \\
   $\partial^j, \partial^{X_j}$& noncommutative derivative with respect to $X_j$
\\ $e_i$&  signs; $e_i=+1$ if $X_i$ is Hermitian, and $e_i=-1$ if it is anti-Hermitian 
\\   field&  a non trivial word in the free algebra,  or in  $\CnNtens{k}$  
\\   $\bar{\mathsf{g}}_I,\bar{\mathsf{g}}_{\!\!\balita}$& coupling constants (not yet renormalized)
 \\   $\mathsf{g}_I, \mathsf{g}_{\!\!\balita}$& renormalized coupling constants
  \\   $\tilde h_k(N)$& corresponds with $[\dot RP^{k+1}]$ of \cite{EichhornKoslowskiFRG} 
  only before an IR-regulator is specified (mind the shift)
\\ $\Hess_\sigma$ &  noncommutative Hessian with diagonal entries scaled by $\sigma=\diag(e_1,\cdots,e_n)$
\\ $\Hess_\sigma^\tau $ & twisted NC-Hessian 
  \\   $h_k(N)$& corresponds to $\tilde h_k(N) /N^2 =[\dot RP^{k+1}]/N^2$ (cf \cite{EichhornKoslowskiFRG})
\\   $I$ &  generic index corresponding to (allowed) elements of $\Cn$
 \\   $i,j$&  indices corresponding typically to $i,j=1,\ldots,n$
 \\   $J$ &  sources (QFT-context)
  \\   $\Lambda$&  is a large integer that serves as (globally in this paper, absolute) UV-cutoff that verifies $ \Lambda \geq N $ ($\Lambda$ corresponds to $N'$ in \cite{TowardsEK}) 
     \\  $\M_N$ &  the space of matrices parametrizing the space of Dirac operators, shorthand for $\M^{p,q}_N$
 \\   $n$&  the number of (random) matrices; number of generators of the free algebra. 
 Caveat: in general $n$ does not coincide with the dimension $p+q$ of the fuzzy geometry
 that originates the matrix model
\\   $N$&  is the ``energy scale'', here an integer that verifies $ \Lambda \geq N $. 
Often here, $N$ is assumed also large   
\\ operator & in QFT-slang for monomial 
in the effective/bare action. 
Thus in our setting, an operator is a NC-polynomial
\\ $\mathcal O_I(X)$ & operator in the random matrix (or matrices) $X$ 
\\   $p$&  number of $+$ signs in the signature of a fuzzy geometry
\\   $q$&  number of minus signs in the same context
\\  $q \pm p $&  dimension $\diagup$ KO-dimension of a fuzzy geometry \\
      $R_N$&  IR-regulator (cutoff function)
    \\  $\STr$&  supertrace (no reference to supersymmetry)
    \\  $\STrN$&  supertrace in the truncation scheme
    \\$t$&  $t$ is the logarithm of the scale, here $t=\log N$
\\ $\Tr, \TrN$&  traces on $M_\Lambda(\C)$ and $\MN$ respectively
\\ $\TrN^k Q $& bracket-saving notation for $[\TrN(Q)]^k$
\end{tabularx}

 \begin{tabularx}{.85\textwidth}{lX}  $\tau $ & permutation $\tau = (13)\in \Sym(4)$ or ``twist''
   \\   $\Delta S_N$&  mass-like (quadratic in the fields) IR-regulator term 
\\   $\mtc W[J],  \mtc W_N[J]$ &  free energy (logarithm of the partition function)
   \\   $X$&   $n$-tuple of matrices, $X=(X_1,\ldots,X_n)$ 
    \\   $X_i$& random matrix obtained by $X_i=\langleb \varphi_i \rangleb$;  
    the averaged field $\varphi_i$, 
    Sec. \ref{sec:FRGEderivation}
     \\   $Z$ &  wave function renormalization constant
     \\   $\mtc Z,  \mtc Z_N$ &  partition functions; the second one IR-regulated
      \end{tabularx}

      \section{Fixed points with two relevant directions}\label{sec:Criteriondiffers}
      
  \review{In the main text the fixed points corresponding 
to the unique solution developing a single relevant direction were 
reported. 
Next tables correspond to two relevant directions,
where uniqueness of the solution is lost. The selection criteria were
the same mentioned in Sections 7.6 and 7.7. 
\subsection{Geometry $(0,2)$ or $(-,-)$}
With two relevant directions five solutions are found. 
The critical exponents are in that case 
   \[\begin{array}{lccc}
  & \theta_1 & \theta_2 & \\
\fxpt_{1,2} & +1.0318 & +0.274913  &\text{\scriptsize(two-fold multiplicity)}\\
\fxpt_{3,4,5} &+0.301298 & +0.027688   &\text{\scriptsize(three-fold multiplicity)}
\end{array}\]
 and the fixed-point coupling constant values are given in Table \ref{tab:CritExp02}. }{}
\begin{table} [H]
{
\small
\[ 
\begin{array}{llllll}
 \text{\scriptsize \textsc{Coupling}}  & \fxpt_{1} & \fxpt_{2} & \fxpt_3 & \fxpt_{4} & \fxpt_{5}  \\
 \eta  & -0.3625 & -0.3625 & -0.3418 & -0.3418 & -0.3418 \\
 \ac_{4} & -0.07972 & -0.07972 & -0.05812 & -0.05812 & -0.1194 \\
 \ac_{6} & \phantom{+}0 & \phantom{+}0 & -5.897\times 10^{-6} & -5.897\times 10^{-6} & -0.00002453 \\
 \cc_{1111} & \phantom{+}0 & \phantom{+}0 & +0.06126 & -0.06126 & \phantom{+}0 \\
 \cc_{2121} & \phantom{+}0 & \phantom{+}0 & -0.00001863 & -0.00001863 & -3.053\times 10^{-9} \\
 \cc_{22} & -0.03986 & -0.03986 & -0.05969 & -0.05969 & +0.001568 \\
 \cc_{3111} & \phantom{+}0 & \phantom{+}0 & -0.00003726 & +0.00003726 & \phantom{+}0 \\
 \cc_{42} & \phantom{+}0 & \phantom{+}0 & -0.00003632 & -0.00003632 & +9.418\times 10^{-7} \\
 \dc_{2|02} & -0.01337 & 0.08013 & -0.01289 & -0.01289 & +0.001253 \\
 \dc_{2|04} & \phantom{+}0 & \phantom{+}0 & -0.00002598 & -0.00002598 & +3.476\times 10^{-6} \\
 \dc_{2|1111} & \phantom{+}0 & \phantom{+}0 & -0.00005407 & +0.00005407 & \phantom{+}0 \\
 \dc_{2|2} & -0.005156 & -0.03632 & -0.004297 & -0.004297 & -0.009011 \\
 \dc_{2|22} & \phantom{+}0 & \phantom{+}0 & -0.000106 & -0.000106 & +2.107\times 10^{-6} \\
 \dc_{2|4} & \phantom{+}0 & \phantom{+}0 & -0.00002598 & -0.00002598 & -0.0001095 \\
 \dc_{11|11} & -0.3782 & -0.004201 & -0.05657 & -5.008\times 10^{-6} & -5.008\times 10^{-6} \\
 \dc_{11|31} & \phantom{+}0 & \phantom{+}0 & -0.000226 & -1.14\times 10^{-8} & -1.14\times 10^{-8} \\
 \dc_{12|3} & \phantom{+}0 & \phantom{+}0 & -0.00005331 & +5.71\times 10^{-7} & +9.208\times 10^{-7} \\
 \dc_{21|21} & \phantom{+}0 & \phantom{+}0 & -0.00008135 & +2.855\times 10^{-7} & -1.231\times 10^{-8} \\
 \dc_{3|3} & \phantom{+}0 & \phantom{+}0 & -8.735\times 10^{-6} & +2.855\times 10^{-7} & -0.00003585 \\
\end{array}
\]}
\caption{\reviewinequation{ All the coupling constant values
at those fixed points of the RG-flow for the $(0,2)$-geometry characterized 
by two relevant directions; for solutions with single relevant direction, cf. eqs. (7.5) \label{tab:CritExp02}}}
\end{table}

\subsection{Geometry $(2,0)$ or $(+,+)$}
\review{
Solutions with two relevant directions are six:
\[\begin{array}{lccc} 
 & \theta_1 & \theta_2 & \\
\fxpt_1 & + 1.0318 & +0.274913 &\\
\fxpt_2 & +1.0318 & +0.274913 &\\
\fxpt_{3,4,5,6} &+ 0.3013 &+ 0.02779\phantom{0} &  \text{\scriptsize(four-fold multiplicity)}
 \end{array}
\]
corresponding to the couplings shown in Table \ref{tab:CritCoupl20}.}{}
\begin{table}[H]
{
\small
\[
\begin{array}{lllll}
   \text{\scriptsize \textsc{Coupling}}  & \fxpt_{1} & \fxpt_{2} & \fxpt_3 & |\fxpt_{4,5,6}| \\
 \eta  & -0.3625 & -0.3625 & -0.3418 & +0.3418 \\
 \ac_{4} & -0.07972 & -0.07972 & -0.1194 & +0.05812 \\
 \ac_{6} & \phantom{+}0& \phantom{+}0& +0.00002453 & +5.897\times 10^{-6} \\
 \cc_{1111} & \phantom{+}0& \phantom{+}0& \phantom{+}0& +0.06126 \\
 \cc_{2121} & \phantom{+}0& \phantom{+}0& +3.053\times 10^{-9} & +0.00001863 \\
 \cc_{22} & -0.03986 & -0.03986 & +0.001568 & +0.05969 \\
 \cc_{3111} & \phantom{+}0& \phantom{+}0& \phantom{+}0& +0.00003726 \\
 \cc_{42} & \phantom{+}0& \phantom{+}0& -9.418\times 10^{-7} & +0.00003632 \\
 \dc_{2|02} & +0.08013 & -0.01337 & +0.001253 & +0.01289 \\
 \dc_{2|04} & \phantom{+}0& \phantom{+}0& +3.476\times 10^{-6} & +0.00002598 \\
 \dc_{2|1111} & \phantom{+}0& \phantom{+}0& \phantom{+}0& +0.00005407 \\
 \dc_{2|2} & -0.03632 & -0.005156 & -0.009011 & +0.004297 \\
 \dc_{2|22} & \phantom{+}0& \phantom{+}0& -2.107\times 10^{-6} & +0.000106 \\
 \dc_{2|4} & \phantom{+}0& \phantom{+}0& +0.0001095 & +0.00002598 \\
 \dc_{12|3} & \phantom{+}0 & \phantom{+}0 & -9.208\times 10^{-7} & +5.71\times 10^{-7} \\
 \dc_{21|21} & \phantom{+}0 & \phantom{+}0 & +1.231\times 10^{-8} & +2.855\times 10^{-7} \\
 \dc_{3|3} & \phantom{+}0 & \phantom{+}0 & +0.00003585 & +2.855\times 10^{-7} \\
 \dc_{1|12} & -0.00985 & -0.00985 & +0.0003325 & +0.0003721 \\
 \dc_{1|14} & \phantom{+}0 & \phantom{+}0 & -1.925\times 10^{-7} & +3.972\times 10^{-7} \\
 \dc_{1|2111} & \phantom{+}0 & \phantom{+}0 & +1.275\times 10^{-9} & +4.158\times 10^{-7} \\
 \dc_{1|3} & -0.00985 & -0.00985 & -0.02262 & +0.0003721 \\
 \dc_{1|32} & \phantom{+}0 & \phantom{+}0 & -1.269\times 10^{-6} & +1.21\times 10^{-6} \\
 \dc_{1|5} & \phantom{+}0 & \phantom{+}0 & +0.00005281 & +3.972\times 10^{-7} \\
 \dc_{01|01} & -0.2543 & -0.2543 & -0.3901 & +0.009373 \\
 \dc_{11|11} & -0.004201 & -0.3782 & -5.008\times 10^{-6} & +5.008\times 10^{-6} \\
 \dc_{11|31} & \phantom{+}0 & \phantom{+}0 & +1.14\times 10^{-8} & +1.14\times 10^{-8} \\
\end{array}
\]}
\caption{\label{tab:CritCoupl20} All the coupling constant values
at those fixed points of the RG-flow for the $(2,0)$-geometry characterized 
by two relevant directions. The bars around the fixed points $3,4,6$
mean that we report only the couplings in absolute value 
(the three solutions differ from sign change in some coupling constants).}
\end{table}

\section{Noncommutative Hessians and Laplacians}\label{sec:NCLap}
\fontsize{11}{14.3}\selectfont %
This appendix lists first the (twisted)  NC-Laplacians 
\[\nabla^2_\tau \Tr_{\A_2}(\mathcal O_{I|I'}(A,B))=\Tr_2 \big\{ \Hess_\tau^\sigma \Tr_{\A_2} [\mathcal O_{I|I'}(A,B) ] \big\}\] (by Claim 2.1)
of the double-trace operators 
$\mathcal O_{I|I'}(A,B)$ (in the left column). Subsequently, the Hessians also 
used in the proof of Theorem 7.2 are provided. 
The operators that do not appear
in the next list can be obtained from these by the $A\leftrightarrow B, \ea \leftrightarrow \eb $ exchange (and in the 
case of the Hessian, with pertinent changes in the matrix structure).  \par 
\fontsize{10}{13.9}\selectfont  
\[
\begin{array}{c|c}
\mathcal O_{I|I'}(A,B) & \nabla^2_\tau \Tr_{\A_2}(\mathcal O_{I|I'}(A,B)) \\[9pt]
N\inv {1_N\otimes (A\cdot A)} & 2 \ea 1_N\otimes 1_N \\[9pt]
 A\otimes A & 2 \ea 1_N\totimes 1_N \\[9pt]
 N\inv {1_N\otimes (A\cdot A\cdot A\cdot A)} & 4 \ea (1_N\otimes (A\cdot A)+(A\cdot A)\otimes 1_N+A\otimes A) \\[9pt]
 N\inv {1_N\otimes (A\cdot A\cdot B\cdot B)} & \eb 1_N\otimes (A\cdot A)+\eb (A\cdot A)\otimes 1_N
 \\[2pt] &\,\, 
 +\ea
   1_N\otimes (B\cdot B)+\ea (B\cdot B)\otimes 1_N \\[9pt]
 N\inv {1_N\otimes (A\cdot B\cdot A\cdot B)} & 2 (\eb A\otimes A+\ea B\otimes B) \\[9pt]
 (A\cdot B)\otimes (A\cdot B) & 2 (\eb A\totimes A+\ea B\totimes B) \\[9pt]
 (A\cdot A)\otimes (B\cdot B) & 2\cdot 1_N\totimes 1_N (\eb \TrN(A\cdot A)+\ea \TrN(B\cdot B)) \\[9pt]
 A\otimes (A\cdot A\cdot A) & 3 \ea (\TrN(A) (A\totimes 1_N+1_N\totimes A)\\[2pt] &  +1_N\otimes (A\cdot A)+(A\cdot A)\otimes
   1_N) \\[9pt]
 A\otimes (A\cdot B\cdot B) & \eb \TrN(A) (A\totimes 1_N+1_N\totimes A)
 \\[2pt] & +\ea 1_N\otimes (B\cdot B)+\ea
   (B\cdot B)\otimes 1_N \\[9pt]
 (A\cdot A)\otimes (A\cdot A) & 4 \ea (1_N\totimes 1_N \TrN(A\cdot A)+2 A\otimes A) \\[9pt]
  (B\cdot B)\otimes (B\cdot B) & 4 \eb (1_N\totimes 1_N \TrN(B\cdot B)+2 B\otimes B) 
  \\[9pt]
   N\inv {1_N\otimes (A\cdot A\cdot A\cdot A\cdot A\cdot A)} & 6 \ea (1_N\otimes (A\cdot A\cdot A\cdot A)+(A\cdot A\cdot A\cdot
   A)\otimes 1_N \\[2pt] &\,\,+ A\otimes (A\cdot A\cdot A) +(A\cdot A\cdot A)\otimes A  \\[2pt]& +(A\cdot A)\otimes (A\cdot A) )\\[9pt]N\inv {1_N\otimes (A\cdot A\cdot A\cdot A\cdot B\cdot B)} & \ea 1_N\otimes (A\cdot A\cdot B\cdot B)+\ea 1_N\otimes
   (A\cdot B\cdot B\cdot A) \\[2pt] &\,\, +\ea 1_N\otimes (B\cdot B\cdot A\cdot A)+\ea (A\cdot A\cdot B\cdot B)\otimes 1_N\\[2pt] &\,\, +\ea
   (A\cdot B\cdot B\cdot A)\otimes 1_N  +\ea (B\cdot B\cdot A\cdot A)\otimes 1_N\\[2pt] &\,\,+\ea A\otimes (A\cdot B\cdot B)+\ea
   A\otimes (B\cdot B\cdot A) \\[2pt] &\,\, +\ea (A\cdot A)\otimes (B\cdot B)+\ea (B\cdot B)\otimes (A\cdot A)\\[2pt] &\,\,+\ea (B\cdot B\cdot A)\otimes A+\eb 1_N\otimes (A\cdot A\cdot A\cdot A)
   \\[2pt] &\,\, +\ea (A\cdot B\cdot B)\otimes
   A +\eb (A\cdot A\cdot A\cdot A)\otimes 1_N\\[9pt] 
   N\inv {1_N\otimes (A\cdot A\cdot A\cdot B\cdot A\cdot B)} & \ea 1_N\otimes (A\cdot B\cdot A\cdot B)+\ea 1_N\otimes
   (B\cdot A\cdot B\cdot A)\\[2pt] &\,\,+\ea (A\cdot B\cdot A\cdot B)\otimes 1_N+\ea (B\cdot A\cdot B\cdot A)\otimes 1_N\\[2pt] &\,\,+\ea
   A\otimes (B\cdot A\cdot B)+\ea B\otimes (A\cdot A\cdot B)\\[2pt] &\,\,+\ea B\otimes (B\cdot A\cdot A)+\ea (A\cdot B)\otimes (B\cdot
   A)\\[2pt] &\,\,+\ea (A\cdot A\cdot B)\otimes B+\ea (B\cdot A\cdot A)\otimes B\\[2pt] &\,\,+\eb A\otimes (A\cdot A\cdot A)+\eb (A\cdot A\cdot A)\otimes A 
   \\[2pt] &\,\, +\ea (B\cdot A)\otimes (A\cdot B) +\ea (B\cdot A\cdot
   B)\otimes  
%
%
    \end{array}\]

\[\begin{array}{c|c} 
 \mathcal O_{I|I'}(A,B) & \nabla_\tau^2 \Tr_{\A_2}(\mathcal O_{I|I'}(A,B)) \\[20pt]
 \\[20pt]
 N\inv {1_N\otimes (A\cdot A\cdot B\cdot A\cdot A\cdot B)} & 2 [\ea 1_N\otimes (B\cdot A\cdot A\cdot B)+\ea (B\cdot
   A\cdot A\cdot B)\otimes 1_N\\[2pt] &\,\,+\ea B\otimes (A\cdot B\cdot A)+\ea (A\cdot B)\otimes (A\cdot B)\\[2pt] &\,\,+\ea (A\cdot B\cdot A)\otimes B+\eb (A\cdot A)\otimes (A\cdot A))  
   \\[2pt] & \,\,+\ea (B\cdot A)\otimes
   (B\cdot A)]
 \\[12pt]
 A\otimes (A\cdot B\cdot B\cdot B\cdot B) & \eb \TrN(A) \big\{1_N\totimes (A\cdot B\cdot B)+1_N\totimes (B\cdot A\cdot
   B)\\[2pt] &\,\,+1_N\totimes (B\cdot B\cdot A)+(A\cdot B\cdot B)\totimes 1_N\\[2pt] &\,\,+(B\cdot B\cdot A)\totimes
   1_N+A\totimes (B\cdot B)+B\totimes (A\cdot B)\\[2pt] &\,\,+B\totimes (B\cdot A)+(A\cdot B)\totimes B+(B\cdot A)\totimes B 
   \\[2pt] &\,\, +(B\cdot A\cdot B)\totimes 1_N +(B\cdot B)\totimes A
   \big\} 
   \\[2pt] &\,\,+\ea
      1_N\otimes (B\cdot B\cdot B\cdot B)+\ea (B\cdot B\cdot B\cdot B)\otimes 1_N 
      \\[12pt] 
 A\otimes (A\cdot A\cdot A\cdot B\cdot B) & \TrN(A) [\ea 1_N\totimes (A\cdot B\cdot B)+\ea 1_N\totimes (B\cdot B\cdot
   A) \\[2pt] &\,\,+\ea (A\cdot B\cdot B)\totimes 1_N+\ea (B\cdot B\cdot A)\totimes 1_N\\[2pt] &\,\,+\ea A\totimes (B\cdot B)+\ea (B\cdot
   B)\totimes A
   \\[2pt] &\,\,   +\eb 1_N\totimes (A\cdot A\cdot A)+\eb (A\cdot A\cdot A)\totimes 1_N]\\[2pt] &\,\,+\ea [1_N\otimes (A\cdot
   A\cdot B\cdot B)+1_N\otimes (A\cdot B\cdot B\cdot A)\\[2pt] &\,\,+(A\cdot A\cdot B\cdot B)\otimes
   1_N+(A\cdot B\cdot B\cdot A)\otimes 1_N
   \\[2pt] & \,\, +1_N\otimes (B\cdot B\cdot A\cdot A) +(B\cdot B\cdot A\cdot A)\otimes 1_N]
   \\[12pt]
 A\otimes (A\cdot A\cdot B\cdot A\cdot B) & \TrN(A) [\ea 1_N\totimes (B\cdot A\cdot B)+\ea (B\cdot A\cdot B)\totimes
   1_N\\[2pt] &\,\,+\ea B\totimes (A\cdot B)+\ea B\totimes (B\cdot A)+\ea (A\cdot B)\totimes B\\[2pt] &\,\,+\ea (B\cdot A)\totimes B+\eb
   A\totimes (A\cdot A)+\eb (A\cdot A)\totimes A]\\[2pt] &\,\,+\ea [ 1_N\otimes (B\cdot A\cdot A\cdot
   B)+1_N\otimes (B\cdot A\cdot B\cdot A)\\[2pt] &\,\,+(B\cdot A\cdot A\cdot B)\otimes
   1_N+(B\cdot A\cdot B\cdot A)\otimes 1_N
   \\[2pt] &\,\,+ 1_N\otimes (A\cdot B\cdot A\cdot B)+ (A\cdot B\cdot A\cdot B)\otimes 1_N ]
     \\[20pt]A\otimes (A\cdot A\cdot A\cdot A\cdot A) & 5 \ea \big\{\TrN(A) [ 1_N\totimes (A\cdot A\cdot A)+(A\cdot A\cdot A)\totimes
   1_N\\[2pt] &\,\,+A\totimes (A\cdot A)+(A\cdot A)\totimes A] \\[2pt] &\,\, +1_N\otimes (A\cdot A\cdot A\cdot A)+(A\cdot A\cdot A\cdot A)\otimes 1_N
     \big\} \\[10pt] & 
    \end{array}\]

\[\begin{array}{c|c} 
 \mathcal O_{I|I'}(A,B) & \nabla_\tau^2  \Tr_{\A_2}(\mathcal O_{I|I'}(A,B)) \\[9pt]
 (A\cdot B)\otimes (A\cdot A\cdot A\cdot B) & \ea \TrN(A\cdot B) (1_N\totimes (A\cdot B)+1_N\totimes (B\cdot A)\\[2pt] &\,\,+(A\cdot
   B)\totimes 1_N+(B\cdot A)\totimes 1_N+A\totimes B+B\totimes A)\\[2pt] &\,\,+\ea [B\otimes (A\cdot A\cdot B)+B\otimes (A\cdot B\cdot A)\\[2pt] &\,\,+B\otimes
   (B\cdot A\cdot A)]+\ea [(A\cdot A\cdot B)\otimes B\\[2pt] &\,\, +(A\cdot B\cdot A)\otimes B+(B\cdot A\cdot A)\otimes B]
   \\[2pt] &\,\,+\eb A\otimes (A\cdot
   A\cdot A)+\eb (A\cdot A\cdot A)\otimes A    \\[8pt]
 (A\cdot B)\otimes (A\cdot B\cdot B\cdot B) & \eb \TrN(A\cdot B) [1_N\totimes (A\cdot B)+1_N\totimes (B\cdot A)\\[2pt] &\,\,+(A\cdot
   B)\totimes 1_N+(B\cdot A)\totimes 1_N+A\totimes B+B\totimes A]\\[2pt] &\,\,+\eb [A\otimes (A\cdot B\cdot B)+A\otimes (B\cdot A\cdot B)
   \\[2pt] &\,\,
   +A\otimes
   (B\cdot B\cdot A)]+\eb [(A\cdot B\cdot B)\otimes A
   \\[2pt] &\,\,
   +(B\cdot A\cdot B)\otimes A+(B\cdot B\cdot A)\otimes A]\\[2pt] &\,\,+\ea B\otimes (B\cdot
   B\cdot B)+\ea (B\cdot B\cdot B)\otimes B 
   \\[9pt]
 (A\cdot A)\otimes (A\cdot A\cdot B\cdot B) & \TrN(A\cdot A) [\eb 1_N\totimes (A\cdot A)+\eb (A\cdot A)\totimes
   1_N \\[2pt] &\,\, \hspace{1.5cm} +\ea 1_N\totimes (B\cdot B)+\ea (B\cdot B)\totimes 1_N]\\[2pt] &\,\, +2 \ea 1_N\otimes 1_N \TrN(A\cdot
   A\cdot B\cdot B)\\[2pt] &\,\,+2 \ea [A\otimes (A\cdot B\cdot B)+A\otimes (B\cdot B\cdot A)] \\[2pt] &\,\,+ 2 \ea [(A\cdot B\cdot B)\otimes A + (B\cdot B\cdot
   A)\otimes A]  
   \\[8pt] 
 (A\cdot A)\otimes (A\cdot B\cdot A\cdot B) & 2 \TrN(A\cdot A) (\eb A\totimes A+\ea B\totimes B)\\[2pt] &\,\,+2 \ea 1_N\otimes
   1_N \TrN(A\cdot B\cdot A\cdot B)\\[2pt] &\,\, + 4 \ea A\otimes (B\cdot A\cdot B)+4 \ea (B\cdot A\cdot B)\otimes A \\[8pt]
 (A\cdot A)\otimes (A\cdot A\cdot A\cdot A) & 2 \ea [2 \TrN(A\cdot A) (1_N\totimes (A\cdot A)+(A\cdot A)\totimes
   1_N\\[2pt] &\,\,+A\totimes A)  + 1_N\totimes 1_N \TrN(A\cdot A\cdot A\cdot A)\\[2pt] &\,\,+4 A\otimes (A\cdot A\cdot A)+4 (A\cdot A\cdot A)\otimes A] \\[8pt]
   (A\cdot A)\otimes (B\cdot B\cdot B\cdot B) & 4 \eb \TrN(A\cdot A) [1_N\totimes (B\cdot B)+(B\cdot B)t\otimes 1_N\\[2pt] &\,\,+B\totimes
   B] +2 \ea 1_N\otimes 1_N \TrN(B\cdot B\cdot B\cdot B)  \\[8pt]
 (A\cdot A\cdot A)\otimes (A\cdot A\cdot A) & 6 \ea \big\{(A\totimes 1_N+1_N\totimes A) \TrN(A\cdot A\cdot A)\\[2pt] &\,\,+3 (A\cdot A)\otimes
   (A\cdot A)\big\} \\[8pt]
 (A\cdot B\cdot B)\otimes (A\cdot A\cdot A) & (A\totimes 1_N+1_N\totimes A) [3 \ea \TrN(A\cdot B\cdot B)\\[2pt] &\,\,+\eb
   \TrN(A\cdot A\cdot A)]\\[2pt] &\,\, +  3 \ea (A\cdot A)\otimes (B\cdot B)+3 \ea (B\cdot B)\otimes (A\cdot A) \\[8pt]
 (A\cdot A\cdot B)\otimes (A\cdot A\cdot B) & 2 \big\{\ea (B\totimes 1_N+1_N\totimes B) \TrN(A\cdot A\cdot B)\\[2pt] &\,\, + \ea [(A\cdot
   B)\otimes (A\cdot B)+(A\cdot B)\otimes (B\cdot A)\\[2pt] &\,\, +(B\cdot A)\otimes (A\cdot B)+(B\cdot A)\otimes (B\cdot A)]
   \\[2pt] &\,\,+\eb (A\cdot A)\otimes (A\cdot
   A)\big\}  
\end{array}\]

 \fontsize{11}{14.0}\selectfont 
\begin{landscape}
  \afterpage{
  \newgeometry{left=3.3cm, right=3.3cm,top=1.5in, bottom=1cm}
 \begin{align*}
 \begin{array}{cc} \mathcal O_{I|I'} & \Hess_\sigma \Tr_{\A_2}(\mathcal O_{I|I'}) \quad \mbox{ (untwisted)} \\[4ex]
 \frac1N 1\otimes A\cdot A\cdot A\cdot A\cdot A\cdot A  & \left(
\begin{array}{cc}
 6 \ea ({1}\otimes  A^4+ A^4\otimes {1}+A\otimes A^3+A^2\otimes A^2+A^3\otimes A) & 0 \\
 0 & 0 \\
\end{array}
\right) \\ [4ex]
 (A\cdot B)\otimes (A\cdot B) 
 & \left(
\begin{array}{cc}
 2 \ea B\totimes B & 2 ({1}\otimes {1} \Tr (AB)+B\totimes A) \\
 2 ({1}\otimes {1} \Tr (AB)+A\totimes B) & 2 \eb A\totimes A \\
\end{array}
\right) \\ [4ex]
  (A\cdot B\cdot B)\otimes (A\cdot A\cdot A) 
  & \left(
\begin{array}{cc}
 3 \ea [(A\otimes {1}+{1}\otimes A] \Tr(A B B)+A^2\totimes B^2+B^2\totimes A^2) & \Tr A^3 (B\otimes {1}+{1}\otimes B)+3 A^2\totimes AB+3 A^2\totimes BA \\
 \Tr A^3 (B\otimes {1}+{1}\otimes B)+3 AB\totimes A^2+3 BA\totimes A^2 & \eb (A\otimes {1}+{1}\otimes A) \Tr A^3 \\
\end{array}
\right) \\ [4ex]
 A\otimes (A\cdot A\cdot A\cdot A\cdot A) 
  & \left(
\begin{array}{cc}
 5 \ea [\Tr A ({1}\otimes  A^3+ A^3\otimes {1}+A\otimes A^2+A^2\otimes A)+{1}\totimes  A^4+ A^4\totimes {1}] & 0 \\
 0 & 0 \\
\end{array}
\right) \\ [4ex]
 (A\cdot A\cdot A)\otimes (A\cdot A\cdot A) & \left(
\begin{array}{cc}
 6 \ea [(A\otimes {1}+{1}\otimes A) \Tr A^3+3 A^2\totimes A^2] & 0 \\
 0 & 0 \\
\end{array}
\right) \\ [4ex]
 (A\cdot A)\otimes (A\cdot A\cdot A\cdot A)  & \left(
\begin{array}{cc}
 2 \ea (2 \Tr A^2 ({1}\otimes A^2+A^2\otimes {1}+A\otimes A)+{1}\otimes {1} \Tr A^4+4 A\totimes  A^3+4  A^3\totimes A) & 0 \\
 0 & 0 \\
 \end{array}
 \right) \\ [4ex]
(A\cdot A)\otimes (B\cdot B\cdot B\cdot B) & \left(
\begin{array}{cc}
 2 \ea {1}\otimes {1} \Tr B^4 & 8 A\totimes B^3 \\
 8 B^3\totimes A & 4 \eb \Tr A^2 ({1}\otimes B^2+B^2\otimes {1}+B\otimes B) \\
\end{array}
\right) \\ [4ex]
 \end{array}
\end{align*}
  \newpage 
\vspace{-10cm}
\[
\hspace{-11cm}\includegraphics[width=.9\textheight]{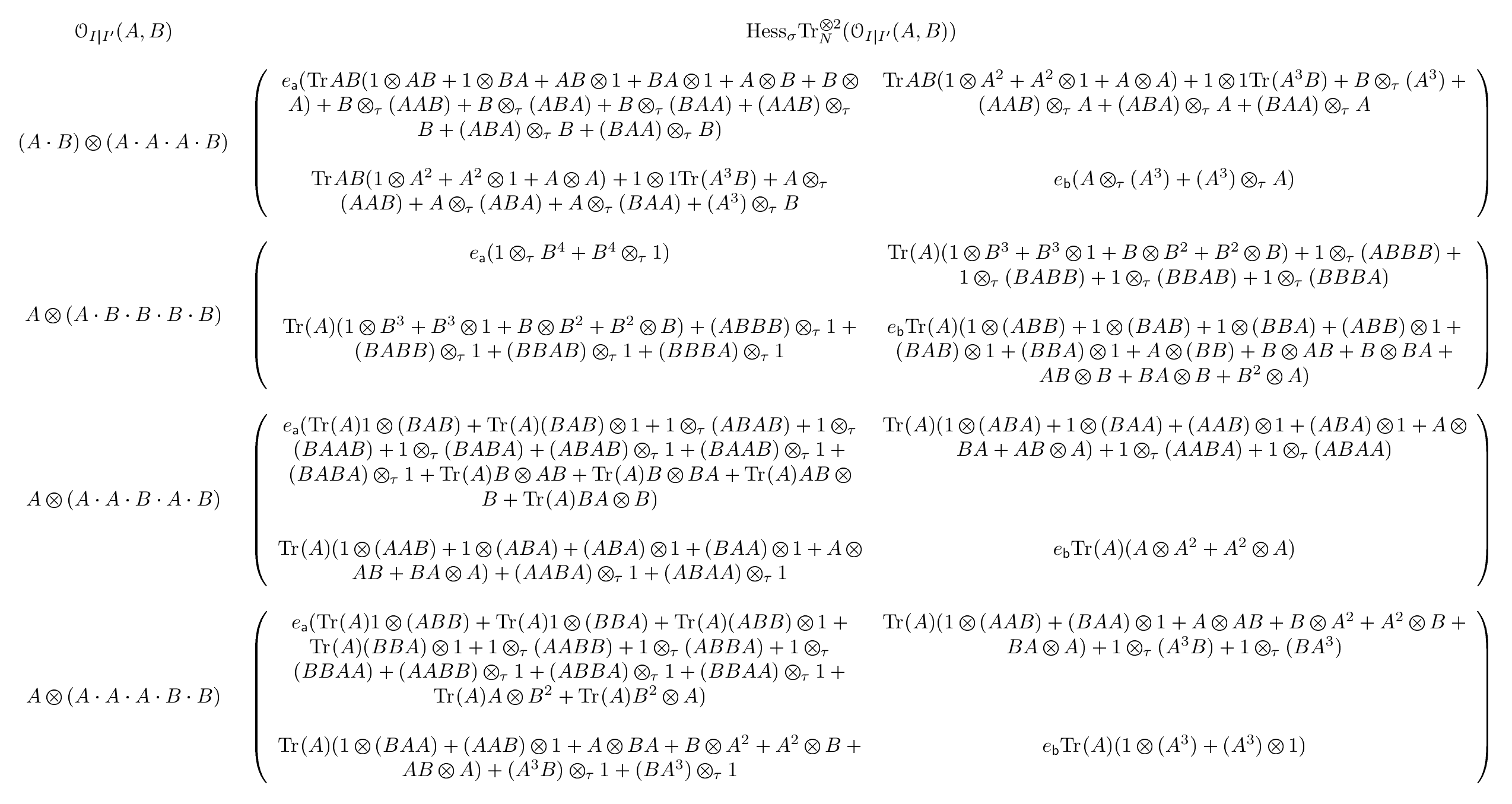}\]%
 \newpage%
 \centering
\[\hspace{-11cm}\raisebox{10cm}{\includegraphics[width=.9\textheight]{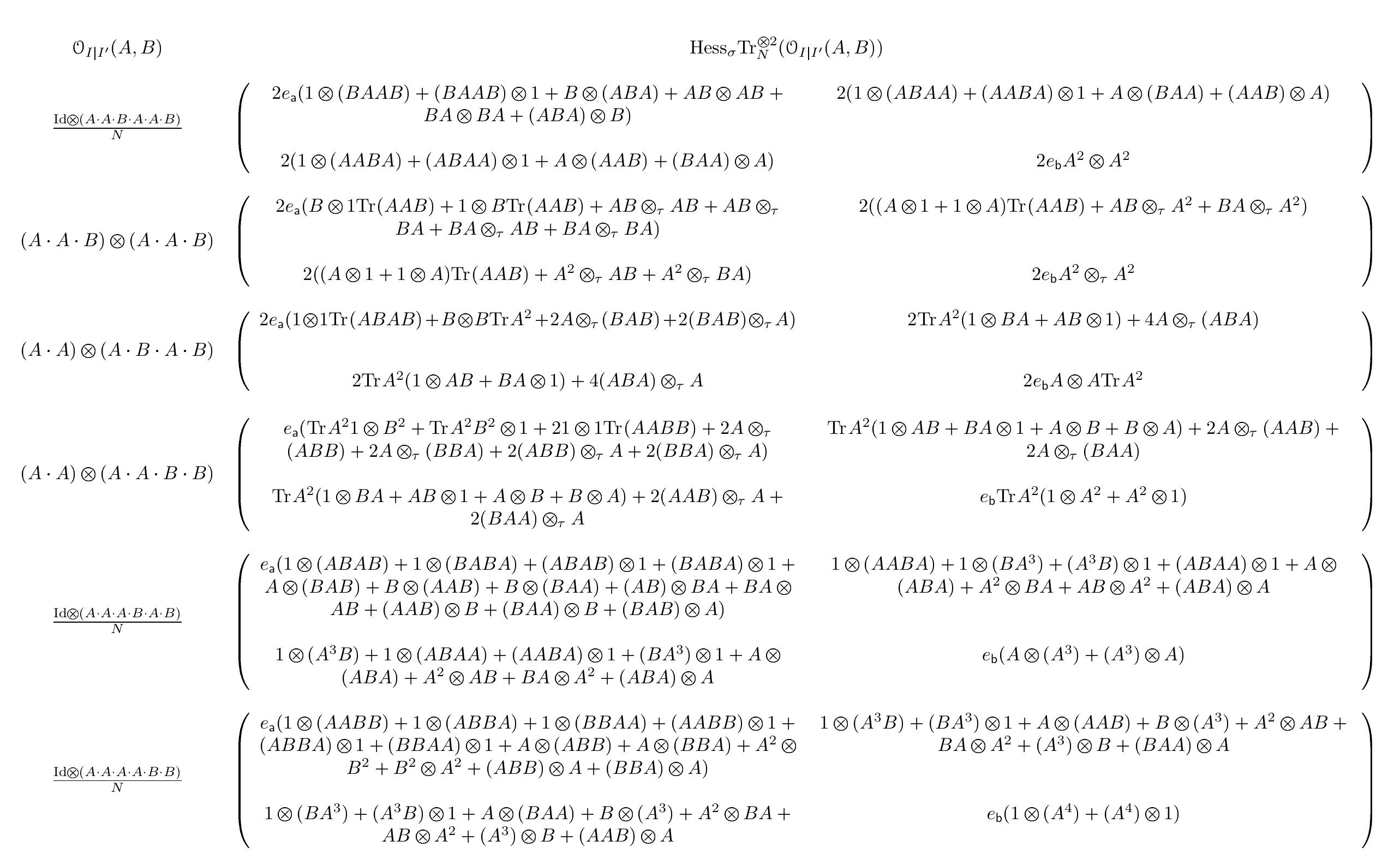}}\]

 \section{Beta-functions for disconnected couplings $\!\!\!\!\!\!\!\!\!\!\!\!\!\!\!\!\!\!\!\!\!\!\!\!\!\!\!\!\!\!\!\!\!\!\!\!\!\!\!\!\!\!\!\!\!\!\!\!\!\!\!\!\!\!\!\!\!\!\!\!\!\!\!\!\!\!\!\!\!\!\!\!\!\!\!\!\!\!\!\!\!\!\!\!\!\!\!\!\!\!\!\!\!\!\!\!\!\!\!\!\!\!\!\!\!\!\!\!$} \label{sec:DiscoFixedPt}
 \centering
  \begin{align*}
  2 h_{2} (\ac_{4} (12 \ac_{6}+5 \dc_{1|5})+18 \ac_{6} \dc_{1|3} \ea+\eb (\cc_{42} (\dc_{1|12}    -\cc_{1111} \ea)+\cc_{22} \ea (\cc_{3111}+\dc_{1|32})))+\dc_{1|5} (3 \eta +3)&=\beta( \dc_{1|5} ) \\[3ex]
   2 h_{2} (\bc_{4} (12 \bc_{6}+5 \dc_{01|05})+\eb (18 \bc_{6} \dc_{01|03}+\cc_{22} \ea (\cc_{1311} +\dc_{01|23}))+\cc_{24} \ea (\dc_{01|21}-\cc_{1111} \eb))+\dc_{01|05} (3 \eta +3)&=\beta( \dc_{01|05}
   ) \\[3ex]
 h_{2} \big(-\ac_{4} \cc_{1111} \ea \eb-\bc_{4} \cc_{1111} \ea \eb+2 \cc_{1111}^2-4 \cc_{1111} \cc_{22}-8 \cc_{1111} \dc_{11|11} \ea \eb+4 \cc_{22}^2+8 \cc_{22} \dc_{11|11} \ea \eb+4
   \dc_{11|11}^2\big)  & \\ +h_{1} (-2 \cc_{1212} \eb-\cc_{1311} \eb-2 \cc_{2121} \ea-\cc_{3111} \ea-2 \dc_{11|13} \eb-2 \dc_{11|31} \ea)+\dc_{11|11} (2 \eta +2)&=\beta( \dc_{11|11} ) \\[3ex]
 h_{2} \big(\ac_{4} \cc_{22}  +4 \ac_{4} \dc_{2|02}  +\bc_{4} \cc_{22}  +4 \bc_{4} \dc_{2|02}  +\cc_{22}^2 \ea \eb+12 \cc_{22} \dc_{02|02} 
   +12 \cc_{22} \dc_{2|2}  +24 \dc_{02|02} \dc_{2|02}  +24 \dc_{2|02} \dc_{2|2}  \big)  & \\ +h_{1} \big(-\frac{1}{2} \cc_{24} \ea -\frac{1}{2} \cc_{42} 
   \eb-\frac{1}{2} \dc_{02|22} \ea -2 \dc_{02|4}  \eb-2 \dc_{2|04} \ea -\frac{1}{2} \dc_{2|22}  \eb\big)+\dc_{2|02} (2 \eta +2)&=\beta( \dc_{2|02} ) \\[3ex]
 h_{2} \big(2 \ac_{4}^2 \ea+6 \ac_{4} \dc_{1|3} -2 \cc_{1111} \cc_{22} \ea +2 \cc_{22} \dc_{1|12}  \big)  & \\ +h_{1} \big(-6 \ac_{6} -\cc_{3111} \ea \eb-\dc_{12|3} \ea
   \eb-\dc_{1|32} \ea \eb-5 \dc_{1|5} -6 \dc_{3|3} \big)+\dc_{1|3} (2 \eta +2)&=\beta( \dc_{1|3} ) \\[3ex]
 h_{2} \big(2 \ac_{4} \cc_{22} \ea -2 \bc_{4} \cc_{1111} \ea +2 \bc_{4} \dc_{1|12} -4 \cc_{1111} \cc_{22} \eb-4 \cc_{1111} \dc_{1|12} \ea \eb+4 \cc_{22}^2 \eb+4 \cc_{22} \dc_{1|12}
   \ea \eb+6 \cc_{22} \dc_{1|3}  \big)  & \\ +h_{1} \big(-2 \cc_{1212} -2 \cc_{1311} -\cc_{3111} \ea \eb-2 \cc_{42} \ea \eb-2 \dc_{12|12} -3 \dc_{12|3} \ea \eb-3
   \dc_{1|14} -\dc_{1|2111} \ea \eb-2 \dc_{1|32} \ea \eb\big)+\dc_{1|12} (2 \eta +2)&=\beta( \dc_{1|12} ) \\[3ex]
    h_{2} \big(-2 \ac_{4} \cc_{1111}  \eb+2 \ac_{4} \dc_{01|21} +2 \bc_{4} \cc_{22}  \eb-4 \cc_{1111} \cc_{22} \ea-4 \cc_{1111} \dc_{01|21} \ea \eb+4 \cc_{22}^2 \ea+6 \cc_{22} \dc_{01|03}
    +4 \cc_{22} \dc_{01|21} \ea \eb\big)  & \\ +h_{1} \big(-\cc_{1311} \ea \eb-2 \cc_{2121} -2 \cc_{24} \ea \eb-2 \cc_{3111} -\dc_{01|1211} \ea \eb-2 \dc_{01|23} \ea
   \eb-3 \dc_{01|41} -3 \dc_{21|03} \ea \eb-2 \dc_{21|21} \big)+\dc_{01|21} (2 \eta +2)&=\beta( \dc_{01|21} )  \\[3ex]
 h_{2} \big(2 \bc_{4}^2 \eb+6 \bc_{4} \dc_{01|03} -2 \cc_{1111} \cc_{22}  \eb+2 \cc_{22} \dc_{01|21}  \big)  & \\ +h_{1} \big(-6 \bc_{6} -\cc_{1311} \ea \eb-5 \dc_{01|05}
   -\dc_{01|23} \ea \eb-6 \dc_{03|03} -\dc_{21|03} \ea \eb\big)+\dc_{01|03} (2 \eta +2)&=\beta( \dc_{01|03} ) \hspace{1cm}
   \end{align*}  \newpage
      
\begin{align*}
 2 h_{2} (\ac_{4} \cc_{24}+\bc_{4} (2 \cc_{1212}+2 \cc_{1311}+3 \dc_{1|14})-6 \bc_{6} \cc_{1111} \ea \eb+6 \bc_{6} \dc_{1|12} \eb-2 \cc_{1111} \cc_{24} \ea \eb-2 \cc_{1111} \dc_{1|14} \ea \eb&   \\+4 \cc_{1212}
   \cc_{22} \ea \eb+4 \cc_{1212} \dc_{1|12} \eb+4 \cc_{1311} \cc_{22} \ea \eb+4 \cc_{1311} \dc_{1|12} \eb+4 \cc_{22} \cc_{24} \ea \eb+\cc_{22} \cc_{3111} \ea \eb+2 \cc_{22} \cc_{42} \ea
   \eb&   \\+2 \cc_{22} \dc_{1|14} \ea \eb+\cc_{22} \dc_{1|2111} \ea \eb+2 \cc_{22} \dc_{1|32} \ea \eb+2 \cc_{24} \dc_{1|12} \eb+3 \cc_{24} \dc_{1|3} \ea)+\dc_{1|14} (3 \eta +3)&=\beta(
   \dc_{1|14} )  \\[3ex]
 2 h_{2} (5 \ac_{4} \cc_{42}+2 \ac_{4} \dc_{1|32}+6 \ac_{6} \cc_{22} \ea \eb+\bc_{4} (\cc_{3111}+\dc_{1|32})-2 \cc_{1111} \cc_{1212} \ea \eb-4 \cc_{1111} \cc_{2121} \ea \eb-2 \cc_{1111} \cc_{24} \ea \eb&   \\-2
   \cc_{1111} \dc_{1|2111} \ea \eb+2 \cc_{1212} \dc_{1|12} \eb+2 \cc_{1311} \cc_{22} \ea \eb+2 \cc_{22} \cc_{3111} \ea \eb+4 \cc_{22} \cc_{42} \ea \eb+2 \cc_{22} \dc_{1|14} \ea \eb+2 \cc_{22}
   \dc_{1|32} \ea \eb&   \\+5 \cc_{22} \dc_{1|5} \ea \eb+2 \cc_{24} \dc_{1|12} \eb+2 \cc_{3111} \dc_{1|12} \eb+2 \cc_{42} \dc_{1|12} \eb+9 \cc_{42} \dc_{1|3} \ea)+\dc_{1|32} (3 \eta +3)&=\beta(
   \dc_{1|32} ) \\[3ex]
 2 h_{2} (\ac_{4} (2 \cc_{2121}+3 \cc_{3111}+\dc_{1|2111})-2 \cc_{1111} \cc_{1311} \ea \eb-\cc_{1111} \cc_{24} \ea \eb-2 \cc_{1111} \cc_{3111} \ea \eb-2 \cc_{1111} \cc_{42} \ea \eb-2 \cc_{1111} \dc_{1|2111}
   \ea \eb&   \\-2 \cc_{1111} \dc_{1|32} \ea \eb+2 \cc_{1212} \cc_{22} \ea \eb+2 \cc_{1311} \dc_{1|12} \eb+\cc_{2121} (8 \cc_{22} \ea \eb+4 \dc_{1|12} \eb+6 \dc_{1|3} \ea)+4 \cc_{22} \cc_{3111}
   \ea \eb&   \\+\cc_{22} \dc_{1|14} \ea \eb+4 \cc_{22} \dc_{1|2111} \ea \eb+\cc_{24} \dc_{1|12} \eb+2 \cc_{3111} \dc_{1|12} \eb+6 \cc_{3111} \dc_{1|3} \ea)+\dc_{1|2111} (3 \eta +3)&=\beta(
   \dc_{1|2111} ) \\[3ex]
 2 h_{2} (\ac_{4} (2 \cc_{2121}+2 \cc_{3111}+3 \dc_{01|41})-6 \ac_{6} \cc_{1111} \ea \eb+6 \ac_{6} \dc_{01|21} \ea+\bc_{4} \cc_{42}-2 \cc_{1111} \cc_{42} \ea \eb-2 \cc_{1111} \dc_{01|41} \ea \eb+\cc_{1311}
   \cc_{22} \ea \eb&   \\+4 \cc_{2121} \cc_{22} \ea \eb+4 \cc_{2121} \dc_{01|21} \ea+2 \cc_{22} \cc_{24} \ea \eb+4 \cc_{22} \cc_{3111} \ea \eb+4 \cc_{22} \cc_{42} \ea \eb+\cc_{22} \dc_{01|1211}
   \ea \eb+2 \cc_{22} \dc_{01|23} \ea \e&   \\b+2 \cc_{22} \dc_{01|41} \ea \eb+4 \cc_{3111} \dc_{01|21} \ea+3 \cc_{42} \dc_{01|03} \eb+2 \cc_{42} \dc_{01|21} \ea)+\dc_{01|41} (3 \eta +3)&=\beta(
   \dc_{01|41} )  
    \\[3ex]
 h_{2} \bigg(\frac{\ac_{4}^2}{2}+8 \ac_{4} \dc_{2|2}+\frac{\cc_{1111}^2 }{6}+\frac{\cc_{22}^2 }{3}+\frac{8}{3} \cc_{22} \dc_{2|02} +\frac{8 \dc_{2|02}^2 }{3}+24 \dc_{2|2}^2\bigg)    +h_{1} \bigg(-\ac_{6}
   \ea-\frac{\cc_{2121} \eb}{3}-\frac{\dc_{2|22} \eb}{3}-\frac{4 \dc_{2|4} \ea}{3}\bigg)+\dc_{2|2} (2 \eta +2)&=\beta( \dc_{2|2} ) \\[1.5ex]
 h_{2} \bigg(\frac{\bc_{4}^2}{2}+8 \bc_{4} \dc_{02|02}+\frac{\cc_{1111}^2 }{6}+\frac{\cc_{22}^2 }{3}+\frac{8}{3} \cc_{22} \dc_{2|02} +24 \dc_{02|02}^2+\frac{8 \dc_{2|02}^2 }{3}\bigg)    +h_{1} \bigg(-\bc_{6}
   \eb-\frac{\cc_{1212} \ea}{3}-\frac{4 \dc_{02|04} \eb}{3}-\frac{\dc_{02|22} \ea}{3}\bigg)+\dc_{02|02} (2 \eta +2)&=\beta( \dc_{02|02} ) \end{align*} 
   \newpage

   \begin{align*} 2 h_{2} (\ac_{4} (\cc_{1311}+\dc_{01|23})+5 \bc_{4} \cc_{24}+2 \bc_{4} \dc_{01|23}+6 \bc_{6} \cc_{22} \ea \eb-4 \cc_{1111} \cc_{1212} \ea \eb-2 \cc_{1111} \cc_{2121} \ea \eb-2 \cc_{1111} \cc_{42} \ea   \eb& \\
   -2 \cc_{1111} \dc_{01|1211} \ea \eb+2 \cc_{1311} \cc_{22} \ea \eb+2 \cc_{1311} \dc_{01|21} \ea+2 \cc_{2121} \dc_{01|21} \ea+4 \cc_{22} \cc_{24} \ea \eb+2 \cc_{22} \cc_{3111} \ea \eb+5   \cc_{22} \dc_{01|05} \ea \eb+2 \cc_{22} \dc_{01|23} \ea \eb& \\
   +2 \cc_{22} \dc_{01|41} \ea \eb+9 \cc_{24} \dc_{01|03} \eb+2 \cc_{24} \dc_{01|21} \ea+2 \cc_{42} \dc_{01|21} \ea)+\dc_{01|23} (3 \eta   +3)&=\beta( \dc_{01|23} ) \\[3ex]
 2 h_{2} (\bc_{4} (2 \cc_{1212}+3 \cc_{1311}+\dc_{01|1211})-2 \cc_{1111} \cc_{1311} \ea \eb-2 \cc_{1111} \cc_{24} \ea \eb-2 \cc_{1111} \cc_{3111} \ea \eb-\cc_{1111} \cc_{42} \ea \eb-2 \cc_{1111}   \dc_{01|1211} \ea \eb& \\
 -2 \cc_{1111} \dc_{01|23} \ea \eb+\cc_{1212} (8 \cc_{22} \ea \eb+6 \dc_{01|03} \eb+4 \dc_{01|21} \ea)+4 \cc_{1311} \cc_{22} \ea \eb+6 \cc_{1311} \dc_{01|03} \eb+2   \cc_{1311} \dc_{01|21} \ea+2 \cc_{2121} \cc_{22} \ea \eb& \\
 +4 \cc_{22} \dc_{01|1211} \ea \eb+\cc_{22} \dc_{01|41} \ea \eb+2 \cc_{3111} \dc_{01|21} \ea+\cc_{42} \dc_{01|21} \ea)+\dc_{01|1211} (3 \eta   +3)&=\beta( \dc_{01|1211} )\hspace{1cm} \\[3ex]
 2 h_{2} (2 \ac_{4} (3 \cc_{2121}+2 \cc_{3111}+\dc_{11|31})-6 \ac_{6} \cc_{1111} \ea \eb+\bc_{4} \cc_{3111}-\cc_{1111} \cc_{1311} \ea \eb-4 \cc_{1111} \cc_{2121} \ea \eb-2 \cc_{1111} \cc_{24} \ea \eb& \\
 -6   \cc_{1111} \cc_{3111} \ea \eb-4 \cc_{1111} \cc_{42} \ea \eb-4 \cc_{1111} \dc_{11|31} \ea \eb+4 \cc_{1212} \cc_{22} \ea \eb+2 \cc_{1311} \cc_{22} \ea \eb+8 \cc_{2121} \cc_{22} \ea \eb+8   \cc_{2121} \dc_{11|11}& \\
 +8 \cc_{22} \cc_{3111} \ea \eb+8 \cc_{22} \cc_{42} \ea \eb+2 \cc_{22} \dc_{11|13} \ea \eb+4 \cc_{22} \dc_{11|31} \ea \eb+8 \cc_{3111} \dc_{11|11}+4 \cc_{42} \dc_{11|11}+2 \dc_{11|11}   \dc_{11|31})+\dc_{11|31} (3 \eta +3)&=\beta( \dc_{11|31} ) \\[3ex]
 2 h_{2} (\ac_{4} \cc_{1311}+2 \bc_{4} (3 \cc_{1212}+2 \cc_{1311}+\dc_{11|13})-6 \bc_{6} \cc_{1111} \ea \eb-4 \cc_{1111} \cc_{1212} \ea \eb-6 \cc_{1111} \cc_{1311} \ea \eb-4 \cc_{1111} \cc_{24} \ea   \eb& \\
 -\cc_{1111} \cc_{3111} \ea \eb-2 \cc_{1111} \cc_{42} \ea \eb-4 \cc_{1111} \dc_{11|13} \ea \eb+8 \cc_{1212} \cc_{22} \ea \eb+8 \cc_{1212} \dc_{11|11}+8 \cc_{1311} \cc_{22} \ea \eb+8   \cc_{1311} \dc_{11|11}& \\
 +4 \cc_{2121} \cc_{22} \ea \eb+8 \cc_{22} \cc_{24} \ea \eb+2 \cc_{22} \cc_{3111} \ea \eb+4 \cc_{22} \dc_{11|13} \ea \eb+2 \cc_{22} \dc_{11|31} \ea \eb+4 \cc_{24}   \dc_{11|11}+2 \dc_{11|11} \dc_{11|13})+\dc_{11|13} (3 \eta +3)&=\beta( \dc_{11|13} ) \\[3ex]
 2 h_{2} (\ac_{4} (2 \cc_{2121}+6 \cc_{42}+3 \dc_{2|22})+2 (3 \ac_{6} \cc_{22} \ea \eb-\cc_{1111} \cc_{1311} \ea \eb-\cc_{1111} \cc_{3111} \ea \eb-2 \cc_{1111} \dc_{2|1111} \ea \eb+\cc_{1212} \cc_{22}   \ea \eb& \\
 +4 \cc_{1212} \dc_{2|02} \ea \eb+2 \cc_{2121} \cc_{22} \ea \eb+12 \cc_{2121} \dc_{2|2}+2 \cc_{22} \cc_{24} \ea \eb+2 \cc_{22} \cc_{42} \ea \eb+\cc_{22} \dc_{02|22} \ea \eb+2   \cc_{22} \dc_{2|04} \ea \eb& \\
 +\cc_{22} \dc_{2|22} \ea \eb+2 \cc_{22} \dc_{2|4} \ea \eb+6 \cc_{24} \dc_{2|02} \ea \eb+18 \cc_{42} \dc_{2|2}+2 \dc_{02|22} \dc_{2|02} \ea \eb+6 \dc_{2|2}   \dc_{2|22})+\bc_{4} (2 \cc_{2121}+\dc_{2|22}))+\dc_{2|22} (3 \eta +3)&=\beta( \dc_{2|22} ) \\[3ex]
 2 h_{2} (3 \ac_{4} \cc_{3111}+2 \ac_{4} \dc_{2|1111}-2 \cc_{1111} \cc_{1212} \ea \eb-2 \cc_{1111} \cc_{42} \ea \eb-2 \cc_{1111} \dc_{2|22} \ea \eb+2 \cc_{1311} \cc_{22} \ea \eb+8 \cc_{1311} \dc_{2|02}   \ea \eb & \\
 +4 \cc_{22} \cc_{3111} \ea \eb+2 \cc_{22} \dc_{02|1111} \ea \eb+4 \cc_{22} \dc_{2|1111} \ea \eb+24 \cc_{3111} \dc_{2|2}+4 \dc_{02|1111} \dc_{2|02} \ea \eb+12 \dc_{2|1111}   \dc_{2|2})+\dc_{2|1111} (3 \eta +3)&=\beta( \dc_{2|1111} ) \end{align*}     
 \newpage  
 \phantom{a}
 \hspace{2cm}
  \begin{align*} 2 h_{2} (6 \ac_{4} (3 \ac_{6}+\dc_{2|4})+72 \ac_{6} \dc_{2|2}-\cc_{1111} \cc_{3111} \ea \eb+2 \cc_{2121} \cc_{22} \ea \eb+\cc_{22} \cc_{42} \ea \eb+2 \cc_{22} \dc_{02|4} \ea \eb+\cc_{22} \dc_{2|22}   \ea \eb& \\
 +4 \cc_{42} \dc_{2|02} \ea \eb+4 \dc_{02|4} \dc_{2|02} \ea \eb+12 \dc_{2|2} \dc_{2|4})+\dc_{2|4} (3 \eta +3)&=\beta( \dc_{2|4} ) \\[3ex]
 2 h_{2} (\ac_{4} (\cc_{24}+2 \dc_{2|04})+\bc_{4} (\cc_{24}+4 \dc_{2|04})+6 \bc_{6} \cc_{22} \ea \eb+24 \bc_{6} \dc_{2|02} \ea \eb+2 \cc_{22} \cc_{24} \ea \eb+\cc_{22} \cc_{42} \ea \eb+2 \cc_{22}   \dc_{02|04} \ea \eb+& \\
 \cc_{22} \dc_{2|22} \ea \eb+12 \cc_{24} \dc_{2|2}+4 \dc_{02|04} \dc_{2|02} \ea \eb+12 \dc_{2|04} \dc_{2|2})+\dc_{2|04} (3 \eta +3)&=\beta( \dc_{2|04} ) \\[3ex]
 2 h_{2} (\ac_{4} (2 \cc_{1212}+\dc_{02|22})+\bc_{4} (2 \cc_{1212}+6 \cc_{24}+3 \dc_{02|22})+2 (3 \bc_{6} \cc_{22} \ea \eb-\cc_{1111} \cc_{1311} \ea \eb-\cc_{1111} \cc_{3111} \ea \eb-2 \cc_{1111} \dc_{02|1111}   \ea \eb& \\
 +2 \cc_{1212} \cc_{22} \ea \eb+12 \cc_{1212} \dc_{02|02}+\cc_{2121} \cc_{22} \ea \eb+4 \cc_{2121} \dc_{2|02} \ea \eb+2 \cc_{22} \cc_{24} \ea \eb+2 \cc_{22} \cc_{42} \ea \eb+2   \cc_{22} \dc_{02|04} \ea \eb+\cc_{22} \dc_{02|22} \ea \eb& \\
 +2 \cc_{22} \dc_{02|4} \ea \eb+\cc_{22} \dc_{2|22} \ea \eb+18 \cc_{24} \dc_{02|02}+6 \cc_{42} \dc_{2|02} \ea \eb+6 \dc_{02|02}   \dc_{02|22}+2 \dc_{2|02} \dc_{2|22} \ea \eb))+\dc_{02|22} (3 \eta +3)&=\beta( \dc_{02|22} ) \\[3ex]
 2 h_{2} (3 \bc_{4} \cc_{1311}+2 \bc_{4} \dc_{02|1111}-2 \cc_{1111} \cc_{2121} \ea \eb-2 \cc_{1111} \cc_{24} \ea \eb-2 \cc_{1111} \dc_{02|22} \ea \eb+4 \cc_{1311} \cc_{22} \ea \eb+24 \cc_{1311}   \dc_{02|02}& \\
 +2 \cc_{22} \cc_{3111} \ea \eb+4 \cc_{22} \dc_{02|1111} \ea \eb+2 \cc_{22} \dc_{2|1111} \ea \eb+8 \cc_{3111} \dc_{2|02} \ea \eb+12 \dc_{02|02} \dc_{02|1111}+4 \dc_{2|02} \dc_{2|1111} \ea   \eb)+\dc_{02|1111} (3 \eta +3)&=\beta( \dc_{02|1111} ) \\[3ex]
 2 h_{2} (6 \bc_{4} (3 \bc_{6}+\dc_{02|04})+72 \bc_{6} \dc_{02|02}-\cc_{1111} \cc_{1311} \ea \eb+2 \cc_{1212} \cc_{22} \ea \eb+\cc_{22} \cc_{24} \ea \eb+\cc_{22} \dc_{02|22} \ea \eb+2 \cc_{22} \dc_{2|04}   \ea \eb& \\
 +4 \cc_{24} \dc_{2|02} \ea \eb+12 \dc_{02|02} \dc_{02|04}+4 \dc_{2|02} \dc_{2|04} \ea \eb)+\dc_{02|04} (3 \eta +3)&=\beta( \dc_{02|04} ) \hspace{1cm} \\[3ex]
 2 h_{2} (\ac_{4} (\cc_{42}+4 \dc_{02|4})+6 \ac_{6} \cc_{22} \ea \eb+24 \ac_{6} \dc_{2|02} \ea \eb+\bc_{4} (\cc_{42}+2 \dc_{02|4})+\cc_{22} \cc_{24} \ea \eb+2 \cc_{22} \cc_{42} \ea \eb+\cc_{22}   \dc_{02|22} \ea \eb& \\
 +2 \cc_{22} \dc_{2|4} \ea \eb+12 \cc_{42} \dc_{02|02}+12 \dc_{02|02} \dc_{02|4}+4 \dc_{2|02} \dc_{2|4} \ea \eb)+\dc_{02|4} (3 \eta +3)&=\beta( \dc_{02|4} )
\end{align*} 

\vspace{1cm} 
\color{gray}

\hspace{4cm}(The blank space has been sacrificed in order to get \\
\hspace{4cm}beta-functions for dual couplings on the same page.)
\color{black}
\newpage 

 \begin{align*} 
 2 h_{2} (\ac_{4} (9 \ac_{6}+6 \dc_{3|3})+\ea \eb (\cc_{22} (\cc_{3111}+\dc_{12|3})-\cc_{1111} \cc_{2121}))+\dc_{3|3} (3 \eta +3)&=\beta( \dc_{3|3} ) \\[3ex]
 2 h_{2} (\ac_{4} (\cc_{3111}+3 (\cc_{42}+\dc_{12|3}))+\ea \eb (6 \ac_{6} \cc_{22}-\cc_{1111} \cc_{24}& \\-2 \cc_{1111} \cc_{42}-2 \cc_{1111} \dc_{12|3}+2 \cc_{1212} \cc_{22}+2 \cc_{1311} \cc_{22}+2 \cc_{22} \cc_{3111}& \\
 +4 \cc_{22}   \cc_{42}+2 \cc_{22} \dc_{12|12}+2 \cc_{22} \dc_{12|3}+6 \cc_{22} \dc_{3|3})+\bc_{4} (\cc_{3111}+\dc_{12|3}))+\dc_{12|3} (3 \eta +3)&=\beta( \dc_{12|3} ) \\[3ex]
 2 h_{2} (\ac_{4} (4 \cc_{2121}+3 \cc_{3111}+2 \dc_{21|21})+\ea \eb (-3 \ac_{6} \cc_{1111}-\cc_{1111} (2 \cc_{1212}+\cc_{1311}& \\+2 (2 \cc_{2121}+\cc_{3111}+\cc_{42}+2 \dc_{21|21}))+\cc_{22} (\cc_{1311}+4 \cc_{2121}+2 \cc_{24}& \\
 +4   \cc_{3111}+4 \cc_{42}+3 \dc_{21|03}+4 \dc_{21|21}))+\bc_{4} \cc_{2121})+\dc_{21|21} (3 \eta +3)&=\beta( \dc_{21|21} ) \\[3ex]
 2 h_{2} (\bc_{4} (9 \bc_{6}+6 \dc_{03|03})+\ea \eb (\cc_{22} (\cc_{1311}+\dc_{21|03})-\cc_{1111} \cc_{1212}))+\dc_{03|03} (3 \eta +3)&=\beta( \dc_{03|03} ) \hspace{3cm} \\[3ex]
 2 h_{2} (\ac_{4} (\cc_{1311}+\dc_{21|03})+\bc_{4} (\cc_{1311}+3 (\cc_{24}+\dc_{21|03}))+\ea \eb (6 \bc_{6} \cc_{22}& \\-2 \cc_{1111} \cc_{24}-\cc_{1111} \cc_{42}-2 \cc_{1111} \dc_{21|03}+2 \cc_{1311} \cc_{22}+2 \cc_{2121} \cc_{22}& \\
 +4   \cc_{22} \cc_{24}+2 \cc_{22} \cc_{3111}+6 \cc_{22} \dc_{03|03}+2 \cc_{22} \dc_{21|03}+2 \cc_{22} \dc_{21|21}))+\dc_{21|03} (3 \eta +3)&=\beta( \dc_{21|03} ) \\[3ex]
 2 h_{2} (\ac_{4} \cc_{1212}+\bc_{4} (4 \cc_{1212}+3 \cc_{1311}+2 \dc_{12|12})+\ea \eb (-3 \bc_{6} \cc_{1111}-\cc_{1111} (4 \cc_{1212}& \\+2 \cc_{1311}+2 \cc_{2121}+2 \cc_{24}+\cc_{3111}+4 \dc_{12|12})+\cc_{22} (4 \cc_{1212}+4   \cc_{1311}& \\
+4 \cc_{24}+\cc_{3111}+2 \cc_{42}+4 \dc_{12|12}+3 \dc_{12|3})))+\dc_{12|12} (3 \eta +3)&=\beta( \dc_{12|12} ) \end{align*}
\vspace{1cm}

\color{gray}

(The blank space has been sacrificed in order to get \\
beta-functions for dual couplings on the same page.)
\color{black}
  \restoregeometry
 } 
 
\end{landscape}
 
\section{Intermediate steps in the proof of Theorem 7.2}\label{sec:IntermediateSteps} 
In order to compute the quantum fluctuations reported in Theorem 7.2,
one computed the $FP\inv$ expansion. Computing to first oder and taking its supertrace one gets
the expression below; from a selected operator, its the large-$N$ limit is taken 
according to the procedures and scalings described in the main text 
(Tables \ref{tab:QQops} and \ref{tab:SexticOps}).
\allowdisplaybreaks[4]
 \[\begin{array}{l} 
 \phantom+ 
  \TrN^2(A) \times (\ea \acb_{4}-\ea \ccb_{1111}+24 \ea \dcb_{2|2}+2 \eb \dcb_{11|11}+2 N \dcb_{1|12}+6 N
    \dcb_{1|3}) \\[8pt] + 
  \TrN^2(B) \times (2 \ea \dcb_{11|11}+\eb \bcb_{4}-\eb \ccb_{1111}+24 \eb \dcb_{02|02}+6 N \dcb_{01|03}+2 N
    \dcb_{01|21}) 
    \\[8pt]
   + \TrN(A\cdot A) \times \big(2 \ea \eb N \dcb_{01|21}+4 N^2 \ea  \dcb_{2|02} + 12 N^2 \ea  \dcb_{2|2} \\[8pt]\hspace{100pt} +  2 \ea N
    \acb_{4}+2 \ea N \ccb_{22}+6 N \dcb_{1|3}\big) 
    \\[8pt] 
  + \TrN(B\cdot B) \times \big(2 \ea \eb N \dcb_{1|12}+12 N^2 \eb  \dcb_{02|02}+4 N^2 \eb  \dcb_{2|02} 
  \\[8pt]\hspace{100pt} +  2 \eb N
    \bcb_{4}+2 \eb N \ccb_{22}+6 N \dcb_{01|03}\big) 
    \\[8pt]  
    + 
  \TrN(A\cdot A\cdot A\cdot A) \times \big(2 N^2 \ea  \dcb_{2|4}+12 \ea N \acb_{6}+10 \ea N \dcb_{1|5}
 \\[8pt] \hspace{100pt} +2 N^2 \eb 
    \dcb_{02|4}+2 \eb N \ccb_{42}+2 \eb N \dcb_{01|41}\big) 
      \\    [8pt] 
   +  \TrN(B\cdot B\cdot B\cdot B)  \times\big(2 N^2 \ea  \dcb_{2|04}+2 \ea N \ccb_{24}+2 \ea N \dcb_{1|14}
 \\[8pt] \hspace{100pt} +2 N^2 \eb 
    \dcb_{02|04}+12 \eb N \bcb_{6}+10 \eb N \dcb_{01|05}\big) \\[8pt] + 
  \TrN(A\cdot A\cdot B\cdot B) \times \big(2 N^2 \ea  \dcb_{2|22}+2 \ea N \ccb_{42}+2 \ea N \dcb_{1|32}
 \\[8pt] \hspace{100pt} +2 N^2 \eb 
    \dcb_{02|22}+2 \eb N \ccb_{24}+2 \eb N \dcb_{01|23}\big) 
    \\[8pt] 
+  \TrN(A\cdot B\cdot A\cdot B)  \times\big(2 N^2 \ea  \dcb_{2|1111}+2 \ea N \ccb_{3111}+2 \ea N \dcb_{1|2111}
 \\[8pt] \hspace{100pt} +2 \eb
    N^2 \dcb_{02|1111}+2 \eb N \ccb_{1311}+2 \eb N \dcb_{01|1211}\big)
    \\[8pt] 
    + 
  \TrN(A\cdot A) \TrN(B\cdot B) \times (8 \ea N \dcb_{02|4}+2 \ea N \dcb_{2|22}+2 \ea \ccb_{42}+6 \ea
    \dcb_{12|3} \\[8pt]\hspace{100pt}+2 \eb N \dcb_{02|22}+8 \eb N \dcb_{2|04}+2 \eb \ccb_{24}+6 \eb \dcb_{21|03}) 
    \\[8pt] 
    + 
  \TrN(A) \TrN(A\cdot A\cdot A) \times (10 \ea N \dcb_{1|5}+12 \ea N \dcb_{3|3}+12 \ea \acb_{6}+16 \ea
    \dcb_{2|4} \\[8pt]\hspace{100pt}+2 \eb N \dcb_{12|3}+2 \eb N \dcb_{1|32}+2 \eb \ccb_{3111}+2 \eb \dcb_{11|31}) \\[8pt] + 
  \TrN(A) \TrN(A\cdot B\cdot B) \times (6 \ea N \dcb_{12|3}+2 \ea N \dcb_{1|32}+2 \ea \ccb_{42}+4 \ea
    \dcb_{2|22} \\[8pt]\hspace{100pt}+4 \eb N \dcb_{12|12}+2 \eb N \dcb_{1|14}+2 \eb \ccb_{1311}+2 \eb \dcb_{11|13}) 
     \\[8pt] 
    + 
  \TrN(B) \TrN(A\cdot A\cdot B) \times (2 \ea N \dcb_{01|41}+4 \ea N \dcb_{21|21}+2 \ea \ccb_{3111}+2 \ea
    \dcb_{11|31} \\[8pt]\hspace{100pt}+2 \eb N \dcb_{01|23}+6 \eb N \dcb_{21|03}+2 \eb \ccb_{24}+4 \eb \dcb_{02|22}) 
    \\[8pt] 
      + 
  \TrN(B) \TrN(B\cdot B\cdot B) \times (2 \ea N \dcb_{01|23}+2 \ea N \dcb_{21|03}+2 \ea \ccb_{1311}+2 \ea
    \dcb_{11|13} \\[8pt]\hspace{100pt}+10 \eb N \dcb_{01|05}+12 \eb N \dcb_{03|03}+12 \eb \bcb_{6}+16 \eb \dcb_{02|04}) 
       \end{array}\] 
   \vspace{.2cm}
   \[\begin{array}{l} 
    + 
  \TrN^2(A\cdot B) \times  (2 \ea N \dcb_{11|31}+2 \ea \ccb_{2121}+2 \ea \dcb_{21|21}\\[8pt]\hspace{100pt}+2 \eb N \dcb_{11|13}+2
    \eb \ccb_{1212}+2 \eb \dcb_{12|12}) \\[8pt] + 
    \TrN^2(A\cdot A) \times (8 \ea N \dcb_{2|4}+6 \ea \acb_{6}+18 \ea \dcb_{3|3}\\[8pt]\hspace{100pt}+2 \eb N \dcb_{2|22}+2 \eb
    \ccb_{2121}+2 \eb \dcb_{21|21}) \\[8pt] + 
  \TrN^2(B\cdot B) \times (2 \ea N \dcb_{02|22}+2 \ea \ccb_{1212}+2 \ea \dcb_{12|12}\\[8pt]\hspace{100pt}+8 \eb N \dcb_{02|04}+6
    \eb \bcb_{6}+18 \eb \dcb_{03|03}) \,.
 \end{array}\]
\allowdisplaybreaks[0]%

\newcommand{\etalchar}[1]{$^{#1}$}

\end{document}